\documentclass[12pt,a4paper]{article}
\usepackage{amssymb}

\usepackage{amsmath}
\usepackage{amsthm}
\usepackage{leftidx}
\usepackage{accents}

\usepackage{graphicx}

\newcommand{\re}{\mathrm{Re}}
\newcommand{\im}{\mathrm{Im}}

\newtheorem{lemma}{Lemma}
\newtheorem{theorem}{Theorem}
\newtheorem{definition}{Definition}

\newcommand{\ox}{\otimes}

\newcommand{\cA}{\mathcal A}

\newcommand{\cF}{\mathcal F}

\newcommand{\cH}{\mathcal H}
\newcommand{\cK}{\mathcal K}
\newcommand{\cL}{\mathcal L}
\newcommand{\cP}{\mathcal P}
\newcommand{\cQ}{\mathcal Q}
\newcommand{\cR}{\mathcal R}
\newcommand{\cS}{\mathcal S}
\newcommand{\cT}{\mathcal T}
\newcommand{\cU}{\mathcal U}

\newcommand{\inv}{\mathrm{SI}}
\newcommand{\HH}{\mathrm{HH}}
\newcommand{\trivial}{\mathrm{triv}}

\newcommand{\NB}{\mathit{NB}}

\newcommand{\Poi}{\mathit{Poi}}
\newcommand{\Nor}{\mathit{N}}

\newcommand{\bT}{\mathbb T}
\newcommand{\bN}{\mathbb N}
\newcommand{\bR}{\mathbb R}
\newcommand{\bC}{\mathbb C}
\newcommand{\bK}{\mathbb K}

\newcommand{\Mat}{\mathrm{Mat}}
\newcommand{\Ant}{\mathrm{Ant}}
\newcommand{\Sym}{\mathrm{Sym}}
\newcommand{\Sqz}{\mathrm{Sqz}}
\newcommand{\SO}{\mathrm{SO}}
\newcommand{\GL}{\mathrm{GL}}

\newcommand{\vol}{\mathrm{vol}}

\newcommand{\Null}{\mathrm{Null}}
\newcommand{\Tr}{\mathrm{Tr}}
\newcommand{\tr}{\mathrm{tr}}

\newcommand*{\KET}[1]{\left|#1\right\rangle}
\newcommand*{\BRA}[1]{\left\langle#1\right|}

\newcommand*{\Ket}[1]{\big|#1\big\rangle}
\newcommand*{\Bra}[1]{\big\langle#1\big|}
\newcommand*{\ket}[1]{|#1\rangle}
\newcommand*{\bra}[1]{\langle#1|}

\newcommand*{\dhat}[1]{\underaccent{\tilde}{\hat#1}}
\newcommand*{\dvec}[1]{\underaccent{\tilde}{\vec#1}}
\newcommand*{\dtilde}[1]{\underaccent{\tilde}{#1}}
\newcommand*{\dbar}[1]{\underaccent{\bar}{#1}}

\newcommand*{\donotdisplay}[1]{}

\title
{
	A squeezing invariant measurement 
	to test displacement 
	of 
	quantum Gaussian states 
}
\date{}
\author{Yoshiyuki Tsuda}

\begin{document}
\maketitle

\begin{abstract}
	We consider 
	a hypothesis testing problem 
	for displacement parameters of 
	$n$ independent copies of 
	an $m$-mode squeezed 
	quantum Gaussian state 
	whose mixture parameter is known. 
	Given $n\ge2$, 
	we construct 
	a quantum measurement 
	as a test 
	using 
	an observable 
	which is invariant by 
	$n$-fold tensor product of 
	any $m$-mode squeezing operator. 
	For a pure state case, 
	we calculate 
	the type II error probability 
	of this test. 
	We compare 
	this test 
	with 
	a Hotelling's $T$-squared test 
	which is based on heterodyne measurements. 
\end{abstract}

\section{Introduction}

In quantum hypothesis testing
\cite{helstrom}, 
it is important to test 
whether 
the displacement parameter 
of 
several independent copies of 
a quantum Gaussian state 
\cite{holevo}
is zero or not. 
Kumagai and Hayashi 
\cite{kumagai} 
have studied this problem 
in a situation that 
the mixture parameter is unknown 
and that 
the state is not squeezed. 
They constructed 
a Positive Operator Valued Measurement 
(POVM) 
as a test 
which is invariant by 
some unitary actions.
Using the invariance, 
they proved that 
their test is optimal 
in a minimax criterion.
On the other hand, 
if the states 
are squeezed 
by an unknown squeezing action, 
then, 
even if 
the mixture parameter 
is known, 
we have not obtained 
an optimal test. 
For example,
we will need 
such a test 
when 
we should check  
a displacement channel 
by setting a squeezed probe 
which is 
generated by 
a squeezing device which is 
not completely controlled. 

One of reasonable measurements 
for this problem 
might be 
the Hotelling's $T$-squared test 
\cite{hotelling}
using data from 
heterodyne measurements 
\cite{yuen}. 
We call 
this test 
the Heterodyne-Hotelling (HH) test. 
Let 
$\hat\rho_{\vec\theta,N}$ 
be an $m$-mode non-squeezed 
quantum Gaussian state 
parameterized by 
a displacement vector 
$\vec\theta\in\bC^m$ 
and by 
a mixture value 
$N\ge0$. 
Let 
$\hat S_\eta$ be 
an $m$-mode squeezing operator 
parameterized by 
a matrix 
$\eta$.
The $m$-mode squeezed 
quantum Gaussian state is 
defined by 
$\hat\rho_{\vec\theta,\eta,N}
=\hat S_\eta\hat\rho_{\vec\theta,N}\hat S_\eta^*$. 
If there are 
$n$ independent copies of 
$\hat\rho_{\vec\theta,\eta,N}$,
and 
if $mn$ independent 
heterodyne measurements 
are applied to 
$\hat\rho_{\vec\theta,\eta,N}^{\ox n}$, 
then 
we obtain 
$n$ independent random vectors 
obeying 
a common 
$2m$-dimensional 
normal distribution. 
If $n\ge2m+1$, 
then 
the sample covariance matrix 
$\bar\Sigma_n$ is invertible 
by probability one, 
and 
the Hotelling's $T$-squared statistic 
$T^2$
can be defined using $\bar\Sigma_n^{-1}$. 
Let 
$\mu=2m$, and 
let 
$\nu=n-\mu$.
If 
$\vec\theta$ 
is 
the zero vector 
$\vec0_m\in\bC^m$, 
then 
%$F_\HH=(2m)^{-1}(n-1)^{-1}(n-2m)T^2$ 
$F_\HH
=(n-1)^{-1}(\nu/\mu)T^2$ 
obeys 
$F_{\mu,\nu}$,
the central $F$ distribution 
with 
$\mu$ and $\nu$
degrees of freedom.
(See \cite{anderson}.) 
Choose a constant 
$\alpha$ 
as 
a level (of significance).
(See \cite{lehman_romano}.)
Define a critical point 
$c$ 
as 
a solution to the equation 
$\Pr\{F_\HH>c\mid\vec\theta=\vec0_m\}
=\alpha$. 
The HH test 
$T^\HH_\alpha$
of level $\alpha$
is 
a decision rule by which 
$\vec\theta=\vec0_m$ is accepted 
if $F_\HH\le c$ is observed. 
Even if 
$N$
is unknown, 
$T^\HH_\alpha$ can be defined. 
However, 
if $n\le2m$,
then 
$T^\HH_\alpha$ 
can not be defined 
for the sake of 
non-invertibility of 
$\bar\Sigma_n$. 
Moreover, 
even if $N$ is zero, 
$T^\HH_0$ 
is a trivial test 
in a sense that 
$c$
is infinity.
Furthermore, 
in a minimax criterion, 
$T^\HH_\alpha$ 
is not optimal.

In this paper, 
assuming 
$N$ is known 
and 
$\eta$ is unknown, 
we propose a new test. 
Since the new test is 
invariant by 
the action of 
$\hat S_\eta^{\ox n}$, 
it is said to be 
Squeezing Invariant (SI).
The SI test 
of level $\alpha$
is denoted by 
$T^\inv_\alpha$. 
If 
$N$ is unknown, 
then 
$T^\inv_\alpha$ is not defined. 
However, 
for the following three reasons, 
$T^\inv_\alpha$ 
is 
superior to 
$T^\HH_\alpha$. 
First, 
$n\ge2$ is enough 
to define 
$T^\inv_\alpha$. 
Second, 
if $N=0$, 
then 
$T^\inv_0$ 
is not a trivial test. 
Third, 
if $N=0$,
then, 
in a minimax criterion, 
$T^\inv_\alpha$ 
dominates 
$T^\HH_\alpha$.

In Sec. \ref{sec_setup},
we will setup the problem 
by defining 
words and symbols.
In Sec. \ref{sec_construction},
we will construct 
$T^\HH_\alpha$ and $T^\inv_\alpha$.
In Sec. \ref{sec_results},
we will give six theorems 
and a numerical comparison.
In Sec. \ref{sec_proofs},
we will give proofs of the theorems.

\section{Setups}\label{sec_setup}

We define words  and symbols.

\subsection{What is quantum hypothesis testing?}

Let $\cH$
be a Hilbert space. 
For $f\in\cH$,
let $f^*$
be the dual vector,
and let 
$\|f\|
=\sqrt{f^*f}$
be the norm.
Let 
$\cL(\cH)$
be the set of 
linear operators 
on $\cH$. 
Let 
$\hat I_\cH\in\cL(\cH)$
be the identity.
Let 
$X^*$ be the adjoint of 
$X\in\cL(\cH)$.
Let 
$\cU(\cH)\subset\cL(\cH)$
be the set of unitary operators.
If $X\in\cL(\cH)$ is positive,
then we write $X\ge0$.
Let 
$\Tr[X]$ be the trace of 
$X\in\cL(\cH)$.
The set of density operators is given by 
$\cS(\cH)
=\{\hat\rho\in\cL(\cH)\mid\hat\rho\ge0,\ \Tr[\hat\rho]=1\}$.

Consider 
a quantum system described by 
$\cH$ whose state 
$\hat\rho\in\cS(\cH)$ 
is unknown. 
Let 
$\cS_0$ and $\cS_1$ 
be subsets of 
$\cS(\cH)$,
where 
the intersection 
$\cS_0\cap\cS_1$ 
is empty.
Assume that 
either 
$\hat\rho\in\cS_0$ or 
$\hat\rho\in\cS_1$ is true.
When we need to accept 
the null hypothesis 
$H_0:\hat\rho\in\cS_0$ 
or the alternative hypothesis 
$H_1:\hat\rho\in\cS_1$,
it is said that 
we test 
\begin{align}
	H_0:\hat\rho\in\cS_0
	\mbox{ versus }
	H_1:\hat\rho\in\cS_1
	.
	\label{eq_general_hypo}
\end{align}
Let 
$\hat\Pi_0$ and $\hat\Pi_1$
be positive operators satisfying 
$\hat\Pi_0+\hat\Pi_1=\hat I_\cH$.
A test for 
(\ref{eq_general_hypo})
is 
a two-valued POVM 
$\{\hat\Pi_0,\hat\Pi_1\}$
by which 
$H_k$ is accepted 
if $\hat\Pi_k$ is observed.

Let 
$T$
be a test with POVM 
$\{\hat\Pi_0,\hat\Pi_1\}$.
There are two types of mistakes 
caused by $T$.
The type I error is 
acceptance of $H_1$
while $H_0$ is true.
The type II error is 
acceptance of $H_0$ 
while $H_1$ is true.
Hence, 
the type I error probability is 
defined by 
$\alpha_{\hat\rho}[T]=\Tr[\hat\rho\hat\Pi_1]$
as a function of 
$\hat\rho\in\cS_0$, 
and
the type II error probability is 
defined by 
$\beta_{\hat\rho}[T]=\Tr[\hat\rho\hat\Pi_0]$
as a function of 
$\hat\rho\in\cS_1$.
A level (of significance) is 
an upper bound for $\alpha_{\hat\rho}[T]$. 
If $\sup_{\hat\rho\in\cS_0}\alpha_{\hat\rho}[T]\le\alpha$ 
holds, 
then 
$T$ is called 
a test of level $\alpha$.
If $\alpha$ is small,
and if $H_1$ is accepted,
then 
one may be confident that 
$H_1$ is really true 
because the risk of type I error is negligible.
This method was 
proposed by 
\cite{helstrom}
as a generalization of 
the classical statistical hypothesis testing theory, 
which is described in 
\cite{lehman_romano}.

For any $\alpha$
with $0\le\alpha\le1$,
there exists a test of level $\alpha$. 
Let 
$T^\trivial_\alpha$ 
be a test 
defined by 
$\hat\Pi_0=(1-\alpha)\hat I_\cH$.
For any $\hat\rho\in\cS_0$,
it holds that 
$\alpha_{\hat\rho}[T^\trivial_\alpha]=\alpha$.
Hence, 
$T^\trivial_\alpha$ 
is a test of level $\alpha$,
and is called 
a trivial test 
of level $\alpha$. 

If $T_1$ and $T_2$ are tests 
of a common level $\alpha$, 
then they are compared by 
the type II error probabilities. 
If 
\begin{align}
	\beta_{\hat\rho}[T_1]
	\le
	\beta_{\hat\rho}[T_2]
	\
	({^\forall}\hat\rho\in\cS_1)
	\
	\mbox{ and }
	\
	\beta_{\hat\rho}[T_1]
	<
	\beta_{\hat\rho}[T_2]
	\
	({^\exists}\hat\rho\in\cS_1)
	\label{eq_t1_vs_t2}
\end{align}
hold, 
then we conclude that 
$T_1$ dominates $T_2$. 
If a test $T$ 
dominates 
$T^\trivial_\alpha$,
then there exists 
$\hat\rho\in\cS_1$ such that 
$\beta_{\hat\rho}[T]<1-\alpha$.

In many cases, 
however, 
the condition 
(\ref{eq_t1_vs_t2})
is so strict that 
we can not 
complete 
the comparison. 
Hence, 
we use 
a minimax criterion 
instead of 
(\ref{eq_t1_vs_t2}).
Consider a case where 
$\hat\rho\in\cS_0\cup\cS_1$ 
is parameterized by 
$\theta\in\Theta$ and $\xi\in\Xi$ as 
$\hat\rho_{\theta,\xi}$. 
If there exists 
$\Theta_1\subset\Theta$ such that 
$\cS_1=\{\hat\rho_{\theta,\xi}\mid\theta\in\Theta_1,\ \xi\in\Xi\}$, 
then 
we are not interested in 
the true value of $\xi$. 
In such a case,
$\theta$ is called 
the parameter of interest,
and $\xi$ is called 
the parameter of nuisance.
(In this sense,
the displacement 
is the parameter of interest,
and the squeezing 
is the parameter of nuisance.)
The condition 
(\ref{eq_t1_vs_t2}) 
is modified as 
\begin{align}
	&
	\sup_{\xi\in\Xi}\beta_{\hat\rho_{\theta,\xi}}[T_1]
	\le
	\sup_{\xi\in\Xi}\beta_{\hat\rho_{\theta,\xi}}[T_2]
	\
	({^\forall\theta\in\Theta_1})
	\nonumber
	\\
	\mbox{ and }
	&
	\sup_{\xi\in\Xi}\beta_{\hat\rho_{\theta,\xi}}[T_1]
	<
	\sup_{\xi\in\Xi}\beta_{\hat\rho_{\theta,\xi}}[T_2]
	\
	({^\exists\theta\in\Theta_1})
	.
	\label{eq_minimax}
\end{align}
If $T_1$ and $T_2$ 
of a common level $\alpha$
satisfy 
the condition
(\ref{eq_minimax}), 
then 
we conclude that 
$T_1$ domintates $T_2$ 
in the minimax criterion. 
If a test $T$ of level $\alpha$
satisfies 
$\sup_{\xi\in\Xi}\beta_{\hat\rho_{\theta,\xi}}[T]=1-\alpha$
$({^\forall}\theta\in\Theta_1)$,
then, 
in the minimax criterion, 
$T$ is no better than 
$T^\trivial_\alpha$.

Kumagai and Hayashi 
\cite{kumagai} 
studied 
a theory of 
the minimax criterion 
in quantum hypothesis testing,
and they showed that
an optimality 
in the minimax criterion 
is concerned with 
unitary invariance.
Let $\theta\in\Theta$ 
be the parameter of interest,
and
let $\xi\in\Xi$ 
be the parameter of nuisance.
Let $U:\Xi\to\cU(\cH)$ be a map 
given as 
$\xi\mapsto U_\xi$. 
Assume that 
there exists 
$\xi_0\in\Xi$ 
such that 
$\hat\rho_{\theta,\xi}
=U_\xi\hat\rho_{\theta,\xi_0} U_\xi^*$ 
holds 
for any $\theta\in\Theta$ 
and 
for any $\xi\in\Xi$.
A test $T$ 
whose POVM 
$\{\hat\Pi_0,\hat\Pi_1\}$ 
satisfies 
$U_\xi^*\hat\Pi_0U_\xi=\hat\Pi_0$ 
$({^\forall}\xi\in\Xi)$
is said to be invariant by $U$. 
If $T$ is invariant by $U$,
then it holds that 
\begin{align}
	\beta_{\hat\rho_{\theta,\xi}}[T]
	=
	\Tr[\hat\rho_{\theta,\xi_0}\hat\Pi_0]
	\quad
	({^\forall}\theta\in\Theta_1,\
	{^\forall}\xi\in\Xi)
	\label{eq_necessary_inv}
	.
\end{align}
Kumagai and Hayashi \cite{kumagai}
proved that 
a test is invariant by such $U$
if the test is optimal 
in the minimax criterion.

We do not prove optimality 
of 
$T^\inv_\alpha$. 
However,
we prove that 
$T^\inv_\alpha$
is SI, 
and that, 
if the mixture 
$N$ is zero,
then 
$T^\inv_\alpha$
dominates 
$T^\trivial_\alpha$ 
in the minimax criterion.
Moreover,
we prove that 
$T^\HH_\alpha$
does not satisfy 
(\ref{eq_necessary_inv}), 
which is 
a necessary condition 
for $T$ to be SI.
Furthermore,
we prove that 
$T^\HH_\alpha$
is no better than 
$T^\trivial_\alpha$
in the minimax criterion.
Hence,
$T^\inv_\alpha$ 
dominates 
$T^\HH_\alpha$
in the minimax criterion.

\subsection{Notations of sets of matrices}

To parameterize 
multi-mode squeezing operators, 
we use several sets of matrices.
Let 
$\bN$
be the set of positive integers..
For any $m,n\in\bN$,
let 
$\Mat_\bK^{m,n}$
be the set of 
$m$-by-$n$ matrices 
whose entries belong to 
$\bK$, 
which will be 
$\bC$ 
or 
$\bR$.
For $X\in\Mat_\bC^{m,n}$,
the transpose 
is denoted by 
${^t}X\in\Mat_\bC^{n,m}$,
the entry-wise complex conjugate is 
denoted by 
$\bar X\in\Mat_\bC^{m,n}$,
and 
the adjoint 
${^t}\bar X$
is denoted by 
$X^*
\in\Mat_\bC^{n,m}$.
Let 
$\Mat_\bK^m$ 
be 
$\Mat_\bK^{m,m}$. 
The set of 
anti-hermitian matrices 
is defined by 
$\Ant_\bK^m
=\{A\in\Mat_\bK^m\mid A=-A^*\}$.
The set of symmetric matrices 
is defined by 
$\Sym_\bK^m
=\{S\in\Mat_\bK^m\mid S={^t}S\}$.
Define 
$\Sqz^m
\subset\Mat_\bC^{2m}$
by 
\[
	\Sqz^m
	=
	\left\{
	\begin{pmatrix}
	A&S\\\bar S&\bar A
	\end{pmatrix}
	\ \middle|\
	A\in\Ant_\bC^m
	,\
	S\in\Sym_\bC^m
	\right\}
	.
\]
For 
$\eta\in\Sqz^m$,
the upper-left submatrix $A\in\Ant_\bC^m$ 
is called the anti-hermitian part 
of $\eta$,
and 
the upper-right submatrix 
$S\in\Sym_\bC^m$
is called the symmetric part 
of $\eta$.

\subsection{Multi-mode squeezed quantum Gaussian states}

Let 
$\cH$
be 
$L^2(\bR)$,
the set of 
$\bC$-valued
square-integrable 
functions 
of a real coordinate variable 
$x\in\bR$.
The inner product 
of $f,g\in\cH$ 
is defined by 
$f^*g
=\int_\bR\overline{f(x)}g(x)dx$,
where 
$\bar z$ is the conjugate of
$z\in\bC$.
A single-mode 
electromagnetic field is 
described by 
$\cH$.
(See 
\cite{leonhardt} and \cite{milburn}.)
For $\theta\in\bC$, 
the coherent vector 
$\ket\theta\in\cH$ 
is defined by 
\begin{align*}
	\ket\theta(x)
	=
	\frac
	{e^{-|\theta|^2/2}}
	{\pi^{1/4}}
		e^{
			-x^2/2
			+\sqrt2\theta x
			-\theta^2/2
		}
	.
	%\label{eq_def_coherent}
\end{align*}
Let 
$\bra\theta$ 
be 
$\ket\theta^*$. 
For $\theta\in\bC$,
and for $N\ge0$,
the single-mode non-squeezed 
quantum Gaussian state 
$\hat\rho_{\theta,N}\in\cS(\cH)$
is defined by 
\[
	\hat\rho_{\theta,N}
	=
	\begin{cases}
	\ket\theta\bra\theta & \mbox{if }N=0,\\
	\displaystyle
	\frac{1}{\pi N}
	\iint_{\bR^2}
	e^{-|r+is-\theta|^2/N}
	\ket{r+is}\bra{r+is}
	drds
	&\mbox{if }N>0,
	\end{cases}
\]
where 
$i=\sqrt{-1}$.
(See \cite{holevo}.)
The state is pure 
if $N=0$.

For $m\in\bN$,
an $m$-mode system is described by 
$\cH^{\ox m}$. 
For 
$\vec\theta
={^t}(\theta_1,\theta_2,...,
\allowbreak
\theta_m)
\in\Mat_\bC^{m,1}$, 
and 
for $N\ge0$, 
the $m$-mode non-squeezed 
quantum Gaussian state 
$\hat\rho_{\vec\theta,N}
\in\cS(\cH^{\ox m})$
is defined by 
$\hat\rho_{\vec\theta,N}
=\hat\rho_{\theta_1,N}\ox\hat\rho_{\theta_2,N}\ox\cdots\ox\hat\rho_{\theta_m,N}$.

Define 
$\hat q\in\cL(\cH)$ 
by 
$\hat qf(x)=xf(x)$, 
and 
define 
$\hat p\in\cL(\cH)$ 
by 
$\hat pf(x)=-idf(x)/dx$, 
where 
$i=\sqrt{-1}$. 
They satisfy 
$\hat q=\hat q^*$,
$\hat p=\hat p^*$ 
and 
$[\hat q,\hat p]
=\hat q\hat p-\hat p\hat q
=i\hat I$, 
where 
$\hat I\in\cL(\cH)$ is the identity. 
The annihilation operator 
$\hat a\in\cL(\cH)$
is defined by 
$\hat a=(\hat q+i\hat p)/\sqrt2$. 
It holds that 
$[\hat a,\hat a^*]
=\hat I$, 
and that 
\begin{align}
	\hat a\ket\theta
	=
	\theta\ket\theta
	.
	\label{eq_hat_a_ket_theta}
\end{align}
For 
$i\in\{1,2,...,m\}$,
the $i$-th annihilation operator 
$\hat a_i
\in\cL(\cH^{\ox m})$
is defined by 
$\hat a_i
=\hat I^{\ox(i-1)}\ox\hat a\ox\hat I^{\ox(m-i)}$.
For 
$\eta\in\Sqz^m$,
let 
$A_{i,j}$ 
and 
$S_{i,j}$
be the $(i,j)$-th entries 
of the anti-hermitian part 
and 
of the symmetric part,
respectively,
and let 
\[
	\hat s_\eta
	=
	\sum_{i=1}^m
	\sum_{j=1}^m
	\Big(
	A_{i,j}\hat a_i^*\hat a_j
	+\frac12
	S_{i,j}\hat a_i^*\hat a_j^*
	-\frac12
	\bar S_{i,j}\hat a_i\hat a_j
	\Big)
	.
\]
An $m$-mode squeezing operator 
$\hat S_\eta
\in\cU(\cH^{\ox m})$
is defined by 
$\hat S_\eta
=\exp(\hat s_\eta)$.
The $m$-mode 
squeezed quantum Gaussian state 
$\hat\rho_{\vec\theta,\eta,N}
\in\cS(\cH^{\ox m})$
is defined by 
$\hat\rho_{\vec\theta,\eta,N}
=\hat S_\eta\hat\rho_{\vec\theta,N}\hat S_\eta^*$.

\subsection{Our hypothesis testing problem}

Suppose that 
a quantum state 
of the form 
$\hat\rho_{\vec\theta,\eta,N}^{\ox n}
\in\cS(\cH^{\ox mn})$
is given.
We call 
$m\in\bN$ the mode size,
$n\in\bN$ the sample size,
$\vec\theta\in\Mat_\bC^{m,1}$ the displacement parameter,
$\eta\in\Sqz^m$ the squeezing parameter
and
$N\ge0$ the mixture parameter.
We assume that 
$\vec\theta$
and 
$\eta$ are unknown, 
and that 
$N$ is known.
Our problem is 
to test 
\begin{align}
	H_0:\vec\theta=\vec0_m
	\mbox{ versus }
	H_1:\vec\theta\ne\vec0_m
	,
	\label{eq_hypo}
\end{align}
where 
$\vec0_m\in\Mat_\bC^{m,1}$
is the zero vector.

The squeezing parameter 
$\eta\in\Sqz^m$ 
is a nuisance parameter 
because 
$(\ref{eq_hypo})$ 
does not depend on 
$\eta$. 
Hence, 
the minimax criterion 
(\ref{eq_minimax}) 
is specified by 
$\Theta_1=
\{\vec\theta\in\Mat_\bC^{m,1}\mid\vec\theta\ne\vec0_m\}$ 
and by 
$\Xi=\Sqz^m$. 
For a subspace 
$\cK\subset\cH^{\ox mn}$,
and for 
%$\eta\in\Sqz^m$,
$\hat L\in\cL(\cH^{\ox mn})$,
let 
$\hat L\cK\subset\cH^{\ox mn}$
be 
$\{\hat Lf\mid f\in\cK\}$.
If 
$\hat S_\eta^{\ox n}\cK=\cK$
holds for any 
$\eta\in\Sqz^m$,
then 
$\cK$ is said to be Squeezing Invariant (SI).
If $\hat L\in\cL(\cH^{\ox mn})$
satisfies 
$\hat L=(\hat S_\eta^{\ox n})^*\hat L\hat S_\eta^{\ox n}$
$({^\forall}\eta\in\Sqz^m)$,
then 
$\hat L$ is said to be 
SI.
A test 
with POVM 
$\{\hat\Pi_0,\hat\Pi_1\}$ 
is said to be SI
if $\hat\Pi_0$ is SI.

\subsection{Definition of $a_{i,j}$}

For 
$i\in\{1,2,...,m\}$
and
for 
$j\in\{1,2,...,n\}$,
the $(i,j)$-th annihilation operator 
$\hat a_{i,j}
\in\cL(\cH^{\ox mn})$
is defined by 
\[
	\hat a_{i,j}
	=
	\hat I^{\ox(jm-m)}\ox
	\hat a_i
	\ox\hat I^{\ox(mn-jm)}
	=
	\hat I^{\ox(i-1+jm-m)}\ox
	\hat a
	\ox\hat I^{\ox(m-i+mn-jm)}
	.
\]

\section{Constructions of $T^\inv_\alpha$ and $T^\HH_\alpha$}\label{sec_construction}

\subsection{Construction of $T^\inv_\alpha$}

Assume that $n\ge2$.
We first 
construct 
an observable 
$\hat T_\inv\in\cL(\cH^{mn})$, 
which is positive and SI.
For 
$j,k\in\{1,2,...,n\}$,
let 
$
	\hat v_{j,k}
	=
	\sum_{i=1}^m
	(\hat a_{i,k}^*\hat a_{i,j}
	-\hat a_{i,j}^*\hat a_{i,k})
$. 
We will show,
in Theorem \ref{th_inv}, 
that 
$\hat v_{j,k}$ 
is SI.
Moreover,
by Lemma \ref{lem:2_realization},
$\hat v_{j,k}$
is unitarily equivalent to 
\begin{align}
	\hat d_{j,k}
	=
	\sqrt{-1}
	\sum_{i=1}^m
	(\hat a_{i,j}^*\hat a_{i,j}-\hat a_{i,k}^*\hat a_{i,k})
	.
	\label{eq_hat_d_j_k}
\end{align}
For $k\in\{1,2,...,n-1\}$,
let 
$
	\hat r_k
	=
	\arctan(\sqrt k)
	\hat v_{k,k+1}
$,
and let 
$\hat R_k
=\exp(\hat r_k)
\in\cU(\cH^{\ox mn})$.
Let 
$\hat R
=\hat R_{n-1}\hat R_{n-2}\cdots\hat R_1
\in\cU(\cH^{\ox mn})$.
Let 
$
	\hat T_\inv
	=
	\sum_{k=1}^{n-1}
	\hat R^*
	\hat v_{k,n}\hat v_{k,n}^*
	\hat R
$.
Because of 
$\hat v_{k,n}\hat v_{k,n}^*\ge0$,
we have 
$\hat T_\inv\ge0$.
Moreover,
since 
$\hat v_{j,k}$ 
is SI,
$\hat T_\inv$ 
is SI. 

Next, 
we construct the SI test 
of level $\alpha\in[0,1]$.
For $t\in\bR$,
define 
a Hilbert subspace 
$\mathcal K_t
\subset\cH^{\ox mn}$
by 
\[
	\mathcal K_t
	=
	\{
	f\in\cH^{\ox mn}
	\mid
	f^*\hat T_\inv f
	\le
	t
	\|f\|^2
	\}
	.
\]
Since 
$\hat T_\inv$ is SI,
$\cK_t$ is SI.
Since 
$\hat T_\inv\ge0$
holds, 
$s<0$ implies 
$\mathcal K_s=\{0\}$.
Let 
$\hat K_t
\in\cL(\cH^{\ox mn})$
be the projection on 
$\mathcal K_t$.
Since 
$\cK_t$
is SI,
$\hat K_t$ is SI.
Hence,
for any 
$\eta\in\Sqz^m$,
it holds that 
$
	\Tr[\hat\rho_{\vec\theta,\eta,N}^{\ox n}\hat K_t]
	=\Tr[\hat\rho_{\vec\theta,N}^{\ox n}\hat K_t]
$.

Assume that $N$
is known.
For 
$\alpha\in[0,1]$, 
let 
$s,t,w\in\bR$
be solutions to 
\[
	\begin{cases}
	1-\alpha
	=
	(1-w)\Tr[\hat\rho_{\vec0_m,N}^{\ox n}
	\hat K_s]
	+
	w\Tr[\hat\rho_{\vec0_m,N}^{\ox n}
	\hat K_t]
	,
	\\
	s<t\mbox{ and }0<w\le1.
	\end{cases}
\]
Let 
$
	\hat\Pi_0
	=
	(1-w)\hat K_s
	+w
	\hat K_t
$,
and let 
$\hat\Pi_1=\hat I^{\ox mn}-\hat\Pi_0$.
The SI test 
$T^\inv_\alpha$ 
is defined by 
the POVM 
$\{\hat\Pi_0,\hat\Pi_1\}$.

\donotdisplay
{
There are three reasons 
why $T_\inv$ is better than the HH test. 
First, 
the test $T_\inv$ 
can be constructed 
if $n\ge2$. 
Second, 
if $N=0$, 
then 
a test of level zero 
can be constructed by 
$\hat\Pi_0=\hat K_0$. 
Third, 
$\sup_{\eta\in\Sqz^m}\beta_{T_\inv}<1$ 
holds for any $\vec\theta\ne\vec0_m$. 
}

\subsection{Construction of $T^\HH_\alpha$}

Let 
$\cF$ be the set 
of Borel subsets of 
$\bR^2$. 
A single-mode heterodyne measurement 
is a POVM 
defined by 
\[
	\cF\ni M
	\mapsto 
	\frac{1}{\pi}
	\iint_M
	\ket{x+iy}\bra{x+iy}
	dxdy
	,
\]
where 
$i=\sqrt{-1}$. 
For $\eta\in\Sqz^m$,
let 
$A\in\Ant_\bC^m$ 
be the anti-hermitian part,
and let 
$S\in\Sym_\bC^m$
be the symmetric part.
Define 
$\vec\mu_{\vec\theta}\in\Mat_\bR^{2m,1}$
and
$G_\eta\in\Mat_\bR^{2m}$
by 
\begin{align}
	\vec\mu_{\vec\theta}
	=
	\begin{pmatrix}
	\re(\vec\theta)\\\im(\vec\theta)
	\end{pmatrix}
	\mbox{ and }
	G_\eta
	=
	\exp
	\begin{pmatrix}
	\re(A)+\re(S)&-\im(A)+\im(S)
	\\
	\im(A)+\im(S)&\re(A)-\re(S)
	\end{pmatrix}
	,
	\label{eq_g_eta}
\end{align}
respectively.
Define 
$\vec\mu_{\vec\theta,\eta}\in\Mat_\bR^{2m,1}$
and 
$\Sigma_{\eta,N}\in\Mat_\bR^{2m}$
by 
\begin{align}
	\vec\mu_{\vec\theta,\eta}=G_\eta\vec\mu_{\vec\theta}
	\mbox{ and }
	\Sigma_{\eta,N}
	=
	\frac{2N+1}{4}
	G_\eta({^t}G_\eta)
	+
	\frac14I_{2m}
	,
	\label{eq_vec_mu_vec_theta_eta_sigma_eta_n}
\end{align}
respectively.
By 
Lemma \ref{lem_multi_normal},
applying $mn$ independent 
single-mode 
heterodyne measurements 
to $\hat\rho_{\vec\theta,\eta,N}^{\ox n}$, 
we obtain 
$n$ independent 
$2m$-dimensional random vectors 
$\vec X_1,\vec X_2,...,\vec X_n$ 
according to 
a common $2m$-dimensional 
normal distribution 
whose mean vector is 
$\vec\mu_{\vec\theta,\eta}$
and 
whose covariance matrix is 
$\Sigma_{\eta,N}$;
say 
$\Nor_{2m}(\vec\mu_{\vec\theta,\eta},\Sigma_{\eta,N})$.
Let 
$\bar X_n$
be the sample mean vector 
$n^{-1}\sum_{j=1}^n\vec X_j$, 
and let 
$\bar\Sigma_n$
be the sample covariance matrix 
$(n-1)^{-1}\sum_{j=1}^n(\vec X_j-\bar X_n)({^t}\vec X_j-{^t}\bar X_n)$. 

Assume that $n\ge2m+1$. 
Then, 
$\bar\Sigma_n$ has the inverse 
$\bar\Sigma_n^{-1}$
by probability one. 
The Hotelling's $T$-squared statistic 
is defined by 
$T^2=n({^t}\bar X_n)\bar\Sigma_n^{-1}\bar X_n$. 
Let 
$\mu=2m$,
$\nu=n-2m$,
%$F_\HH=(2m)^{-1}(n-1)^{-1}(n-2m)T^2$, 
$F_\HH
=(n-1)^{-1}(\nu/\mu)T^2$
and 
$
	\lambda
	=
	n({^t}\vec\mu_{\vec\theta,\eta})\Sigma_{\eta,N}^{-1}\vec\mu_{\vec\theta,\eta}
$. 
Then,
$F_\HH$
obeys 
$F_{\mu,\nu;\lambda}$,
the non-central $F$ distribution 
with $\mu$ and $\nu$ degrees of freedom 
and with non-centrality 
$\lambda$. 
The probability density function of 
$F_{\mu,\nu;\lambda}$ is
\begin{align}
	p_\lambda(f)
	=
	\sum_{k=0}^\infty
	\frac
	{e^{-\lambda/2}
	(\lambda/2)^k
	/k!}
	{B\left(k+\mu/2,\nu/2\right)}
	\left(\frac{\mu f}{\mu f+\nu}\right)^{k+\mu/2}
	\left(\frac{\nu}{\mu f+\nu}\right)^{\nu/2}
	\frac1f
	,
	\label{eq_pdf_non_central_f}
\end{align}
where 
$B(x,y)$
is the beta function.
(See \cite{anderson}.)
If $\vec\theta=\vec0_m$, 
then $\lambda=0$ 
and thus 
$F_\HH$ obeys 
$F_{\mu,\nu}=F_{\mu,\nu;0}$
the central $F$ distribution. 
Define 
the critical point $c$ 
as a solution to the equation 
$\int_c^\infty p_0(f)df=\alpha$.
\donotdisplay
{
\[
	\Pr\{F_\HH>c\mid\lambda=0\}
	=
	\frac{1}{B(\mu/2,\nu/2)}
	\int_c^\infty
	\left(\frac{\mu x}{\nu+\mu x}\right)^{\mu/2}
	\left(\frac{\nu}{\nu+\mu x}\right)^{\nu/2}
	\frac1x
	dx
	=\alpha
	.
\]
}
The HH test $T^\HH_\alpha$ 
is a test 
by which 
$H_0$ is accepted 
if $F_\HH\le c$ is observed. 

For 
$\vec z
={^t}(z_1,z_2,...,z_m)
\in\Mat_\bC^{m,1}$,
define 
$\Ket{\vec z}\in\cH^{\ox m}$ 
by 
$\Ket{\vec z}
=\ket{z_1}\ox\ket{z_2}\ox\cdots\ox\ket{z_m}$.
For 
$Z
=(\vec z_1,\vec z_2,...,\vec z_n)
\in\Mat_\bC^{m,n}$,
define 
$\ket Z\in\cH^{\ox mn}$
by 
$\ket Z
=\ket{\vec z_1}\ox\Ket{\vec z_2}\ox\cdots\ox\Ket{\vec z_n}$.
Let $\bra Z$ be $\ket Z^*$.
The POVM 
$\{\hat\Pi_0,\hat\Pi_1\}$ 
for $T^\HH_\alpha$ 
is given by 
\[
	\hat\Pi_0
	=
	\frac{1}{\pi^{mn}}
	\left.
	\overbrace{\int\cdots\int}^{2mn}
	\right._{F_\HH\le c}
	\ket Z\bra Z
	\prod_{i=1}^m\prod_{j=1}^n
	dx_{i,j}dy_{i,j}
	,
\]
where 
$x_{i,j}$ and $y_{i,j}$
are $(i,j)$-th entries of 
$\re(Z)$ and $\im(Z)$,
respectively.

\donotdisplay
{
If $\vec\theta\ne\vec0_m$, 
then 
the probability distribution of 
$f$ 
depends on 
$\eta\in\Sqz^m$. 
Hence, 
$T_\HH$ is 
not SI. 
Even if $N$ is unknown, 
$T_\HH$ can be defined. 
However, 
if $n\le2m$, 
then 
$\bar\Sigma_n^{-1}$ does not exist 
and thus 
$T_\HH$ can not be defined. 
Moreover, 
if $\alpha=0$, 
then $t_0=\infty$ 
and thus 
$T_\HH$ is trivial, 
that is, 
$\hat\Pi_0=\hat I^{\ox mn}$. 
Furthermore, 
if $\vec\theta\ne\vec0_m$, 
then, 
for any $\epsilon>0$, 
there exists 
$\eta\in\Sqz^m$ 
such that 
$\Pr\{T^2>t_0\mid\vec\theta,\eta\}<\alpha+\epsilon$. 
That is to say, 
$\sup_{\eta\in\Sqz^m}\beta_{T_\HH}=1-\alpha$ 
for any $\vec\theta$ 
with $\vec\theta\ne\vec0_m$, 
and thus 
$T_\HH$ 
is one of the worst tests 
in the minimax criterion. 
}

\section{Theorems and a numerical comparison}\label{sec_results}

Let 
$m$
be the mode size,
$n$ be the sample size,
$\vec\theta$
be the displacement parameter,
$\eta$ be the squeezing parameter,
$N$ be the mixture parameter,
and 
$\alpha$ be the level.
The first theorem implies that 
$\hat v_{j,k}\in\cL(\cH^{\ox mn})$,
$\hat R=\hat R_{n-1}\hat R_{n-2}\cdots\hat R_1\in\cU(\cH^{\ox mn})$,
$\hat T_\inv\in\cL(\cH^{\ox mn})$,
$\cK_t\subset\cH^{\ox mn}$,
$\hat K_t\in\cL(\cH^{\ox mn})$
and 
$T^\inv_\alpha$
are all SI.

\begin{theorem}\label{th_inv}
For any $\eta\in\Sqz^m$, 
and for any $j,k\in\{1,2,...,n\}$,
it holds that 
$(\hat S_\eta^{\ox n})^*\hat v_{j,k}\hat S_\eta^{\ox n}=\hat v_{j,k}$.
\end{theorem}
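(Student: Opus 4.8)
The plan is to reduce the claimed invariance to the vanishing of a single commutator, and then to verify that vanishing by a direct computation with the canonical commutation relations. Since the one-copy generators on distinct tensor factors commute, we may write $\hat S_\eta^{\ox n}=\exp\bigl(\hat s_\eta^{(1)}+\hat s_\eta^{(2)}+\cdots+\hat s_\eta^{(n)}\bigr)$, where $\hat s_\eta^{(\ell)}$ denotes the operator $\hat s_\eta$ built from the annihilation operators $\hat a_{1,\ell},\dots,\hat a_{m,\ell}$ of the $\ell$-th copy. Because $\hat S_\eta^{\ox n}$ is unitary, the identity $(\hat S_\eta^{\ox n})^*\hat v_{j,k}\hat S_\eta^{\ox n}=\hat v_{j,k}$ follows once I show that $\hat v_{j,k}$ commutes with the generator, namely
\[
	\Bigl[\sum_{\ell=1}^n\hat s_\eta^{(\ell)},\ \hat v_{j,k}\Bigr]=0.
\]
The case $j=k$ is immediate, since $\hat v_{j,j}=0$, so I assume $j\ne k$. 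Every summand with $\ell\notin\{j,k\}$ acts on tensor factors disjoint from those appearing in $\hat v_{j,k}$ and therefore commutes with it; hence it suffices to prove $[\hat s_\eta^{(j)}+\hat s_\eta^{(k)},\hat v_{j,k}]=0$.

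Next I would record the elementary one-mode brackets. Using $[\hat a_i,\hat a_{i'}^*]=\delta_{i,i'}\hat I$ together with $S={^t}S$ and $A=-A^*$, a routine CCR computation gives, within a single copy,
\[
	[\hat s_\eta,\hat a_i]=-\sum_{q=1}^m\bigl(A_{i,q}\hat a_q+S_{i,q}\hat a_q^*\bigr),
	\qquad
	[\hat s_\eta,\hat a_i^*]=\sum_{q=1}^m\bigl(A_{q,i}\hat a_q^*-\bar S_{i,q}\hat a_q\bigr).
\]
Transplanting these to copies $j$ and $k$, by attaching the second index to every annihilation operator, yields the brackets of $\hat s_\eta^{(j)}$ with $\hat a_{i,j}$ and $\hat a_{i,j}^*$, and likewise for copy $k$; moreover $\hat s_\eta^{(j)}$ commutes with every copy-$k$ operator and conversely.

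Finally I would expand $[\hat s_\eta^{(j)},\hat v_{j,k}]$ and $[\hat s_\eta^{(k)},\hat v_{j,k}]$ by the Leibniz rule, remembering that operators carrying the index $j$ commute with those carrying $k$. Each commutator produces four families of quadratic monomials, of the types $\hat a_{\cdot,k}^*\hat a_{\cdot,j}$, $\hat a_{\cdot,j}^*\hat a_{\cdot,k}$, $\hat a_{\cdot,k}^*\hat a_{\cdot,j}^*$ and $\hat a_{\cdot,j}\hat a_{\cdot,k}$. Adding the two commutators, these cancel in pairs, so that in fact $[\hat s_\eta^{(j)},\hat v_{j,k}]=-[\hat s_\eta^{(k)},\hat v_{j,k}]$: the crucial point is that the \emph{same} matrices $A$ and $S$ govern both copies, whence, after relabelling the dummy indices and invoking the symmetry of $S$ (hence of $\bar S$), the contribution of $\hat s_\eta^{(j)}$ is exactly the negative of that of $\hat s_\eta^{(k)}$. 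Structurally, this reflects the fact that $\hat v_{j,k}$ generates a copy-index rotation acting identically on every mode $i$, while $\hat S_\eta^{\ox n}$ acts identically on every copy $\ell$, so the two transformations operate on ``orthogonal'' indices and commute; the heuristic behind the cancellation is the antisymmetry $\hat v_{j,k}=-\hat v_{k,j}$ against the symmetry of $\hat s_\eta^{(j)}+\hat s_\eta^{(k)}$ under $j\leftrightarrow k$. The main obstacle is precisely this bookkeeping; beyond it, I would only remark that all manipulations with exponentials and commutators are legitimate on the dense domain spanned by finite-particle (Hermite) states, which is invariant under each of the skew-adjoint operators $\hat s_\eta^{(\ell)}$ and $\hat v_{j,k}$.
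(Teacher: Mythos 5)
Your proof is correct, and it takes a genuinely more elementary route than the paper's. You reduce the claim to $[\hat s_\eta^{(j)}+\hat s_\eta^{(k)},\hat v_{j,k}]=0$ (copies $\ell\notin\{j,k\}$ manifestly commute with $\hat v_{j,k}$, and $\hat v_{j,j}=0$) and verify the vanishing by hand; your one-mode brackets match the paper's Lemma~\ref{lem_s_rep} (note $\bar A_{i,q}=-A_{q,i}$ by anti-hermiticity reconciles the two forms of $[\hat s_\eta,\hat a_i^*]$), and the pairwise cancellation of the four monomial families does go through after relabelling dummy indices --- the $\hat a^*\hat a$-type terms cancel directly, while the $\hat a^*\hat a^*$ and $\hat a\hat a$ families additionally need ${}^tS=S$ and commutativity across copies, exactly as you flag. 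The paper instead proves the stronger statement $[\dhat u_\eta,\hat v_B]=0$ for \emph{every} $B\in\Ant_\bR^n$ (Lemma~\ref{lem_inv}) and obtains Theorem~\ref{th_inv} as the special case $B=J_{j,k}$; its mechanism is structurally different: it assembles the mode operators into the operator matrix $\dtilde a_{m,n}$, shows the squeezing generator acts by left multiplication, $[\dtilde a_{m,n},\dhat u_\eta]=\eta\dtilde a_{m,n}$ (Lemma~\ref{lem:squeezing_representation}), the rotation generator by right multiplication, $[\dtilde a_{m,n},\hat v_B]=-\dtilde a_{m,n}B$ (Lemma~\ref{lem:r_right_rep}), and then kills the commutator with the cyclic-trace identity Eq.~(\ref{eq_tr}). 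Your observation that $[\hat s_\eta^{(j)},\hat v_{j,k}]=-[\hat s_\eta^{(k)},\hat v_{j,k}]$ is precisely the scalar shadow of that trace cancellation. What each buys: your computation is self-contained and needs no matrix-of-operators formalism, but is tied to the single pair $(j,k)$; the paper's calculus covers all $\hat v_B$ at once and is reused verbatim later (the same trick proves Lemmas~\ref{lem:v_a_b_c} and~\ref{lem_u_a_v_b_commute}, on which several subsequent theorems depend), so within the paper it amortizes better. Your closing domain remark is at the same level of rigor as the paper itself, which likewise passes without comment from $[\dhat u_\eta,\hat v_{j,k}]=0$ to $(\hat S_\eta^{\ox n})^*\hat v_{j,k}\hat S_\eta^{\ox n}=\hat v_{j,k}$.
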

\begin{proof}
See 
Sec. \ref{sec_proof_inv}.
\end{proof}

A vector $f\in\cH^{\ox mn}$
is regarded as a function 
$f:\Mat_\bR^{m,n}\to\bC$,
where 
the $(i,j)$-th entry 
$x_{i,j}$ 
of the matrix variable 
$X\in\Mat_\bR^{m,n}$
is specified by 
$\hat a_{i,j}f
=(x_{i,j}f+\partial f/\partial x_{i,j})/\sqrt2$.
If 
$f(X)=f(Xe^A)$ holds 
for any $A\in\Ant_\bR^n$,
then 
$f(X)$ is said to be 
Mode-wisely Rotationally Invariant
(MwRI).
If $m=1$ and if $n\ge2$,
then 
an MwRI function is a radial function 
of $n$ variables.

\begin{theorem}\label{th_radial}
The Hilbert subspace 
$\cK_0$ 
is the set of 
square-integrable MwRI functions.
\end{theorem}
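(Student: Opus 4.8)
The plan is to reduce the membership condition $f\in\cK_0$ to a joint kernel condition for the operators $\hat v_{k,n}$, then identify that kernel with the MwRI functions by recognizing the $\hat v_{j,k}$ as the infinitesimal generators of the right action of $\SO(n)$ on the column index of $X\in\Mat_\bR^{m,n}$.

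First I would unwind the definition of $\cK_0$. Since $\hat T_\inv=\sum_{k=1}^{n-1}\hat R^*\hat v_{k,n}\hat v_{k,n}^*\hat R\ge0$ is a sum of positive operators and $\hat R$ is unitary, the defining quadratic form is $f^*\hat T_\inv f=\sum_{k=1}^{n-1}\|\hat v_{k,n}^*\hat R f\|^2$. From the definition of $\hat v_{j,k}$ one checks immediately that $\hat v_{k,n}^*=-\hat v_{k,n}$, so this equals $\sum_{k=1}^{n-1}\|\hat v_{k,n}\hat R f\|^2$. Because $\hat T_\inv\ge0$, the inequality $f^*\hat T_\inv f\le0$ forces each summand to vanish; hence $f\in\cK_0$ if and only if $\hat v_{k,n}\hat R f=0$ for every $k\in\{1,\dots,n-1\}$, i.e. $\cK_0=\hat R^{-1}\big(\bigcap_{k=1}^{n-1}\ker\hat v_{k,n}\big)$.

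Next I would compute $\hat v_{j,k}$ in the coordinate picture supplied in the excerpt. Substituting $\hat a_{i,j}=(x_{i,j}+\partial/\partial x_{i,j})/\sqrt2$ and its adjoint into $\hat v_{j,k}=\sum_i(\hat a_{i,k}^*\hat a_{i,j}-\hat a_{i,j}^*\hat a_{i,k})$, the quadratic and second-derivative terms cancel for $j\ne k$, leaving the first-order vector field $\hat v_{j,k}=\sum_{i=1}^m\big(x_{i,k}\,\partial/\partial x_{i,j}-x_{i,j}\,\partial/\partial x_{i,k}\big)$. This is exactly the generator of the flow $X\mapsto Xe^{tA}$ with $A=E_{jk}-E_{kj}\in\Ant_\bR^n$, that is, the right rotation of the column pair $(j,k)$ applied uniformly across all $m$ rows. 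Consequently $\bigcap_{j,k}\ker\hat v_{j,k}$ is precisely the set of $f$ with $f(X)=f(Xe^A)$ for all $A\in\Ant_\bR^n$, which is the space $\mathcal M$ of MwRI functions. To see that the subfamily $\{\hat v_{k,n}\}_{k=1}^{n-1}$ already cuts out $\mathcal M$, I would use the bracket relation $[\hat v_{j,n},\hat v_{k,n}]=\pm\hat v_{j,k}$, reflecting that the elements $E_{kn}-E_{nk}$ ($k<n$) generate all of $\mathfrak{so}(n)$: any vector annihilated by all $\hat v_{k,n}$ is then annihilated by every $\hat v_{j,k}$, so $\bigcap_{k=1}^{n-1}\ker\hat v_{k,n}=\mathcal M$. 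Finally, each $\hat R_k=\exp(\arctan(\sqrt k)\,\hat v_{k,k+1})$ implements right multiplication by an element of $\SO(n)$, so $\hat R=\hat R_{n-1}\cdots\hat R_1$ fixes every MwRI function; hence $\hat R^{-1}(\mathcal M)=\mathcal M$, and combining with the first step gives $\cK_0=\mathcal M$.

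I expect the main obstacle to be analytic rather than algebraic: the $\hat v_{j,k}$ are unbounded skew-adjoint operators, so passing from ``annihilated by the generators $\hat v_{k,n}$'' to ``invariant under the whole connected group,'' and justifying the manipulation of the form $f^*\hat T_\inv f$ on all of its form domain, needs more than the formal bracket computation. The clean way to handle this is to note that the exponentials $\hat R_k$ assemble into a strongly continuous unitary representation $U$ of the compact group $\SO(n)$ by $(U(Q)f)(X)=f(XQ)$; the identity $\bigcap_{k}\ker\hat v_{k,n}=\mathcal M$ and the form-domain step can then be justified on a common core of smooth vectors, or via the decomposition of $U$ into finite-dimensional isotypic components, where the $U$-invariant subspace coincides with the joint kernel of the Lie-algebra generators.
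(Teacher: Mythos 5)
Your proposal is correct and takes essentially the same route as the paper: you reduce $\cK_0$ to the joint kernel of the $\hat v_{k,n}$, realize each $\hat v_{j,k}$ as the first-order vector field $\sum_{i=1}^m(x_{i,k}\partial/\partial x_{i,j}-x_{i,j}\partial/\partial x_{i,k})$ generating right column rotations (the paper's Lemma \ref{lem_v_j_k_f_x}), pass from the subfamily $\{\hat v_{k,n}\}$ to all of $\Ant_\bR^n$ via the bracket relation $[\hat v_A,\hat v_B]=\hat v_{[A,B]}$ (Lemmas \ref{lem:v_a_b_c} and \ref{lem_nullspace}), and finish by observing that $\hat R$ preserves the MwRI space, exactly as in the paper's $\hat R^*\Null(\hat T)=\Null(\hat T)$ step. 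Your closing remark about domain issues for the unbounded skew-adjoint generators is a legitimate point that the paper leaves implicit, but it does not alter the structure of the argument.
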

\begin{proof}
See 
Sec. \ref{sec_proof_radial}.
\end{proof}

Consider the case of $n=2$.
Then,
$\hat T_\inv
=\hat v_{1,2}\hat v_{1,2}^*$
holds 
because 
$\hat R^*\hat v_{1,2}\hat R=\hat v_{1,2}$
holds.
Let $X$ be a random variable 
given by observing $\hat T_\inv$.
Let
$\hat T
=-i\hat v_{1,2}$.
where $i=\sqrt{-1}$. 
It holds that 
$\hat T^*=\hat T$
and that 
$\hat T_\inv=\hat T^2$.
Let $Y$ be a random variable 
given by observing $\hat T$.
For any state, 
the probability distribution of 
$X$ is 
equal to 
that of $Y^2$.

For a random variable $Z$,
the probability distribution 
can be identified by 
the characteristic function 
$\varphi_Z(r)=E[e^{i r Z}]$,
where 
$E[W]$ is the expected value of 
a random variable $W$.
Let 
$\NB_m(p)$ 
be the negative binomial distribution 
whose probability function is 
$
	f(x)
	=
	\begin{pmatrix}m+x-1\\x\end{pmatrix}
	(1-p)^mp^x
$.
If $Z$ obeys $\NB_m(p)$,
then it holds that 
$\varphi_Z(r)=(1-p)^m(1-pe^{ir})^{-m}$.
Let 
$\Poi(\lambda)$ 
be the Poisson distribution 
whose probability function is 
$
	f(x)
	=
	e^{-\lambda}
	\lambda^x/(x!)
$.
If $W$ obeys $\Poi(\lambda)$,
then it holds that 
$\varphi_W(r)=e^{\lambda(e^{ir}-1)}$.
Define 
$\gamma(r)$ 
and 
$\psi_s(r)$
by 
\begin{align}
	\gamma(r)
	=
	\frac
	{1}
	{
	N+1-Ne^{ir}
	}
	\mbox{ and }
	\psi_s(r)
	=
	\exp[\gamma(r)(e^{ir}-1)s^2]
	\label{eq_def_gamma}
\end{align}
\donotdisplay
{
and let 
\begin{align}
	\psi_s(r)
	=
	\begin{cases}
	\exp(
	s^2e^{ir}-s^2
	)
	& \mbox{if }N=0,
	\\
	\exp\left(
	\frac{s^2/N}{N+1-Ne^{ir}}
	-\frac{s^2}{N}
	\right)
	& \mbox{if }N>0.
	\end{cases}
	\label{eq_def_psi}
\end{align}
}

\begin{theorem}\label{th_n2}
(i) 
If $n=2$, 
then it holds that 
\[
	\varphi_Y(r)
	=
	[\gamma(r)\gamma(-r)]^m\psi_{\|\vec\theta\|}(r)\psi_{\|\vec\theta\|}(-r)
	.
\]
\\
(ii) 
Assume that
$F$, $G$, 
$P_k$
and $Q_k$
$(k\in\bN)$
are mutually independent 
random variables, 
where 
$F$ and $G$ obey 
$\NB_m(N/(N+1))$,
and where 
$P_k$ and $Q_k$ obey 
$
	\Poi\left(
	%\|\theta\|^2N^{k-1}(N+1)^{-k-1}
	\frac
	{\|\theta\|^2N^{k-1}}
	{(N+1)^{k+1}}
	\right)
$.
If $n=2$,
then 
the probability distribution 
of $Y$ 
is that of 
$
	F-G
	+
	\sum_{k=1}^\infty
	(kP_k-kQ_k)
$.
\end{theorem}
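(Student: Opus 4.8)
The plan is to evaluate $\varphi_Y(r)=\Tr[\hat\rho_{\vec\theta,\eta,N}^{\ox2}\,e^{\sqrt{-1}\,r\hat T}]$ and first to remove the squeezing. Since $\hat T=-\sqrt{-1}\,\hat v_{1,2}$, Theorem \ref{th_inv} shows that $\hat T$ is SI, so $\hat T$ --- and hence $e^{\sqrt{-1}\,r\hat T}$ --- commutes with $\hat S_\eta^{\ox2}$. Writing $\hat\rho_{\vec\theta,\eta,N}^{\ox2}=\hat S_\eta^{\ox2}\hat\rho_{\vec\theta,N}^{\ox2}(\hat S_\eta^{\ox2})^*$ and combining this commutation with the cyclicity of the trace, the two squeezing factors cancel, leaving
\[
	\varphi_Y(r)=\Tr[\hat\rho_{\vec\theta,N}^{\ox2}\,e^{\sqrt{-1}\,r\hat T}].
\]
Thus the squeezing parameter is irrelevant and I may work with the non-squeezed state.

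Next I would diagonalise $\hat T$ via the beam-splitter eigenmodes $\hat c_{i,\pm}=(\hat a_{i,1}\mp\sqrt{-1}\,\hat a_{i,2})/\sqrt2$, which form another orthonormal set of modes ($[\hat c_{i,\pm},\hat c_{i,\pm}^*]=\hat I$, all other commutators vanishing). A direct expansion gives $\hat a_{i,2}^*\hat a_{i,1}-\hat a_{i,1}^*\hat a_{i,2}=-\sqrt{-1}\,(\hat n_{i,+}-\hat n_{i,-})$ with $\hat n_{i,\pm}=\hat c_{i,\pm}^*\hat c_{i,\pm}$, so that $\hat T=\sum_{i=1}^m(\hat n_{i,-}-\hat n_{i,+})$ is a difference of photon numbers of these modes. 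Now $\hat\rho_{\vec\theta,N}^{\ox2}$ is a Gaussian state whose mean amplitude is $\theta_i$ in each original mode and whose thermal covariance is isotropic, hence unchanged under the passage to any orthonormal mode basis; relative to $\{\hat c_{i,\pm}\}$ it is therefore the product $\bigotimes_{i=1}^m(\hat\rho_{w_{i,+},N}\ox\hat\rho_{w_{i,-},N})$ of displaced thermal states, with $w_{i,\pm}=(\theta_i\mp\sqrt{-1}\,\theta_i)/\sqrt2$ and $|w_{i,\pm}|=|\theta_i|$. Consequently the $\hat n_{i,\pm}$ are mutually independent and $Y$ has the law of $\hat M_--\hat M_+$, where $\hat M_\pm=\sum_i\hat n_{i,\pm}$; thus $\varphi_Y(r)=\phi(r)\phi(-r)$ with $\phi(r)=\Tr[(\bigotimes_i\hat\rho_{w_{i,+},N})\,e^{\sqrt{-1}\,r\hat M_+}]$, the two marginals coinciding since $|w_{i,\pm}|=|\theta_i|$.

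The one genuine computation left is the photon-number characteristic function of a single displaced thermal state. Using the Glauber--Sudarshan representation of $\hat\rho_{\theta,N}$ together with $\bra z e^{\sqrt{-1}\,r\hat n}\ket z=\exp[(e^{\sqrt{-1}\,r}-1)|z|^2]$, the trace $\Tr[\hat\rho_{\theta,N}e^{\sqrt{-1}\,r\hat n}]$ becomes a Gaussian integral over $z\in\bC$, convergent because $\cos r-1\le0$ (the case $N=0$ being immediate from the coherent-state formula), and it evaluates to $\gamma(r)\psi_{|\theta|}(r)$ with $\gamma,\psi$ as in (\ref{eq_def_gamma}). Multiplying over the $m$ independent modes and using $\sum_i|w_{i,+}|^2=\sum_i|\theta_i|^2=\|\vec\theta\|^2$ gives $\phi(r)=\gamma(r)^m\psi_{\|\vec\theta\|}(r)$, whence $\varphi_Y(r)=[\gamma(r)\gamma(-r)]^m\psi_{\|\vec\theta\|}(r)\psi_{\|\vec\theta\|}(-r)$. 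This is part (i).

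For part (ii) I would recognise each factor of $\phi$ as a known characteristic function. The factor $\gamma(r)^m=(N+1-Ne^{\sqrt{-1}\,r})^{-m}$ is exactly that of $\NB_m(N/(N+1))$. Expanding the exponent of $\psi_s$ as a geometric series yields $\psi_s(r)=\exp[\sum_{k=1}^\infty\lambda_k(e^{\sqrt{-1}\,kr}-1)]$ with $\lambda_k=s^2N^{k-1}/(N+1)^{k+1}$, which is the characteristic function of $\sum_{k\ge1}kP_k$ for independent $P_k\sim\Poi(\lambda_k)$. Thus $\phi$ is the characteristic function of $F+\sum_k kP_k$, and since the two families of modes are independent, $Y$ has the law of $F-G+\sum_k(kP_k-kQ_k)$, which is part (ii). The conceptual crux --- and the step I would justify most carefully --- is the eigenmode decoupling: that the isotropic thermal covariance is invariant under the change of mode basis while the displacements merely rotate, making $\hat M_+$ and $\hat M_-$ independent with identical marginals. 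The squeezing cancellation (cyclicity of the trace with the unbounded $\hat S_\eta$) and the convergence of the Gaussian integral warrant a remark, but the single-mode evaluation and the series resummation are otherwise routine.
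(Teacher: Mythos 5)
Your proposal is correct and follows essentially the same route as the paper: you diagonalize $\hat v_{1,2}$ as a photon-number difference (the content of Lemma \ref{lem:2_realization}, which you realize directly through the eigenmodes $\hat c_{i,\pm}$), factorize the state into two independent displaced thermal modes with displacement modulus $|\theta_i|$ (the content of Lemma \ref{lem_g_a_b_z}), evaluate the single-mode characteristic function $\Tr[\hat\rho_{\theta,N}e^{\sqrt{-1}r\hat N}]=\gamma(r)\psi_{|\theta|}(r)$ by the same coherent-state Gaussian integral as Lemma \ref{lem_one_mode_char_func}, and resum the geometric series for (ii) exactly as the paper does. The only deviation is cosmetic: you skip the paper's preliminary conjugation by $\hat R$ (Lemma \ref{lem:dariano}, which first concentrates the displacement into $\hat\rho_{0,N}\ox\hat\rho_{\sqrt2\theta,N}$) and pass to the eigenmode basis in one step, which is harmless since both reductions end with two displaced thermal states of equal displacement modulus.
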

\begin{proof}
See Sec. \ref{sec_n2}.
\end{proof}

The following theorem shows that 
$T^\inv_\alpha$ 
dominates 
$T^\trivial_\alpha$
if $N=0$.

\begin{theorem}\label{th_type2}
If $N=0$, 
then, 
for any $m\ge1$ 
and for any $n\ge2$, 
the type II error probability of 
$T^\inv_\alpha$
is 
\[
	\beta_{\hat\rho_{\vec\theta,\eta,0}^{\ox n}}[T^\inv_\alpha]
	=
	(1-\alpha)
	\frac{e^{-n\|\vec\theta\|^2}}{B\left(\frac{n-1}{2},\frac12\right)}
	\int_0^\pi
	e^{n\|\vec\theta\|^2\cos\varphi}
	(\sin\varphi)^{n-2}
	d\varphi
	,
\]
where
$B(x,y)$
is the beta function.
\end{theorem}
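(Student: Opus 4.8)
The plan is to reduce the type II error probability to the single overlap $\bra\Psi\hat K_0\ket\Psi$ with $\ket\Psi=\ket{\vec\theta}^{\ox n}$, and then to evaluate that overlap by averaging over the squeezing–invariance group. First I would pin down the POVM of $T^\inv_\alpha$ in the pure case. The null state is the vacuum $\hat\rho_{\vec0_m,0}^{\ox n}=(\ket{\vec0_m}\bra{\vec0_m})^{\ox n}$; since each $\hat a_{i,j}$ annihilates it and $\hat v_{k,n}^*=-\hat v_{k,n}$, one checks $\hat T_\inv\ket{\vec0_m}^{\ox n}=0$ (note $\hat R$ fixes the vacuum), so the vacuum lies in $\cK_t$ for every $t\ge0$ and in no $\cK_s$ with $s<0$. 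Hence $\Tr[\hat\rho_{\vec0_m,0}^{\ox n}\hat K_t]$ equals $1$ for $t\ge0$ and $0$ for $t<0$, i.e. the null distribution of $\hat T_\inv$ is a point mass at $0$. Randomising at that single eigenvalue, the defining system for $(s,t,w)$ is solved by $s<0$, $t=0$, $w=1-\alpha$, which forces $\hat\Pi_0=(1-\alpha)\hat K_0$. Thus $\beta=(1-\alpha)\Tr[\hat\rho_{\vec\theta,\eta,0}^{\ox n}\hat K_0]$, and since $\hat K_0$ is SI I may set $\eta=0$ to get $\beta=(1-\alpha)\bra\Psi\hat K_0\ket\Psi$.

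Next I would express $\hat K_0$ as a group average. By Theorem \ref{th_radial}, $\cK_0$ is the space of MwRI functions, i.e. those invariant under the right action $X\mapsto XO$ of $O\in\SO(n)$ on the position variables, which is exactly the action generated by the $\hat v_{j,k}$; hence $\hat K_0$ is the orthogonal projection onto the invariant vectors of this unitary representation and equals the normalised Haar average $\hat K_0=\int_{\SO(n)}\hat U_O\,dO$, where $\hat U_O$ satisfies $\hat U_O^*\hat a_{i,j}\hat U_O=\sum_{k}O_{jk}\hat a_{i,k}$. Because the representation is passive, $\hat U_O\ket\Psi$ is again a product coherent state, with $(i,j)$-amplitude $\theta_i(O\vec{1})_j$, where $\vec{1}\in\Mat_\bR^{n,1}$ is the all-ones vector. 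Writing $c_j=(O\vec{1})_j\in\bR$ and using the coherent-state overlap $\langle\alpha|\beta\rangle=\exp(-|\alpha|^2/2-|\beta|^2/2+\bar\alpha\beta)$, I get
\[
	\bra\Psi\hat U_O\ket\Psi
	=\prod_{i=1}^m\prod_{j=1}^n\exp\!\Big(-\tfrac{|\theta_i|^2}{2}(1-c_j)^2\Big)
	=\exp\!\Big(-\tfrac{\|\vec\theta\|^2}{2}\sum_{j=1}^n(1-c_j)^2\Big).
\]
Because $O$ is orthogonal, $\sum_j c_j^2={}^t\vec{1}\,{}^tO\,O\,\vec{1}=n$ and $\sum_j c_j={}^t\vec{1}\,O\vec{1}$, so with $\hat e=\vec{1}/\sqrt n$ one finds $\sum_j(1-c_j)^2=2n\big(1-{}^t\hat e\,O\hat e\big)$, giving $\bra\Psi\hat U_O\ket\Psi=e^{-n\|\vec\theta\|^2}\exp\!\big(n\|\vec\theta\|^2\,{}^t\hat e\,O\hat e\big)$.

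Finally I would carry out the Haar integral $\bra\Psi\hat K_0\ket\Psi=\int_{\SO(n)}\bra\Psi\hat U_O\ket\Psi\,dO$. For $n\ge2$ and Haar-random $O\in\SO(n)$ the vector $O\hat e$ is uniformly distributed on $S^{n-1}$, so ${}^t\hat e\,O\hat e=\cos\varphi$ has density proportional to $(\sin\varphi)^{n-2}$ on $\varphi\in[0,\pi]$, with normalising constant $\int_0^\pi(\sin\varphi)^{n-2}d\varphi=B\big(\tfrac{n-1}{2},\tfrac12\big)$. Substituting gives
\[
	\bra\Psi\hat K_0\ket\Psi
	=\frac{e^{-n\|\vec\theta\|^2}}{B\!\left(\tfrac{n-1}{2},\tfrac12\right)}
	\int_0^\pi e^{n\|\vec\theta\|^2\cos\varphi}(\sin\varphi)^{n-2}\,d\varphi,
\]
and multiplying by $1-\alpha$ yields the asserted formula for $\beta_{\hat\rho_{\vec\theta,\eta,0}^{\ox n}}[T^\inv_\alpha]$.

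I expect the crux to be the middle step: rigorously identifying $\hat K_0$ with the $\SO(n)$ Haar projector and confirming that $\hat U_O$ acts as an amplitude-rotating passive transformation that sends product coherent states to product coherent states with amplitudes $\theta_i(O\vec{1})_j$. Once that representation-theoretic fact is secured, the overlap is an elementary Gaussian computation and the remaining group integral is the standard uniform-on-$S^{n-1}$ reduction. A minor point to treat carefully is that the $(s,t,w)$ system really does collapse to $\hat\Pi_0=(1-\alpha)\hat K_0$, i.e. that the null distribution of $\hat T_\inv$ is the point mass at $0$, which is precisely what makes the final expression so clean.
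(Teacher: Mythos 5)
Your proof is correct, and every ingredient you flag as the crux is in fact available in the paper: the identification of $\hat K_0$ with the normalised Haar average of the rotation representation is exactly Lemma \ref{lem_proj} (proved there for every mode size $m$), and the passive action sending product coherent vectors to product coherent vectors, $\hat V_B\ket Z=\Ket{Ze^{-\bar B}}$, is Lemma \ref{lem_a_b_z}. Where you genuinely diverge is in the execution after that point. The paper first rotates $\vec\theta$ to $\|\vec\theta\|\vec u_m$ by a mode rotation $\hat U_A$ (via Lemma \ref{lem_g_a_b_z}) and factorizes the trace over modes so that only the $m=1$ case remains; it then averages the coherent \emph{vector} over the sphere (Lemmas \ref{lem_lambda_1} and \ref{lem_k0}) and evaluates the overlap in explicit spherical coordinates, for which it develops a substantial differential-geometric chain (Lemmas \ref{lem_left_inv} through \ref{lem_vol_parallelepiped}, including a Schur-lemma argument that rotational uniformity forces the Euclidean metric). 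You instead compute the \emph{scalar} $\bra\Psi\hat U_O\ket\Psi$ in closed form for general $m$, where orthogonality of $O$ collapses the exponent to $n\|\vec\theta\|^2\bigl({}^t\hat e\,O\hat e-1\bigr)$, and then invoke the standard fact that the first coordinate of a Haar-uniform point on $\cS^{n-1}$, written as $\cos\varphi$, has density $(\sin\varphi)^{n-2}/B\bigl(\tfrac{n-1}{2},\tfrac12\bigr)$. This buys two real shortcuts: no reduction to $m=1$ (the general-$m$ overlap is no harder than the $m=1$ one), and no rederivation of the spherical measure --- though a fully self-contained write-up would have to either cite that standard fact or reproduce essentially the paper's volume-element computation. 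One small caveat: your claim that the $(s,t,w)$ system \emph{forces} $t=0$ is slightly too strong, since $\Tr[\hat\rho_{\vec0_m,0}^{\ox n}\hat K_t]=1$ for every $t\ge0$, so only $s<0$ and $w=1-\alpha$ are determined and any $t\ge0$ solves the equation; $t=0$ is the minimal (and clearly intended) choice, and it is the one the paper adopts without comment, so your derivation of $\hat\Pi_0=(1-\alpha)\hat K_0$ actually supplies detail the paper omits rather than containing a gap.
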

\begin{proof}
See Sec. \ref{sec_proof_type2}.
\end{proof}

For a test to be SI,
it is necessary 
that the type II error probability 
does not depend on 
$\eta\in\Sqz^m$.
In the following theorem,
(i) implies that 
$T^\HH_\alpha$
is not SI.
Moreover,
(ii),implies that 
$T^\HH_\alpha$
is no better than 
$T^\trivial_\alpha$
in the minimax criterion.

\begin{theorem}\label{th_not_si}
(i)
If $\vec\theta\ne\vec0_m$,
then $\beta_{\hat\rho_{\vec\theta,\eta,N}^{\ox n}}[T^\HH_\alpha]$
depends on $\eta\in\Sqz^m$.
\\
(ii)
It holds that 
$
	\sup_{\eta\in\Sqz^m}
	\beta_{\hat\rho_{\vec\theta,\eta,N}^{\ox n}}[T^\HH_\alpha]
	=1-\alpha
$.
\end{theorem}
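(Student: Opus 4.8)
Write $\vec v=\vec\mu_{\vec\theta}$, and abbreviate $c_1=(2N+1)/4$, $c_2=1/4$, $B=G_\eta$, so that by $(\ref{eq_vec_mu_vec_theta_eta_sigma_eta_n})$ the non-centrality is
\[
	\lambda
	=
	n\,({}^t\vec v)({}^tB)\bigl(c_1 B({}^tB)+c_2 I_{2m}\bigr)^{-1}B\,\vec v
	.
\]
The first point is that the acceptance region $\{F_\HH\le c\}$ is fixed once and for all under $H_0$: the threshold $c$ solves $\int_c^\infty p_0(f)\,df=\alpha$ and the degrees of freedom $\mu=2m$, $\nu=n-2m$ do not involve $\eta$. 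Since $F_\HH$ obeys $F_{\mu,\nu;\lambda}$ under $(\vec\theta,\eta)$, the type II error probability is
\[
	\beta_{\hat\rho_{\vec\theta,\eta,N}^{\ox n}}[T^\HH_\alpha]
	=\int_0^c p_\lambda(f)\,df
	=:\beta(\lambda)
	,
\]
a function of the single scalar $\lambda$. Thus the whole theorem reduces to analysing the map $\eta\mapsto\lambda$ together with the monotonicity of $\beta(\cdot)$.

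Next I would record the monotonicity. Using the representation $F_\HH=(\chi^2_\mu(\lambda)/\mu)/(\chi^2_\nu/\nu)$ with independent numerator and denominator, and writing $\chi^2_\mu(\lambda)\stackrel{d}{=}\chi^2_{\mu-1}+(Z+\sqrt\lambda)^2$ with $Z$ standard normal, a direct computation shows $\Pr\{(Z+a)^2>t\}=\Phi(a-\sqrt t)+\Phi(-a-\sqrt t)$ is strictly increasing in $a\ge0$; hence $F_\HH$ is strictly stochastically increasing in $\lambda$, so $\beta(\lambda)=\Pr\{F_\HH\le c\}$ is strictly decreasing on $[0,\infty)$ and continuous. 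In particular $\beta(0)=1-\int_c^\infty p_0(f)\,df=1-\alpha$ by the definition of $c$, and $\beta(\lambda)\le 1-\alpha$ for every $\lambda\ge0$. (The same conclusion follows by differentiating the series $(\ref{eq_pdf_non_central_f})$ term by term.)

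For part (ii) the heart of the argument is to drive $\lambda$ to $0$ by squeezing exactly along $\vec v$. Set $J=\begin{pmatrix}0&I_m\\-I_m&0\end{pmatrix}$. A $2m$-by-$2m$ real matrix is the exponent appearing in $(\ref{eq_g_eta})$ for a \emph{pure} squeezing (anti-hermitian part $A=0$) if and only if it is symmetric and anticommutes with $J$; such matrices correspond bijectively to $S\in\Sym_\bC^m$ through the block form $\begin{pmatrix}\re(S)&\im(S)\\\im(S)&-\re(S)\end{pmatrix}$. Because ${}^t\vec v J\vec v=0$ the vectors $\vec v$ and $J\vec v$ are Euclidean-orthogonal with $\|J\vec v\|=\|\vec v\|$, so for each $r>0$ the matrix
\[
	M_r
	=
	\frac{r}{\|\vec v\|^2}
	\bigl(-\vec v\,{}^t\vec v+J\vec v\,{}^t(J\vec v)\bigr)
\]
is symmetric, anticommutes with $J$, and satisfies $M_r\vec v=-r\vec v$. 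Hence $M_r$ is the exponent of a squeezing $\eta_r\in\Sqz^m$, and $G_{\eta_r}=\exp M_r$ is symmetric with $G_{\eta_r}\vec v=e^{-r}\vec v$, so $G_{\eta_r}({}^tG_{\eta_r})=\exp(2M_r)$ has $\vec v$ as an eigenvector with eigenvalue $e^{-2r}$. Substituting,
\[
	\lambda(\eta_r)
	=
	n\,\frac{e^{-2r}\|\vec v\|^2}{c_1 e^{-2r}+c_2}
	\xrightarrow[r\to\infty]{}0
	.
\]
Since $\beta(\lambda)\le\beta(0)=1-\alpha$ for all $\eta$ while $\beta(\lambda(\eta_r))\to\beta(0)=1-\alpha$ by continuity, we obtain $\sup_{\eta\in\Sqz^m}\beta_{\hat\rho_{\vec\theta,\eta,N}^{\ox n}}[T^\HH_\alpha]=1-\alpha$, which is (ii).

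For part (i) it then suffices to exhibit two values of $\eta$ giving different $\lambda$. At $\eta=0$ we have $G_0=I_{2m}$ and $\Sigma_{0,N}=\tfrac{N+1}{2}I_{2m}$, whence $\lambda(0)=2n\|\vec v\|^2/(N+1)$, which is strictly positive because $\vec\theta\ne\vec0_m$ forces $\vec v\ne\vec0$. Comparing with the family $\eta_r$ above, $\lambda$ attains at least two distinct values, and strict monotonicity of $\beta$ makes the corresponding type II error probabilities distinct; thus $\beta_{\hat\rho_{\vec\theta,\eta,N}^{\ox n}}[T^\HH_\alpha]$ is non-constant in $\eta$, proving (i). I expect the main obstacle to be the construction of $\eta_r$ and the verification that $\lambda(\eta_r)\to0$: this is the statistically meaningful content, namely that squeezing compresses the displacement component $G_\eta\vec\mu_{\vec\theta}$ like $e^{-r}$ while the intrinsic heterodyne noise floor $\tfrac14 I_{2m}$ in $\Sigma_{\eta,N}$ persists, so the signal sinks below the detection resolution and the non-centrality vanishes. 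The remaining ingredients—the reduction of $\beta$ to a function of $\lambda$ and the stochastic monotonicity of the non-central $F$ in its non-centrality—are routine.
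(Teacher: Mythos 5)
Your proof is correct, and it reaches the same destination as the paper's proof by the same core mechanism --- reduce $\beta_{\hat\rho_{\vec\theta,\eta,N}^{\ox n}}[T^\HH_\alpha]$ to a function $\beta(\lambda)$ of the non-centrality alone, then squeeze so that the displaced mean $G_\eta\vec\mu_{\vec\theta}$ is compressed while the vacuum term $\frac14 I_{2m}$ in $\Sigma_{\eta,N}$ persists, forcing $\lambda\to0$ --- but the two implementations differ at both key steps. For the squeezing family, the paper (Lemma \ref{lem_norm}) takes the uniform squeezing $\zeta=\log(r)\begin{pmatrix}O_m&I_m\\I_m&O_m\end{pmatrix}$, which scales the whole position block by $r$, and then conjugates by a diagonal phase rotation $U$ to align with a complex $\vec\theta$; you instead build a rank-two exponent $M_r$ directly from $\vec v$ and $J\vec v$, using the (correct, and easily verified) characterization of pure-squeezing exponents as the symmetric matrices anticommuting with $J$, so no phase-alignment step is needed. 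Amusingly, the two constructions produce literally the same non-centrality family: your $\lambda(\eta_r)=4n e^{-2r}\|\vec v\|^2/\big((2N+1)e^{-2r}+1\big)$ is the paper's $\lambda=4nr^2\|\vec\theta\|^2/\big((2N+1)r^2+1\big)$ under $r\mapsto e^{-r}$, since $\|\vec v\|=\|\vec\theta\|$. For the behavior of $\beta(\lambda)$, the paper uses two local/one-sided facts: the series bound $p_\lambda(f)\ge e^{-\lambda/2}p_0(f)$ (hence $\beta(\lambda)>e^{-\lambda/2}(1-\alpha)$) for (ii), and the first-order expansion of Lemma \ref{lem_compari_hh_0} for (i); you instead prove global strict stochastic monotonicity of the non-central $F$ in $\lambda$ via $\chi^2_\mu(\lambda)\stackrel{d}{=}\chi^2_{\mu-1}+(Z+\sqrt\lambda)^2$ and the identity $\Pr\{(Z+a)^2>t\}=\Phi(a-\sqrt t)+\Phi(-a-\sqrt t)$. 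Your route buys two things the paper's argument leaves implicit or unverified: the upper bound $\beta(\lambda)\le\beta(0)=1-\alpha$ for \emph{all} $\eta$, which is genuinely needed for the supremum in (ii) to equal (and not exceed) $1-\alpha$ and which the paper never states; and, for (i), strict monotonicity lets you conclude from two distinct values of $\lambda$ without needing the expansion's slope $-(1-\alpha-\delta)/2$ to be nonzero (the paper's Lemma \ref{lem_compari_hh_0} only asserts $\delta>0$, and $\delta<1-\alpha$ is true but not checked there). The paper's route, in exchange, is more self-contained computationally, needing only the explicit density (\ref{eq_pdf_non_central_f}) rather than the distributional decomposition of the non-central chi-square.
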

\begin{proof}
See Sec. \ref{sec_proof_not_si}.
\end{proof}

For the case of 
$m=1$,
$n=3$
and $N=0$,
the type II error probabilities of 
$T^\inv_\alpha$
and 
$T^\HH_\alpha$
are plotted in Figure.
The figure says that 
$\beta_{\hat\rho_{\theta,\eta,0}^{\ox3}}[T^\inv_\alpha]<
\beta_{\hat\rho_{\theta,\eta,0}^{\ox3}}[T^\HH_\alpha]$
if $\theta\fallingdotseq0$,
and that 
$\beta_{\hat\rho_{\theta,\eta,0}^{\ox3}}[T^\inv_\alpha]>
\beta_{\hat\rho_{\theta,\eta,0}^{\ox3}}[T^\HH_\alpha]$
if $\theta\gg0$.
If $\eta=O_2$
We can evaluate 
$\beta_{\hat\rho_{\theta,O_2,0}^{\ox3}}[T^\inv_\alpha]$
with 
$\beta_{\hat\rho_{\theta,O_2,0}^{\ox3}}[T^\HH_\alpha]$
as follows.

\begin{theorem}\label{th_comparison}
Consider the case of 
$m=1$,
$n=3$,
$\eta=O_2$,
$N=0$
and 
$0<\alpha<1$.
\\
(i)
There exists $s>0$
such that, 
if $0<|\theta|<s$,
then 
$\beta_{\hat\rho_{\theta,0}^{\ox 3}}[T^\inv_\alpha]
<\beta_{\hat\rho_{\theta,0}^{\ox 3}}[T^\HH_\alpha]$
holds.
\\
(ii) 
There exists $t>0$
such that, 
if $\theta>t$,
then 
$\beta_{\hat\rho_{\theta,0}^{\ox 3}}[T^\inv_\alpha]
>\beta_{\hat\rho_{\theta,0}^{\ox 3}}[T^\HH_\alpha]$
holds.
\end{theorem}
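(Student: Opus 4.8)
The plan is to reduce both type II error probabilities to explicit functions of the single non-centrality scalar $\lambda=n\,{}^t\vec\mu_{\vec\theta,\eta}\Sigma_{\eta,N}^{-1}\vec\mu_{\vec\theta,\eta}$, and then compare them near $\lambda=0$ (for (i)) and as $\lambda\to\infty$ (for (ii)). First I would specialize the data. With $\eta=O_2$ we have $G_{O_2}=I_2$, so by \eqref{eq_g_eta}--\eqref{eq_vec_mu_vec_theta_eta_sigma_eta_n} we get $\vec\mu_{\vec\theta,\eta}=\vec\mu_{\vec\theta}$ and, since $N=0$, $\Sigma_{O_2,0}=\tfrac12 I_2$; hence $\lambda=6|\theta|^2$. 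For the invariant test, Theorem \ref{th_type2} with $m=1$, $n=3$ and $B(1,\tfrac12)=2$ gives, after the substitution $u=\cos\varphi$,
\[
	\beta_{\hat\rho_{\theta,0}^{\ox 3}}[T^\inv_\alpha]
	=(1-\alpha)\,\frac{1-e^{-6|\theta|^2}}{6|\theta|^2}
	=(1-\alpha)\,\frac{1-e^{-\lambda}}{\lambda}
	=:\beta^\inv(\lambda).
\]
For the HH test, $\mu=2$ and $\nu=1$, the critical point $c\in(0,\infty)$ is fixed by $\int_c^\infty p_0(f)\,df=\alpha$, and I abbreviate $\beta^\HH(\lambda)=\int_0^c p_\lambda(f)\,df$ with $p_\lambda$ as in \eqref{eq_pdf_non_central_f}. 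Both functions equal $1-\alpha$ at $\lambda=0$, so the theorem becomes a statement about which one leaves the value $1-\alpha$ faster (part (i)) and which one reaches $0$ faster (part (ii)).

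For part (i) I would compare the first derivatives in $\lambda$ at $\lambda=0$. Write $p_\lambda=\sum_{k\ge0}w_k(\lambda)g_k$, where $w_k(\lambda)=e^{-\lambda/2}(\lambda/2)^k/k!$ is the Poisson$(\lambda/2)$ weight and $g_k$ is the $k$-th summand of \eqref{eq_pdf_non_central_f} with the weight stripped off. Matching powers of $\lambda$ in the identity $\int_0^\infty p_\lambda\,df=1$ forces $\int_0^\infty g_k\,df=1$ for every $k$, so $G_k:=\int_0^c g_k\,df$ satisfies $0<G_k<1$, and in particular $G_0=1-\alpha$. Using $w_0'(0)=-\tfrac12$, $w_1'(0)=\tfrac12$ and $w_k'(0)=0$ for $k\ge2$, term-by-term differentiation gives
\[
	\frac{d}{d\lambda}\beta^\HH(\lambda)\Big|_{\lambda=0}
	=-\tfrac12 G_0+\tfrac12 G_1
	=-\tfrac12(1-\alpha)+\tfrac12 G_1,
\]
while the closed form yields $\frac{d}{d\lambda}\beta^\inv(\lambda)\big|_{\lambda=0}=-\tfrac12(1-\alpha)$. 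Hence $\frac{d}{d\lambda}(\beta^\HH-\beta^\inv)\big|_{\lambda=0}=\tfrac12 G_1>0$; since the two agree at $\lambda=0$, the invariant error probability is strictly smaller for all small $\lambda>0$, i.e. for $0<|\theta|<s$ with $s$ small. The only point needing care is the term-by-term differentiation, which is justified because on the compact interval $[0,c]$ the series and its $\lambda$-derivative converge uniformly for $\lambda$ near $0$.

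For part (ii) I would compare decay rates. The closed form gives $\beta^\inv(\lambda)=(1-\alpha)(1-e^{-\lambda})/\lambda\sim(1-\alpha)/\lambda$, a polynomial decay of order $1/(6|\theta|^2)$. By contrast $\beta^\HH(\lambda)=\Pr\{F_{\mu,\nu;\lambda}\le c\}$ decays exponentially. Writing $F_{\mu,\nu;\lambda}=(\chi^2_\mu(\lambda)/\mu)\big/(\chi^2_\nu/\nu)$ with independent numerator and denominator and $\chi^2_\mu(\lambda)=\|\vec Z+\vec\delta\|^2$ for a standard Gaussian $\vec Z$ with $\|\vec\delta\|^2=\lambda$, the triangle inequality $\|\vec Z+\vec\delta\|\ge\sqrt\lambda-\|\vec Z\|$ shows that $\{F_{\mu,\nu;\lambda}\le c\}$ is contained in $\{\|\vec Z\|>\sqrt\lambda/2\}\cup\{\chi^2_\nu\ge\nu\lambda/(4\mu c)\}$, whence
\[
	\beta^\HH(\lambda)\le\Pr\{\chi^2_\mu>\lambda/4\}+\Pr\{\chi^2_\nu\ge\nu\lambda/(4\mu c)\}\le Ce^{-c'\lambda}
\]
for some $C,c'>0$, by the standard chi-square tail bound with $\mu=2$, $\nu=1$. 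Since an exponential beats the $1/\lambda$ lower bound of $\beta^\inv$, there is $t>0$ with $\beta^\HH(\lambda)<\beta^\inv(\lambda)$ whenever $\lambda>6t^2$, i.e. whenever $\theta>t$. I expect the main obstacle to be this last step: turning the qualitative ``exponential beats polynomial'' into an honest bound for the specific small degrees of freedom $\mu=2$, $\nu=1$, together with confirming $c\in(0,\infty)$ (guaranteed by $0<\alpha<1$); the tail estimate above supplies exactly this.
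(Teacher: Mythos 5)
Your proof is correct. Part (i) is essentially the paper's own argument (Lemmas \ref{lem_compari_inv_0} and \ref{lem_compari_hh_0}): your $G_1=\int_0^c g_1(f)\,df$ is literally the paper's $\delta$, since $g_1(f)=(\mu+\nu)\frac{f}{\mu f+\nu}p_0(f)$, and your derivative identity $\frac{d}{d\lambda}\beta^\HH(0)=-\frac12(1-\alpha)+\frac12 G_1$ is the paper's $\int_0^c q(f)\,df=(\delta-1+\alpha)/2$ in Poisson-weight notation; your explicit justification of the term-by-term differentiation (uniform convergence near $\lambda=0$, using $G_k\le1$) is a point the paper leaves implicit. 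Part (ii) is where you genuinely diverge. The paper's Lemma \ref{lem_compari_hh_inf} stays entirely inside the density series \eqref{eq_pdf_non_central_f}: after the substitution $x=\mu f/(\mu f+\nu)$ it picks $b_1\in(b_0,1)$, shows the beta-term ratios satisfy $r_{k+1}(x)/r_k(x)<b_1$ on $(0,b_0)$ for $k\ge k_0$, and bounds the tail and head of the Poisson sum separately to get $\int_0^c p_\lambda(f)\,df=o(e^{-t\lambda})$. You instead use the stochastic representation $F_{\mu,\nu;\lambda}\stackrel{d}{=}(\chi^2_\mu(\lambda)/\mu)/(\chi^2_\nu/\nu)$ with independent components, the triangle inequality $\|\vec Z+\vec\delta\|\ge\sqrt\lambda-\|\vec Z\|$, and chi-square tails (exact, $e^{-\lambda/8}$, for $\mu=2$; a standard Chernoff bound for $\nu=1$). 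Your route is shorter and more transparent about \emph{why} the HH error collapses (the non-central numerator escapes to infinity at rate $\lambda$), but it imports a standard distributional fact that the paper never states — it only defines the non-central $F$ through its density — so a fully self-contained version would either cite \cite{anderson} for the representation or verify it against \eqref{eq_pdf_non_central_f}; the paper's series manipulation avoids this dependency at the cost of the bookkeeping with $k_0$, $b_1$, and the auxiliary parameter $s$. A further economy on your side: the closed form $\beta^\inv(\lambda)=(1-\alpha)(1-e^{-\lambda})/\lambda$ with $\lambda=6|\theta|^2$ handles both regimes at once, where the paper uses two separate lemmas (\ref{lem_compari_inv_0} and \ref{lem_compari_inv_inf}) for the derivative at zero and the $O(|\theta|^{-2})$ decay; the contents agree since $B(1,\tfrac12)=2$.
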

\begin{proof}
See Sec. \ref{sec_proof_comparison}.
\end{proof}

\def\thefigure{}
\begin{figure}
\centering
\includegraphics[clip,width=12cm]{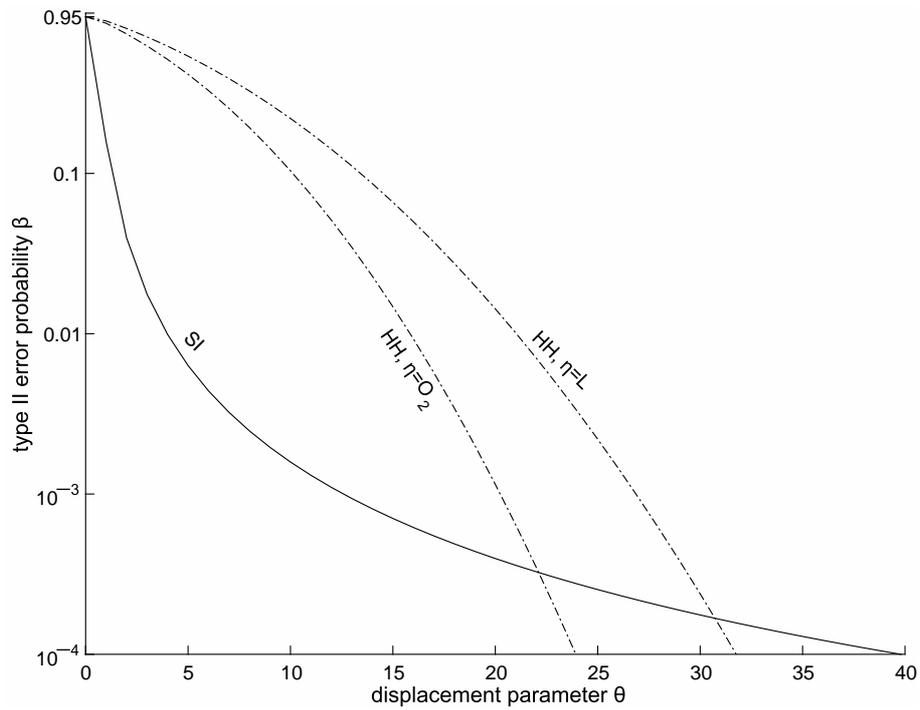}
\caption[Text excluding the matrix]{The type II error probabilities of 
$T^\inv_\alpha$ and $T^\HH_\alpha$
with $\alpha=0.95$ for $m=1$, $n=3$ and $N=0$.
The solid line is $\beta_{\hat\rho_{\theta,O_2,0}^{\ox3}}[T^\inv_\alpha]$.
The lower dashed line is $\beta_{\hat\rho_{\theta,O_2,0}^{\ox3}}[T^\HH_\alpha]$.
The upper dashed line is $\beta_{\hat\rho_{\theta,L,0}^{\ox3}}[T^\HH_\alpha]$,
where 
$L=
\begin{pmatrix}
0&1\\1&0
\end{pmatrix}$.
}
\end{figure}

\section{Proofs of the theorems}\label{sec_proofs}

\subsection{Matrix notations of operators}

Let 
$m$ be the mode size, 
and let 
$n$ be the sample size. 
For any $\mu,\nu\in\mathbb N$, 
let 
$\Mat_{m,n}^{\mu,\nu}$ 
be the set of $\mu$-by-$\nu$ matrices 
whose entries belong to 
$\cL(\cH^{\ox mn})$. 
Identifying 
$1$ with $\hat I^{\ox mn}$, 
we regard that 
$\Mat_\bC^{\mu,\nu}$ 
is a vector subspace of 
$\Mat_{m,n}^{\mu,\nu}$.
Let 
$\Mat_{m,n}^\mu$ 
be 
$\Mat_{m,n}^{\mu,\mu}$. 
We will use 
$L\in\cL(\cH^{\ox mn})$
which can be written by 
a linear combination of 
entries of 
$X\in\Mat_{m,n}^{\mu,\nu}$.

Define a linear map
$\tr_\mu:
\Mat_{m,n}^\mu
\to
\cL(\cH^{\ox mn})$
by
$\tr_\mu[X]
=\sum_{i=1}^\mu X_{i,i}$,
where
$X_{i,j}$ is the $(i,j)$-th entry of 
$X$.
\donotdisplay
{
On the other hand,
define
$\Tr:
\Mat_{m,n}^{\mu,\nu}
\to
\Mat_\bC^{\mu,\nu}$
as the entry-wise trace, 
that is, 
$(\Tr[X])_{i,j}
=\Tr[X_{i,j}]$. 
}
For 
$X\in\Mat_{m,n}^{\mu,\nu}$ 
and for
$L\in\cL(\cH^{\ox mn})$,
define
$XL,LX,[X,L]\in\Mat_{m,n}^{\mu,\nu}$
by
$(XL)_{i,j}=X_{i,j}L$,
$(LX)_{i,j}=LX_{i,j}$ 
and 
$[X,L]=XL-LX$, 
respectively. 
If $X\in\Mat_\bC^{\mu,\nu}$
and $L\in\cL(\cH^{\ox mn})$, 
then 
it holds that 
\begin{align}
	[X,L]
	=
	O_{\mu,\nu}
	, 
	\label{eq_zero_matrix}
\end{align}
where 
$O_{\mu,\nu}\in\Mat_\bC^{\mu,\nu}$
is 
the zero matrix. 
For
$X\in\Mat_{m,n}^{\lambda,\mu}$
and for
$Y\in\Mat_{m,n}^{\mu,\nu}$,
define the product 
$XY\in\Mat_{m,n}^{\lambda,\nu}$
by
$(XY)_{i,j}
=\sum_{k=1}^\mu X_{i,k}Y_{k,j}$.
For
$X\in\Mat_{m,n}^{\mu,\nu}$, 
define
$X^*\in\Mat_{m,n}^{\nu,\mu}$
by
$(X^*)_{i,j}
=(X_{j,i})^*$,
and 
define
${^t}X\in\Mat_{m,n}^{\nu,\mu}$
by
$({^t}X)_{i,j}=X_{j,i}$. 
The operations
$X\mapsto X^*$
and
$X\mapsto{^t}X$
are commutative 
as
$({^t}X)^*
={^t}(X^*)$,
and it holds that
$({^t}X^*)_{i,j}
=X_{i,j}^*$.
For $X\in\Mat_{m,n}^{\lambda,\mu}$,
for $Y\in\Mat_{m,n}^{\mu,\nu}$
and
for $C\in\mathrm{Mat}_\mathbb C^{\mu,\nu}$,
it holds that
$(XY)^*=Y^* X^*$
and that
${^t}(XC)=({^t}C)({^t}X)$.
Moreover,
for
$X\in\Mat_{m,n}^\mu$
and
for
$C\in\mathrm{Mat}_\mathbb C^\mu$,
it holds that
\begin{align}
	\tr_\mu[XC]
	=
	\sum_{i=1}^\mu
	\sum_{j=1}^\mu
	X_{i,j}C_{j,i}
	=
	\tr_\mu[CX]
	.
	\label{eq_tr}
\end{align}

For $j\in\{1,2,...,n\}$, 
define 
$\vec a_j
\in\Mat_{m,n}^{m,1}$
and 
$\dvec a_j
\in\Mat_{m,n}^{2m,1}$ 
by 
\[
	\vec a_j
	=
	\begin{pmatrix}
	\hat a_{1,j}\\\vdots\\\hat a_{m,j}
	\end{pmatrix}
	\mbox{ and }
	\dvec a_j
	=
	\begin{pmatrix}
	\vec a_j\\{^t}\vec a_j^*
	\end{pmatrix}
	,
\]
respectively.
Define 
$\tilde a_{m,n}
\in\Mat_{m,n}^{m,n}$
by 
$\tilde a_{m,n}
=(\vec a_1,\vec a_2,...,\vec a_n)$.
Define 
$\dtilde a_{m,n}
\in\Mat_{m,n}^{2m,n}$
by 
$\dtilde a_{m,n}
=(\dvec a_1,\dvec a_2,...,\dvec a_n)$.
For $\mu\in\bN$, 
let $I_\mu\in\Mat_\bR^\mu$ 
be the identity matrix, 
let $O_\mu\in\Mat_\bR^\mu$ 
be the zero matrix, 
and let 
\[
	K_\mu
	=
	\begin{pmatrix}
	I_\mu&O_\mu
	\\
	O_\mu&-I_\mu
	\end{pmatrix}
	\in\Mat_\bR^{2m}
	.
\]
If $n=1$,
then,
it holds that 
$\hat s_\eta
=2^{-1}\dvec a_1^*K_m\eta\dvec a_1
-\dtilde c_\eta\hat I^{\ox m}$, 
where 
$\dtilde c_\eta=4^{-1}\tr_{2m}[K_m\eta]
\in\sqrt{-1}\bR$. 
For any $m,n\in\bN$,
and 
for any $\eta\in\Sqz^m$,
define 
$\dhat u_\eta
\in\cL(\cH^{\ox mn})$
by 
\[
	\dhat u_\eta
	=
	\frac{1}{2}
	\tr_n[\dtilde a_{m,n}^*K_m\eta\dtilde a_{m,n}]
	-n\dtilde c_\eta\hat I^{\ox mn}
	.
\]
Then,
it holds that 
$
	\exp(\dhat u_\eta)=\hat S_\eta^{\ox n}
$.
For any 
$A\in\Ant_\bC^m$,
define 
$\hat u_A\in\cL(\cH^{\ox mn})$ 
by 
\[
	\hat u_A
	=
	\tr_n[\tilde a_{m,n}^*A\tilde a_{m,n}]
	.
\]
For $\eta\in\Sqz^m$,
if the anti-hermitian part 
is $A$,
and 
if the symmetric part 
is $O_m$,
then it holds that 
$\hat u_A=\dhat u_\eta$.

For $A\in\Ant_\bC^n$, 
define 
$\hat v_A\in\cL(\cH^{\ox mn})$ 
by 
\[
	\hat v_A
	=
	\tr_m[{^t}\tilde a_{m,n}^*A({^t}\tilde a_{m,n})]
	.
\]
For $j,k\in\{1,2,...,n\}$
with $j<k$, 
define $J_{j,k}\in\Ant_\bR^n$ 
by 
\begin{align}
	J_{j,k}
	=
	\begin{pmatrix}
	O_{j-1}\\
	& 0 & & -1\\
	&   & O_{k-j-1}\\
	& 1 & & 0\\
	& & & & O_{n-k}
	\end{pmatrix}
	.
	\label{eq_j_j_k}
\end{align}
If $j=k$,
let $J_{j,k}=O_n$.
If $j>k$,
let $J_{j,k}=-J_{k,j}$.
It holds that 
$\hat v_{J_{j,k}}
=\hat v_{j,k}$. 

By Eq. (\ref{eq_hat_a_ket_theta}),
for any 
$A\in\Ant_\bC^m$ 
and for any 
$B\in\Ant_\bC^n$,
it holds that 
\begin{align}
	\hat u_A\ket0^{\ox mn}
	=
	\hat v_B\ket0^{\ox mn}
	=
	0
	\in\cH^{\ox mn}
	.
	\label{eq_ket_zero}
\end{align}

\subsection{Proof of Theorem \ref{th_inv} ($\hat v_{j,k}$ is SI)}\label{sec_proof_inv}

We first consider 
how $\hat s_\eta$ is represented 
in the case of $n=1$.
To simplify the notation, 
$\dvec a_1\in\Mat_{m,1}^{2m,1}$
is denoted by 
$\dvec a={^t}(\hat a_1,\hat a_2,...,\hat a_m,\hat a_1^*,\hat a_2^*,...,\hat a_m^*)$.

\begin{lemma}\label{lem_s_rep}
If $n=1$,
then, 
for any $m\in\bN$, 
and 
for any $\eta\in\mathrm{Sqz}^m$, 
it holds that
$[\dvec a,\dhat s_\eta]
=\eta\dvec a$.
\end{lemma}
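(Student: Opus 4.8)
The plan is to prove the identity componentwise, directly from the normal-ordered definition of the generator, so that no structural input beyond the canonical commutation relations and the defining symmetries of $\eta$ is required. Writing $\dvec a={}^t(\hat a_1,\dots,\hat a_m,\hat a_1^*,\dots,\hat a_m^*)$, the claim $[\dvec a,\dhat s_\eta]=\eta\dvec a$ is a list of $2m$ operator equalities: for the top block I must show $[\hat a_k,\dhat s_\eta]=\sum_j A_{k,j}\hat a_j+\sum_j S_{k,j}\hat a_j^*$, and for the bottom block $[\hat a_k^*,\dhat s_\eta]=\sum_j\bar S_{k,j}\hat a_j+\sum_j\bar A_{k,j}\hat a_j^*$, since these are exactly the $k$-th and $(m+k)$-th rows of $\eta\dvec a$ for $\eta=\left(\begin{smallmatrix}A&S\\\bar S&\bar A\end{smallmatrix}\right)$. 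Because $\dhat s_\eta$ and the normal-ordered $\hat s_\eta$ differ only by a $c$-number multiple of the identity (the scalar $\dtilde c_\eta$), which commutes with every $\hat a_i$, I may compute with $\hat s_\eta$ itself.

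First I would record the relations $[\hat a_i,\hat a_j]=[\hat a_i^*,\hat a_j^*]=0$ and $[\hat a_i,\hat a_j^*]=\delta_{ij}\hat I$, and apply the Leibniz rule $[\hat a_k,XY]=[\hat a_k,X]Y+X[\hat a_k,Y]$ term by term to $\hat s_\eta=\sum_{i,j}(A_{i,j}\hat a_i^*\hat a_j+\tfrac12 S_{i,j}\hat a_i^*\hat a_j^*-\tfrac12\bar S_{i,j}\hat a_i\hat a_j)$. For $[\hat a_k,\hat s_\eta]$ the $\bar S$ term drops out (annihilators commute), the $A$ term contributes $\sum_j A_{k,j}\hat a_j$, and the $S$ term contributes $\tfrac12(\sum_j S_{k,j}\hat a_j^*+\sum_i S_{i,k}\hat a_i^*)$, which collapses to $\sum_j S_{k,j}\hat a_j^*$ by symmetry of $S$. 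For $[\hat a_k^*,\hat s_\eta]$ the $S$ term drops out, the $\bar S$ term yields $\sum_j\bar S_{k,j}\hat a_j$ (again using symmetry, now of $\bar S$), and the $A$ term yields $-\sum_i A_{i,k}\hat a_i^*$.

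The one genuinely non-bookkeeping step is the last one: I must convert $-\sum_i A_{i,k}\hat a_i^*$ into $\sum_j\bar A_{k,j}\hat a_j^*$. This is exactly where the constraint $A\in\Ant_\bC^m$, i.e. $A=-A^*$ or equivalently $A_{i,k}=-\overline{A_{k,i}}$, enters, giving $-A_{i,k}=\overline{A_{k,i}}=\bar A_{k,i}$ and hence the required row after renaming the index. Thus the anti-hermiticity of $A$ and the symmetry of $S$ are precisely the hypotheses that make $\eta\in\Sqz^m$ act linearly as $\eta\dvec a$; I expect this matching to be the main (though modest) obstacle, the remainder being routine commutator algebra.

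Alternatively, and more conceptually, one can avoid the index chase using the matrix form $\hat s_\eta=\tfrac12\dvec a^*K_m\eta\dvec a-\dtilde c_\eta\hat I^{\ox m}$ recorded above, the packaged relation $[\dvec a,\dvec a^*]=K_m$ (the canonical commutation relations written as a $c$-number matrix), and ${}^t\Omega=-\Omega$ for $\Omega=\left(\begin{smallmatrix}O_m&I_m\\-I_m&O_m\end{smallmatrix}\right)$. Applying Leibniz to the quadratic form produces two contributions, one equal to $\tfrac12(K_mK_m\eta)\dvec a=\tfrac12\eta\dvec a$ and one proportional to $\Omega\,{}^t\eta\,\Omega\,\dvec a$; the second reduces to $\tfrac12\eta\dvec a$ precisely because every $\eta\in\Sqz^m$ satisfies $\Omega\,{}^t\eta\,\Omega=\eta$, which one checks blockwise from ${}^tA=-\bar A$ and ${}^tS=S$. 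Summing the two halves gives $\eta\dvec a$. I would present the direct computation as the proof and mention this symplectic repackaging as the reason the statement is structurally inevitable.
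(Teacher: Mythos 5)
Your proposal is correct and takes essentially the same route as the paper: the paper's proof likewise applies the Leibniz rule with the canonical commutation relations (organizing the noncommuting terms of $\hat s_\eta$ into operators $\hat F_i$ and $\hat G_i$), uses the symmetry of $S$ to collapse the paired creation terms, and invokes anti-hermiticity in the form $A_{j,i}=-\bar A_{i,j}$ at exactly the step you identify as the crux. Your alternative repackaging via $[\dvec a,\dvec a^*]=K_m$ and $\Omega\,{}^t\eta\,\Omega=\eta$ is also valid, but the componentwise computation you present as the main argument is the paper's proof.
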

\begin{proof}
Let $A\in\Ant_\bC^m$
be the anti-hermitian part 
of $\eta$,
and let
$S\in\Sym_\bC^m$
be the symmetric part 
of $\eta$. 
For $i\in\{1,2,...,m\}$,
let
\[
	\hat F_i
	=
	\Big(
	\sum_{j=1}^m
	A_{i,j}\hat a_i^*\hat a_j
	\Big)
	+\frac{S_{i,i}}{2}(\hat a_i^*)^2
	+\Big(
	\sum_{j\ne i}
	S_{i,j}\hat a_i^*\hat a_j^*
	\Big)
	\in\cL(\cH^{\ox m})
	.
\]
Then, 
for any $i\le m$, 
it holds that
\begin{align}
	[\hat a_i,\hat s_\eta]
	=&
	[\hat a_i,\hat F_i]
	%\nonumber
	%\\
	=%&
	\Big(
	\sum_{j=1}^m
	A_{i,j}\hat a_j
	\Big)
	+S_{i,i}\hat a_i^*
	+\Big(
	\sum_{j\ne i}
	S_{i,j}\hat a_j^*
	\Big)
	%\nonumber
	%\\
	=%&
	\eta_i\dvec a
	,
	\label{eq_lem_1_1}
\end{align}
where 
$\eta_i\in\Mat_\bC^{1,2m}$ 
is the $i$-th row vector of $\eta$.
Similarly,
let
\[
	\hat G_i
	=
	\Big(
	\sum_{j=1}^m
	A_{j,i}\hat a_j^*\hat a_i
	\Big)
	-\frac{\bar S_{i,i}}{2}\hat a_i^2
	-\Big(
	\sum_{j\ne i}
	\bar S_{i,j}\hat a_i\hat a_j
	\Big)
	\in\cL(\cH^{\ox m})
	.
\]
This definition is equivalent to 
\[
	\hat G_i
	=
	-\frac{\bar S_{i,i}}{2}\hat a_i^2
	-\Big(
	\sum_{j\ne i}
	\bar S_{i,j}\hat a_i\hat a_j
	\Big)
	-\Big(
	\sum_{j=1}^m
	\bar A_{i,j}\hat a_j^*\hat a_i
	\Big)
	.
\]
For any $i\le m$, 
it holds that
\begin{align}
	[\hat a_i^*,\hat s_\eta]
	=&
	[\hat a_i^*,\hat G_i]
	%\nonumber
	%\\
	=%&
	\bar S_{i,i}\hat a_i
	+\Big(
	\sum_{j\ne i}
	\bar S_{i,j}\hat a_j
	\Big)
	+\Big(
	\sum_{j=1}^m
	\bar A_{i,j}\hat a_j^*
	\Big)
	%\nonumber
	%\\
	=%&
	\eta_{i+m}\dvec a
	.
	\label{eq_lem_1_2}
\end{align}
By Eqs. 
(\ref{eq_lem_1_1}) 
and 
(\ref{eq_lem_1_2}), 
we obtain 
$[\dvec a,\hat s_\eta]
=\eta\dvec a$. 
\end{proof}

Then,
for any $n\ge1$,
we obtain a representation of 
$\dhat u_\eta$ 
on $\dtilde a_{m,n}$.

\begin{lemma}\label{lem:squeezing_representation}
For any $m,n\in\bN$, 
and 
for any $\eta\in\mathrm{Sqz}^m$, 
it holds that
$[\dtilde a_{m,n},\dhat u_\eta]
=\eta\dtilde a_{m,n}$.
\end{lemma}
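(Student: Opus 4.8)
The plan is to reduce the general-$n$ identity to the single-block case of Lemma \ref{lem_s_rep}, exploiting the tensor-product structure of $\cH^{\ox mn}$.

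First I would rewrite $\dhat u_\eta$ as a sum of $n$ mutually commuting copies of $\hat s_\eta$, one per block. By the definition of $\tr_n$, the $(j,j)$-th diagonal entry of $\dtilde a_{m,n}^* K_m\eta\dtilde a_{m,n}$ is $\dvec a_j^* K_m\eta\dvec a_j$, so
\[
	\dhat u_\eta
	=
	\sum_{j=1}^n
	\Big(
	\tfrac12\dvec a_j^* K_m\eta\dvec a_j
	-\dtilde c_\eta\hat I^{\ox mn}
	\Big)
	.
\]
Writing $\hat s_\eta^{(j)}$ for the $j$-th summand, each $\hat s_\eta^{(j)}$ is exactly the single-block operator $\hat s_\eta$ acting on the $j$-th tensor factor of $\cH^{\ox mn}$, with identities on the remaining factors.

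Next I would evaluate the matrix commutator entrywise. By the convention $[X,L]=XL-LX$, the $(p,k)$-th entry of $[\dtilde a_{m,n},\dhat u_\eta]$ is the ordinary commutator $[(\dvec a_k)_p,\dhat u_\eta]$, where $(\dvec a_k)_p=(\dtilde a_{m,n})_{p,k}$. Because $(\dvec a_k)_p$ is supported on the $k$-th block alone, it commutes with $\hat s_\eta^{(j)}$ for every $j\ne k$ (disjoint tensor factors), and the scalar part $-n\dtilde c_\eta\hat I^{\ox mn}$ drops out. Hence
\[
	[\dtilde a_{m,n},\dhat u_\eta]_{p,k}
	=
	[(\dvec a_k)_p,\hat s_\eta^{(k)}]
	.
\]
Restricted to the $k$-th block this is precisely the setting of Lemma \ref{lem_s_rep}, which gives $[\dvec a_k,\hat s_\eta^{(k)}]=\eta\dvec a_k$, that is, $[(\dvec a_k)_p,\hat s_\eta^{(k)}]=(\eta\dvec a_k)_p$. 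Since $(\eta\dtilde a_{m,n})_{p,k}=(\eta\dvec a_k)_p$, collecting all entries yields $[\dtilde a_{m,n},\dhat u_\eta]=\eta\dtilde a_{m,n}$.

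The main obstacle is the bookkeeping that makes the ``block decomposition'' and the ``cross-block commutativity'' rigorous: one must verify that $\tfrac12\dvec a_j^* K_m\eta\dvec a_j-\dtilde c_\eta\hat I^{\ox mn}$ genuinely is the embedding of the $n=1$ generator $\hat s_\eta$ into the $j$-th factor, and that $(\dvec a_k)_p$ and $\hat s_\eta^{(j)}$ with $j\ne k$ act on disjoint factors (this follows from the definition $\hat a_{i,j}=\hat I^{\ox(i-1+jm-m)}\ox\hat a\ox\hat I^{\ox(m-i+mn-jm)}$). Because the block embedding is a $*$-homomorphism tensored with the identity, it carries the $n=1$ commutator relation of Lemma \ref{lem_s_rep} into $\cL(\cH^{\ox mn})$ unchanged, after which the conclusion is immediate.
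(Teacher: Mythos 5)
Your proposal is correct and follows essentially the same route as the paper: the paper's proof likewise notes that $[\hat a_{i,j},\dvec a_k]=[\hat a_{i,j}^*,\dvec a_k]=O_{2m,1}$ for $j\ne k$ (your cross-block commutativity) and then invokes Lemma \ref{lem_s_rep} blockwise to conclude $[\dtilde a_{m,n},\dhat u_\eta]=\eta\dtilde a_{m,n}$. Your version merely makes explicit the decomposition $\dhat u_\eta=\sum_{j=1}^n\big(\tfrac12\dvec a_j^*K_m\eta\dvec a_j-\dtilde c_\eta\hat I^{\ox mn}\big)$ and the irrelevance of the scalar term, which the paper leaves implicit.
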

\begin{proof}
For any $i\in\{1,2,...,m\}$,
and 
for any $j,k\in\{1,2,...,n\}$
with 
$j\ne k$,
it holds that 
$[\hat a_{i,j},\dvec a_k]
=[\hat a_{i,j}^*,\dvec a_k]
=O_{2m,1}$.
Hence, 
by Lemma 
\ref{lem_s_rep},
we obtain 
$[\dtilde a_{m,n},\dhat u_\eta]
=\eta\dtilde a_{m,n}$.
\end{proof}

\begin{lemma}\label{lem_u_A}
For any $A\in\Ant_\bC^m$,
it holds that
$[\tilde a_{m,n},\hat u_A]=A\tilde a_{m,n}$
and that 
$[{^t}\tilde a_{m,n}^*,\hat u_A]=\bar A({^t}\tilde a_{m,n}^*)$.
\end{lemma}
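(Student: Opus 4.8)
The plan is to obtain this lemma as an immediate corollary of Lemma \ref{lem:squeezing_representation} by specializing to a squeezing matrix with vanishing symmetric part. Given $A\in\Ant_\bC^m$, I would introduce the block matrix $\eta=\begin{pmatrix}A&O_m\\O_m&\bar A\end{pmatrix}$. This $\eta$ lies in $\Sqz^m$ since its anti-hermitian part is $A$ and its symmetric part is $O_m\in\Sym_\bC^m$, and by the identity $\hat u_A=\dhat u_\eta$ recorded just before the statement, the operator $\hat u_A$ is exactly the generator $\dhat u_\eta$ for this choice of $\eta$.

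Next I would apply Lemma \ref{lem:squeezing_representation}, which gives $[\dtilde a_{m,n},\dhat u_\eta]=\eta\dtilde a_{m,n}$, and expand both sides using the block decomposition $\dtilde a_{m,n}=\begin{pmatrix}\tilde a_{m,n}\\{^t}\tilde a_{m,n}^*\end{pmatrix}$, which follows directly from the definition $\dvec a_j=\begin{pmatrix}\vec a_j\\{^t}\vec a_j^*\end{pmatrix}$. Since the commutator with the scalar operator $\hat u_A$ acts entry-wise, the left-hand side splits as $\begin{pmatrix}[\tilde a_{m,n},\hat u_A]\\ [{^t}\tilde a_{m,n}^*,\hat u_A]\end{pmatrix}$, while the block-diagonal $\eta$ makes the right-hand side $\begin{pmatrix}A\tilde a_{m,n}\\ \bar A({^t}\tilde a_{m,n}^*)\end{pmatrix}$. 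Reading off the top and bottom blocks yields the two asserted identities.

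The only real bookkeeping to check is that the two commutators genuinely decouple, i.e.\ that the lower block of $\eta\dtilde a_{m,n}$ is $\bar A({^t}\tilde a_{m,n}^*)$ with no contribution from $\tilde a_{m,n}$. This is precisely where the vanishing of the symmetric part $S=O_m$ is essential: a nonzero $S$ would place off-diagonal blocks in $\eta$ that couple $\tilde a_{m,n}$ and ${^t}\tilde a_{m,n}^*$, and the two commutators would no longer separate cleanly. Isolating this point is the main (and rather mild) obstacle.

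As a sanity check on the sign and conjugation conventions, I would also verify the statement by direct computation. Writing $\hat u_A=\sum_{l}\sum_{p,q}A_{p,q}\hat a_{p,l}^*\hat a_{q,l}$ and using $[\hat a_{i,j},\hat a_{p,l}^*]=\delta_{i,p}\delta_{j,l}$ together with $[\hat a_{i,j},\hat a_{q,l}]=0$, one gets $[\hat a_{i,j},\hat u_A]=\sum_q A_{i,q}\hat a_{q,j}=(A\tilde a_{m,n})_{i,j}$; similarly $[\hat a_{i,j}^*,\hat u_A]=-\sum_p A_{p,i}\hat a_{p,j}^*$, and the relation $A=-A^*$ converts $-A_{p,i}$ into $\bar A_{i,p}$, producing $(\bar A({^t}\tilde a_{m,n}^*))_{i,j}$. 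This route bypasses Lemma \ref{lem:squeezing_representation} and confirms that the anti-hermitian hypothesis on $A$ is exactly what supplies the conjugate $\bar A$ in the second identity.
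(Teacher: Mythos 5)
Your proposal is correct and takes essentially the same route as the paper: the paper's proof is precisely the application of Lemma \ref{lem:squeezing_representation} to $\eta\in\Sqz^m$ with symmetric part $O_m$, combined with the recorded identity $\hat u_A=\dhat u_\eta$ and the block decomposition of $\dtilde a_{m,n}$. Your supplementary direct computation of $[\hat a_{i,j},\hat u_A]$ and $[\hat a_{i,j}^*,\hat u_A]$ is also correct, and rightly identifies anti-hermiticity of $A$ as the source of $\bar A$ in the second identity.
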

\begin{proof}
Applying 
Lemma \ref{lem:squeezing_representation} 
to the case where 
the symmetric part of $\eta$ is $O_m$,
we obtain 
the statement.
\end{proof}

\begin{lemma}\label{lem:c_right_rep}
For any $A\in\Ant_\bC^n$,
it holds that
$[\tilde a_{m,n},\hat v_A]=-\tilde a_{m,n}\bar A$
and that 
$[{^t}\tilde a_{m,n}^*,\hat v_A]=-{^t}\tilde a_{m,n}^*A$.
\end{lemma}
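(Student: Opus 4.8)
The plan is to reduce everything to the canonical commutation relations by writing $\hat v_A$ out in coordinates and then computing the two commutators entry by entry. Expanding the definition $\hat v_A=\tr_m[{^t}\tilde a_{m,n}^*A({^t}\tilde a_{m,n})]$ and using $({^t}\tilde a_{m,n}^*)_{i,j}=\hat a_{i,j}^*$ together with $({^t}\tilde a_{m,n})_{k,i}=\hat a_{i,k}$, I would first record the scalar-coefficient expression
\[
	\hat v_A
	=
	\sum_{i=1}^m\sum_{j=1}^n\sum_{k=1}^n
	\hat a_{i,j}^*A_{j,k}\hat a_{i,k}
	,
\]
in which each $A_{j,k}\in\bC$ is identified with $A_{j,k}\hat I^{\ox mn}$ and hence commutes with every operator.

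Next I would compute the $(i,j)$-entry of $[\tilde a_{m,n},\hat v_A]$, namely $[\hat a_{i,j},\hat v_A]$. Applying the Leibniz rule $[\hat a_{i,j},XY]=[\hat a_{i,j},X]Y+X[\hat a_{i,j},Y]$ to each summand and using $[\hat a_{i,j},\hat a_{i',k'}]=0$ together with $[\hat a_{i,j},\hat a_{i',j'}^*]=\delta_{i,i'}\delta_{j,j'}\hat I^{\ox mn}$, only the term in which the creation operator carries the indices $(i,j)$ survives, giving $[\hat a_{i,j},\hat v_A]=\sum_{k=1}^nA_{j,k}\hat a_{i,k}$. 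To recognise this as a matrix product I would invoke anti-hermiticity: $A=-A^*$ forces $\bar A=-{^t}A$, i.e.\ $A_{j,k}=-\bar A_{k,j}$, so that $\sum_kA_{j,k}\hat a_{i,k}=-\sum_k\hat a_{i,k}\bar A_{k,j}=(-\tilde a_{m,n}\bar A)_{i,j}$. This establishes the first identity.

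The second identity is entirely parallel. Computing the $(i,j)$-entry $[\hat a_{i,j}^*,\hat v_A]$, the surviving contribution now comes from $[\hat a_{i,j}^*,\hat a_{i',k'}]=-\delta_{i,i'}\delta_{j,k'}\hat I^{\ox mn}$ (creation operators mutually commute), yielding $[\hat a_{i,j}^*,\hat v_A]=-\sum_{k=1}^nA_{k,j}\hat a_{i,k}^*$, which is exactly the $(i,j)$-entry of $-{^t}\tilde a_{m,n}^*A$. Here no use of anti-hermiticity is needed, since the index $j$ already sits in the second slot of $A$.

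I expect no serious obstacle: the content is the CCR plus the Leibniz rule, and the only point requiring care is the index bookkeeping when converting the coordinate expressions back into the matrix forms $-\tilde a_{m,n}\bar A$ and $-{^t}\tilde a_{m,n}^*A$ — in particular, remembering that it is the anti-hermitian relation $\bar A=-{^t}A$ that turns $\sum_kA_{j,k}\hat a_{i,k}$ into a clean right multiplication in the first identity. As an alternative one could observe that $\hat v_A$ is obtained from $\hat u_A$ of Lemma \ref{lem_u_A} by interchanging the roles of the mode and sample indices together with a transpose, and deduce the statement from that lemma; but the direct computation above seems both shorter and less error-prone.
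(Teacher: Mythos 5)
Your proof is correct: the coordinate expansion $\hat v_A=\sum_{i=1}^m\sum_{j,k=1}^n\hat a_{i,j}^*A_{j,k}\hat a_{i,k}$ is right, the CCR bookkeeping $[\hat a_{i,j},\hat a_{i',j'}^*]=\delta_{i,i'}\delta_{j,j'}\hat I^{\ox mn}$ isolates exactly the terms you claim, and you correctly identify that anti-hermiticity (via $A_{j,k}=-\bar A_{k,j}$) is needed only to package the first identity as a right multiplication by $-\bar A$, while the second comes out directly. However, your route differs from the paper's. The paper does not recompute anything: it observes that $\hat v_A=\tr_m[{^t}\tilde a_{m,n}^*A({^t}\tilde a_{m,n})]$ is formally $\hat u_A$ with the roles of mode and sample indices interchanged, so Lemma~\ref{lem_u_A} immediately gives $[{^t}\tilde a_{m,n},\hat v_A]=A({^t}\tilde a_{m,n})$ and $[\tilde a_{m,n}^*,\hat v_A]=\bar A\tilde a_{m,n}^*$, and then transposes, using ${^t}A=-\bar A$ (respectively ${^t}\bar A=-A$) to reach the stated forms. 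That is precisely the alternative you sketch and set aside in your closing remark. The trade-off: the paper's two-line reduction keeps all commutator computations concentrated in Lemmas~\ref{lem_s_rep}--\ref{lem_u_A} and avoids fresh index manipulation, at the cost of the slightly delicate transposition rule ${^t}(XC)=({^t}C)({^t}X)$ for operator-valued matrices and of invoking anti-hermiticity in \emph{both} identities; your direct computation is self-contained, makes the role of the CCR transparent, and shows the second identity needs no anti-hermiticity at all, at the cost of redoing (in lighter notation) the calculation the paper already performed for $\hat u_\eta$. Both are valid; yours is the more elementary, the paper's the more economical within its lemma chain.
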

\begin{proof}
By Lemma \ref{lem_u_A},
it holds that 
$[{^t}\tilde a_{m,n},\hat v_A]=A({^t}\tilde a_{m,n})$
and that 
$[\tilde a_{m,n}^*,\hat v_A]=\bar A\tilde a_{m,n}^*$.
By transposing, 
we obtain 
the statement.
\end{proof}

\begin{lemma}\label{lem:r_right_rep}
For any $m,n\in\bN$, 
and for any 
$B\in\Ant_\bR^n$,
it holds that 
$[\dtilde a_{m,n},\hat v_B]
=-\dtilde a_{m,n}B$.
\end{lemma}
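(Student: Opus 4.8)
The plan is to reduce the statement to the two identities already obtained in Lemma \ref{lem:c_right_rep} by exploiting the block structure of $\dtilde a_{m,n}$. By definition $\dtilde a_{m,n}=(\dvec a_1,\dvec a_2,\dots,\dvec a_n)$, and each column $\dvec a_j={^t}(\vec a_j,{^t}\vec a_j^*)$ places $\vec a_j$ over ${^t}\vec a_j^*$. Reading off the indices, the top $m$ rows of $\dtilde a_{m,n}$ reconstitute $\tilde a_{m,n}$, while the bottom $m$ rows reconstitute ${^t}\tilde a_{m,n}^*$. So the first step is to record the vertical block identity
\[
	\dtilde a_{m,n}
	=
	\begin{pmatrix}\tilde a_{m,n}\\{^t}\tilde a_{m,n}^*\end{pmatrix}
	.
\]

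Next, because the commutator $[\,\cdot\,,\hat v_B]$ acts entry-wise it respects this decomposition, so
\[
	[\dtilde a_{m,n},\hat v_B]
	=
	\begin{pmatrix}[\tilde a_{m,n},\hat v_B]\\{}[{^t}\tilde a_{m,n}^*,\hat v_B]\end{pmatrix}
	.
\]
Applying Lemma \ref{lem:c_right_rep} with $A=B$ evaluates the two blocks as $-\tilde a_{m,n}\bar B$ and $-{^t}\tilde a_{m,n}^*B$, respectively.

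The one and only point where the hypothesis $B\in\Ant_\bR^n$ does any work, rather than the weaker $B\in\Ant_\bC^n$, is in reconciling these two blocks. Since $B$ is real, $\bar B=B$, so the top block also equals $-\tilde a_{m,n}B$; both blocks now carry the common right factor $B$, which can be pulled out of the stacked matrix to give
\[
	[\dtilde a_{m,n},\hat v_B]
	=
	-\begin{pmatrix}\tilde a_{m,n}\\{^t}\tilde a_{m,n}^*\end{pmatrix}B
	=
	-\dtilde a_{m,n}B
	.
\]
I do not anticipate any genuine obstacle: once Lemma \ref{lem:c_right_rep} is available, the argument is pure block bookkeeping together with the observation that reality of $B$ collapses the complex conjugate $\bar B$ into $B$.
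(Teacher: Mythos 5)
Your proof is correct and takes essentially the same route as the paper: the paper's one-line proof likewise combines the two block identities of Lemma \ref{lem:c_right_rep} with the reality observation $\bar B=B$, the vertical block decomposition of $\dtilde a_{m,n}$ being left implicit there and merely spelled out explicitly in your write-up.
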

\begin{proof}
By
Lemma \ref{lem:c_right_rep}, 
and by $\bar B=B$, 
we have
$[\dtilde a_{m,n},\hat v_B]=-\dtilde a_{m,n}B$.
\end{proof}

\begin{lemma}\label{lem_inv}
For any $\eta\in\mathrm{Sqz}^m$
and
for any $B\in\mathrm{Ant}_\mathbb R^n$,
it holds that
$[\dhat u_\eta,\hat v_B]=0$.
\end{lemma}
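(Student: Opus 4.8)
The plan is to deduce the lemma from the two commutation relations already in hand, $[\dtilde a_{m,n},\dhat u_\eta]=\eta\dtilde a_{m,n}$ (Lemma \ref{lem:squeezing_representation}) and $[\dtilde a_{m,n},\hat v_B]=-\dtilde a_{m,n}B$ (Lemma \ref{lem:r_right_rep}). The crucial structural observation is that the squeezing matrix $\eta\in\Sqz^m$ multiplies the $2m$-valued (left) index of $\dtilde a_{m,n}$, whereas $B\in\Ant_\bR^n$ multiplies the $n$-valued (right) index. Left and right multiplication by constant matrices commute, so I expect the commutator $W:=[\dhat u_\eta,\hat v_B]$ to act trivially on the generators $\hat a_{i,j},\hat a_{i,j}^*$.

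To make this precise I would apply the Jacobi identity entrywise to $[\dtilde a_{m,n},W]$, writing $[\dtilde a_{m,n},[\dhat u_\eta,\hat v_B]]=[[\dtilde a_{m,n},\dhat u_\eta],\hat v_B]+[\dhat u_\eta,[\dtilde a_{m,n},\hat v_B]]$. Since $\eta$ and $B$ are constant matrices they can be pulled through the brackets by Eq.\ (\ref{eq_zero_matrix}). Substituting the two relations, the first term is $[\eta\dtilde a_{m,n},\hat v_B]=\eta[\dtilde a_{m,n},\hat v_B]=-\eta\dtilde a_{m,n}B$, while the second is $[\dhat u_\eta,-\dtilde a_{m,n}B]=[\dtilde a_{m,n},\dhat u_\eta]B=\eta\dtilde a_{m,n}B$. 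These cancel, giving $[\dtilde a_{m,n},W]=O_{2m,n}$; that is, $W$ commutes with every $\hat a_{i,j}$ and every $\hat a_{i,j}^*$.

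Both $\dhat u_\eta$ and $\hat v_B$ are quadratic, so $W$ is again a quadratic element of the algebra generated by the $\hat a_{i,j},\hat a_{i,j}^*$. An element commuting with all generators is central, and the centre of this algebra is $\bC\hat I^{\ox mn}$, so $W=c\,\hat I^{\ox mn}$ for some $c\in\bC$. A priori $c$ need not vanish (recall $[\hat a,\hat a^*]=\hat I$), so I must evaluate it. Acting on the vacuum, $W\ket0^{\ox mn}=\dhat u_\eta\hat v_B\ket0^{\ox mn}-\hat v_B\dhat u_\eta\ket0^{\ox mn}=-\hat v_B\dhat u_\eta\ket0^{\ox mn}$, using $\hat v_B\ket0^{\ox mn}=0$ from Eq.\ (\ref{eq_ket_zero}). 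Taking the inner product with $\ket0^{\ox mn}$ and noting that $\hat v_B$ is anti-hermitian — being a real linear combination of the anti-hermitian operators $\hat v_{j,k}=\hat v_{J_{j,k}}$, whence $\hat v_B^*\ket0^{\ox mn}=-\hat v_B\ket0^{\ox mn}=0$ — kills the remaining term, so $c=0$ and $W=0$.

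The computation itself is pure bookkeeping once the relations and the left/right split are recognized; the only points demanding care are the justification that an operator commuting with all creation and annihilation operators must be scalar (equivalently, triviality of the centre, or irreducibility of the Schrödinger representation à la Stone--von Neumann) and the legitimacy of the entrywise Jacobi manipulations on the common dense domain of these unbounded operators. Neither is deep, but both should be stated rather than taken silently.
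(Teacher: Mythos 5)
Your argument is correct, but it is a genuinely different route from the paper's. The paper proves $[\dhat u_\eta,\hat v_B]=0$ by a direct two-line computation: it expands the commutator of the quadratic form $\tr_n(\dtilde a_{m,n}^*K_m\eta\dtilde a_{m,n})$ with $\hat v_B$ via the Leibniz rule and Eq.\ (\ref{eq_zero_matrix}), substitutes $[\dtilde a_{m,n},\hat v_B]=-\dtilde a_{m,n}B$ and $[\dtilde a_{m,n}^*,\hat v_B]=B\dtilde a_{m,n}^*$ from Lemma \ref{lem:r_right_rep}, and cancels the two resulting traces by the cyclicity identity (\ref{eq_tr}) --- so the vanishing is exact and purely algebraic, with no appeal to the representation. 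You instead prove the weaker statement $[\dtilde a_{m,n},W]=O_{2m,n}$ for $W=[\dhat u_\eta,\hat v_B]$ via the Jacobi identity, then upgrade it: $W$ commutes with all generators, hence is a scalar, and the scalar is fixed to zero by a vacuum expectation. This buys conceptual clarity (the left $\Sqz^m$-action and right $\Ant_\bR^n$-action visibly commute at the level of the adjoint representation) at the cost of one extra ingredient the paper never needs, namely triviality of the commutant. Your caution about that step is warranted, and your constant check is genuinely necessary --- commutators of quadratics can produce scalars, e.g.\ $[\hat a^2,\hat a^{*2}]=4\hat a^*\hat a+2\hat I$ --- and your evaluation is executed correctly: you use only $\hat v_B\ket0^{\ox mn}=0$ from Eq.\ (\ref{eq_ket_zero}) together with anti-hermiticity of $\hat v_B$, never the false claim $\dhat u_\eta\ket0^{\ox mn}=0$ (which fails when the symmetric part of $\eta$ is nonzero). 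One refinement: since $W$ is an explicit quadratic polynomial in the $\hat a_{i,j},\hat a_{i,j}^*$, you can replace the Stone--von Neumann irreducibility appeal by the purely algebraic fact that the Weyl algebra has center $\bC$ (e.g.\ by grading with the total number operator), which also dissolves the unbounded-operator domain worries you flag, since all manipulations then take place on the dense invariant span of Fock vectors.
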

\begin{proof}
By 
Eq. (\ref{eq_zero_matrix}),
it holds that
$[K_m\eta,\hat v_B]=O_{2m}$. 
Hence,
we have
\begin{align*}
	2[\dhat u_\eta,\hat v_B]
	=&
	\tr_n(\dtilde a_{m,n}^*K_m\eta\dtilde a_{m,n})\hat v_B
	-
	\tr_n(\dtilde a_{m,n}^*[K_m\eta,\hat v_B]\dtilde a_{m,n})
	\\
	&-
	\hat v_B\tr_n(\dtilde a_{m,n}^*K_m\eta\dtilde a_{m,n})
	\\
	=&
	\tr_n(\dtilde a_{m,n}^*K_m\eta[\dtilde a_{m,n},\hat v_B])
	+
	\tr_n([\dtilde a_{m,n}^*,\hat v_B]K_m\eta\dtilde a_{m,n})
	.
\end{align*}
By Lemma \ref{lem:r_right_rep},
we have 
\begin{align*}
	2[\dhat u_\eta,\hat v_B]
	=&
	-\tr_n(\dtilde a_{m,n}^*K_m\eta\dtilde a_{m,n}B)
	+
	\tr_n(B\dtilde a_{m,n}^*K_m\eta\dtilde a_{m,n})
	.
\end{align*}
By Eq. (\ref{eq_tr}), 
the right-hand side is zero.
\end{proof}

\begin{proof}[{\bf Proof of Theorem \ref{th_inv}}]
Using $J_{j,k}\in\Ant_\bR^n$
of (\ref{eq_j_j_k}),
it holds that 
$\hat v_{j,k}=\hat v_{J_{j,k}}$.
Hence,
by Lemma \ref{lem_inv}, 
it holds that 
$[\dhat u_\eta,\hat v_{j,k}]=0$.
As 
$\hat S_\eta^{\ox n}=\exp(\dhat u_\eta)$,
we obtain 
$(\hat S_\eta^{\ox n})^*\hat v_{j,k}\hat S_\eta^{\ox n}
=\hat v_{j,k}$.
\end{proof}

\subsection{Other properties of $\hat u_A$ and $\hat v_B$}

We will use 
the following lemma in Secs. 
\ref{sec_proof_radial}, 
\ref{sec_n2}
and 
\ref{sec_proof_type2}.

\begin{lemma}\label{lem:v_a_b_c}
For any $m,n\in\bN$, 
and for any $A,B\in\Ant_\bC^n$, 
it holds that
$[\hat v_A,\hat v_B]
=\hat v_{[A,B]}$.
\end{lemma}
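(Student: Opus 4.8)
The plan is to recognize $A\mapsto\hat v_A$ as a Lie algebra homomorphism and to verify the bracket relation by differentiating the already-established first-order commutation relations of Lemma~\ref{lem:c_right_rep}. I would write $\hat v_B=\tr_m[{^t}\tilde a_{m,n}^* B({^t}\tilde a_{m,n})]$ and compute $[\hat v_A,\hat v_B]$ by moving the commutator with $\hat v_A$ inside the trace and distributing it across the matrix product by the Leibniz rule. The scalar matrix $B$ will be carried through the commutator using the fact that complex-scalar matrices are central, so the only genuine input is the action of $\hat v_A$ on the annihilation matrices.

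First I would use that $\tr_m$ is linear, so that $[\hat v_A,\tr_m[X]]=\tr_m[[\hat v_A,X]]$ for any $X\in\Mat_{m,n}^m$, the inner commutator being entry-wise; applying this with $X={^t}\tilde a_{m,n}^* B({^t}\tilde a_{m,n})$ reduces everything to $[\hat v_A,{^t}\tilde a_{m,n}^* B({^t}\tilde a_{m,n})]$. Next, since $B\in\Mat_\bC^n$ has scalar entries, Eq.~(\ref{eq_zero_matrix}) gives $[\hat v_A,B]=O_n$, so the product rule yields $[\hat v_A,{^t}\tilde a_{m,n}^* B({^t}\tilde a_{m,n})]=[\hat v_A,{^t}\tilde a_{m,n}^*]\,B({^t}\tilde a_{m,n})+{^t}\tilde a_{m,n}^* B\,[\hat v_A,{^t}\tilde a_{m,n}]$. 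Lemma~\ref{lem:c_right_rep} supplies both factors: from its statement, $[\hat v_A,{^t}\tilde a_{m,n}^*]={^t}\tilde a_{m,n}^* A$, and from the relation $[{^t}\tilde a_{m,n},\hat v_A]=A({^t}\tilde a_{m,n})$ recorded in its proof, $[\hat v_A,{^t}\tilde a_{m,n}]=-A({^t}\tilde a_{m,n})$. Substituting and factoring, the two cross terms combine into ${^t}\tilde a_{m,n}^*(AB-BA)({^t}\tilde a_{m,n})={^t}\tilde a_{m,n}^*[A,B]({^t}\tilde a_{m,n})$, whence $[\hat v_A,\hat v_B]=\tr_m[{^t}\tilde a_{m,n}^*[A,B]({^t}\tilde a_{m,n})]=\hat v_{[A,B]}$. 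I would close by noting that $[A,B]\in\Ant_\bC^n$ whenever $A,B\in\Ant_\bC^n$, so that $\hat v_{[A,B]}$ is well defined.

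I do not expect a real obstacle here: the computation is purely formal once the two first-order relations are available. The only points requiring care are bookkeeping of signs and transposes — in particular reading $[\hat v_A,{^t}\tilde a_{m,n}]$ directly off Lemma~\ref{lem:c_right_rep} (whose published statement is phrased in terms of $\tilde a_{m,n}$ and ${^t}\tilde a_{m,n}^*$) rather than rederiving it, and checking that the cross terms assemble into the commutator $[A,B]$ rather than the anticommutator. A secondary routine check is that the Leibniz rule applies verbatim to this noncommutative matrix product, which it does because every entry of $B$ commutes with $\hat v_A$.
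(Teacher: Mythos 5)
Your proof is correct and takes essentially the same route as the paper: the paper likewise applies the Leibniz rule inside the trace, removes the scalar matrix via Eq.~(\ref{eq_zero_matrix}), and invokes Lemma~\ref{lem:c_right_rep} to arrive at $\tr_m\big({^t}\tilde a_{m,n}^*(AB-BA)\,{^t}\tilde a_{m,n}\big)=\hat v_{[A,B]}$. The only cosmetic difference is direction: the paper expands the trace defining $\hat v_A$ and commutes $\hat v_B$ through its factors, whereas you expand $\hat v_B$ and commute $\hat v_A$ through — a mirror image of the same computation, with your sign and transpose bookkeeping (including reading the intermediate relation from the proof of Lemma~\ref{lem:c_right_rep}) all checking out.
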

\begin{proof}
By 
Eq. (\ref{eq_zero_matrix}),
it holds that
$[A,\hat v_B]=O_n$.
Hence,
we have
\begin{align*}
	[\hat v_A,\hat v_B]
	=&
	\tr_m({^t}\tilde a_{m,n}^*A\,{^t}\tilde a_{m,n})\hat v_B
	-
	\tr_m({^t}\tilde a_{m,n}^*[A,\hat v_B]\,{^t}\tilde a_{m,n})
	\\
	&
	-
	\hat v_B\tr_m({^t}\tilde a_{m,n}^*A\,{^t}\tilde a_{m,n})
	\\
	=&
	\tr_m({^t}\tilde a_{m,n}^*A[{^t}\tilde a_{m,n},\hat v_B])
	+
	\tr_m([{^t}\tilde a_{m,n}^*,\hat v_B]A\,{^t}\tilde a_{m,n})
	.
\end{align*}
Moreover, 
by Lemma \ref{lem:c_right_rep},
we have 
\begin{align*}
	[\hat v_A,\hat v_B]
	=&
	\tr_m({^t}\tilde a_{m,n}^*AB\ {^t}\tilde a_{m,n})
	-
	\tr_m({^t}\tilde a_{m,n}^*BA\ {^t}\tilde a_{m,n})
	.
\end{align*}
Hence,
we have
$[\hat v_A,\hat v_B]
=\hat v_{[A,B]}$.
\end{proof}

We will use 
the following lemma in 
Sec. \ref{sec_rep}.

\begin{lemma}\label{lem:commute_s_r}\label{lem_u_a_v_b_commute}
For any $A\in\Ant_\bC^m$
and
for any $B\in\Ant_\bC^n$,
it holds that 
$[\hat u_A,\hat v_B]=0$.
\end{lemma}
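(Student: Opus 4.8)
The plan is to imitate the proofs of Lemma~\ref{lem_inv} and Lemma~\ref{lem:v_a_b_c}, exploiting that $[\,\cdot\,,\hat v_B]$ acts as a derivation on entry-wise matrix products and commutes with $\tr_n$, together with the fact that the c-number matrix $A\in\Mat_\bC^m$ commutes with $\hat v_B$ by Eq.~(\ref{eq_zero_matrix}). First I would write $\hat u_A=\tr_n(\tilde a_{m,n}^*A\tilde a_{m,n})$ and apply the Leibniz rule to $[\hat u_A,\hat v_B]$; the middle term $\tr_n(\tilde a_{m,n}^*[A,\hat v_B]\tilde a_{m,n})$ vanishes by Eq.~(\ref{eq_zero_matrix}), leaving
\[
	[\hat u_A,\hat v_B]
	=\tr_n([\tilde a_{m,n}^*,\hat v_B]A\tilde a_{m,n})
	+\tr_n(\tilde a_{m,n}^*A[\tilde a_{m,n},\hat v_B]).
\]

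The second step is to feed in Lemma~\ref{lem:c_right_rep}, which directly gives $[\tilde a_{m,n},\hat v_B]=-\tilde a_{m,n}\bar B$. For the remaining factor I would transpose the companion identity $[{^t}\tilde a_{m,n}^*,\hat v_B]=-{^t}\tilde a_{m,n}^*B$; since ${^t}$ commutes with $[\,\cdot\,,\hat v_B]$ entry-wise and reverses matrix products, this yields $[\tilde a_{m,n}^*,\hat v_B]=-({^t}B)\tilde a_{m,n}^*$. Substituting both expressions gives
\[
	[\hat u_A,\hat v_B]
	=-\tr_n(({^t}B)\tilde a_{m,n}^*A\tilde a_{m,n})
	-\tr_n(\tilde a_{m,n}^*A\tilde a_{m,n}\bar B).
\]

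The final step is the cancellation. Using the cyclicity relation Eq.~(\ref{eq_tr}) with the c-number matrix $\bar B\in\Mat_\bC^n$, I would rewrite the second trace as $\tr_n(\bar B\,\tilde a_{m,n}^*A\tilde a_{m,n})$, so that both terms share the operator factor $\tilde a_{m,n}^*A\tilde a_{m,n}$ on the right. Factoring it out leaves
\[
	[\hat u_A,\hat v_B]
	=-\tr_n\big(({^t}B+\bar B)\,\tilde a_{m,n}^*A\tilde a_{m,n}\big),
\]
and the defining relation of $\Ant_\bC^n$, namely $B=-B^*=-{^t}\bar B$ (equivalently ${^t}B=-\bar B$), forces ${^t}B+\bar B=O_n$. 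Hence $[\hat u_A,\hat v_B]=0$.

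The main obstacle I anticipate is bookkeeping rather than depth: getting the transpose and conjugate exactly right when converting the companion identity of Lemma~\ref{lem:c_right_rep} into $[\tilde a_{m,n}^*,\hat v_B]$, and ensuring that the c-number matrices ${^t}B$ and $\bar B$ are moved across $\tr_n$ only through Eq.~(\ref{eq_tr}), which is legitimate precisely because they are scalar matrices. Everything collapses onto the single algebraic fact ${^t}B=-\bar B$; if $B$ were a general complex matrix the two traces would not cancel, so it is essential that the anti-hermitian hypothesis on $B$ be invoked at the very end.
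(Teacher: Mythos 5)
Your proof is correct and takes essentially the same route as the paper's own: the Leibniz expansion of $[\hat u_A,\hat v_B]$ with the middle term killed by Eq.~(\ref{eq_zero_matrix}), substitution of Lemma~\ref{lem:c_right_rep}, and cancellation of the two traces via the cyclicity relation Eq.~(\ref{eq_tr}). The only difference is bookkeeping: the paper folds the anti-hermiticity ${^t}B=-\bar B$ into the substitution step, writing $[\tilde a_{m,n}^*,\hat v_B]=\bar B\tilde a_{m,n}^*$ so that the two traces cancel directly under Eq.~(\ref{eq_tr}), whereas you keep $-({^t}B)$ explicit and invoke ${^t}B+\bar B=O_n$ at the end --- the same argument in a slightly different order.
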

\begin{proof}
By 
Eq. (\ref{eq_zero_matrix}),
it holds that 
$[A,\hat v_B]=O_m$. 
Hence,
we have
\begin{align*}
	[\hat u_A,\hat v_B]
	=&
	\tr_n(\tilde a_{m,n}^*A\tilde a_{m,n})\hat v_B
	-
	\tr_n(\tilde a_{m,n}^*[A,\hat v_B]\tilde a_{m,n})
	\\
	&-
	\hat v_B\tr_n(\tilde a_{m,n}^*A\tilde a_{m,n})
	\\
	=&
	\tr_n(\tilde a_{m,n}^*A[\tilde a_{m,n},\hat v_B])
	+
	\tr_n([\tilde a_{m,n}^*,\hat v_B]A\tilde a_{m,n})
	.
\end{align*}
By Lemma 
\ref{lem:c_right_rep},
we have 
\begin{align*}
	[\hat u_A,\hat v_B]
	=&
	-\tr_n(\tilde a_{m,n}^*A\tilde a_{m,n}\bar B)
	+
	\tr_n(\bar B\tilde a_{m,n}^*A\tilde a_{m,n})
	.
\end{align*}
By Eq. (\ref{eq_tr}), 
the right-hand side is zero.
\end{proof}

\subsection{Proof of Theorem \ref{th_radial} (MwRI functions)}\label{sec_proof_radial}

\donotdisplay
{
A vector $f\in\cH^{\ox mn}$
is regarded as a function 
$f:\Mat_\bR^{m,n}\to\bC$,
where 
the $(i,j)$-th entry 
$x_{i,j}$ 
of the matrix variable 
$X\in\Mat_\bR^{m,n}$
is specified by 
$\hat a_{i,j}f
=(x_{i,j}f+\partial f/\partial x_{i,j})/\sqrt2$.
Then, 
$\hat v_{j,k}\in\cA(\cH^{\ox mn})$
is represented by 
$J_{j,k}\in\Ant_\bR^n$
defined in 
Eq. (\ref{eq_j_j_k}).
}

We consider how 
$\hat v_{j,k}\in\cL(\cH^{\ox mn})$
is represented by 
$f(X)\in\cH^{\ox mn}$
where $X\in\Mat_\bR^{m,n}$.

\begin{lemma}\label{lem_v_j_k_f_x}
For any $r\in\bR$,
it holds that 
$
	%\exp(r\hat v_{j,k})f(X)
	e^{r\hat v_{j,k}}f(X)
	=
	%f\big(X\exp(rJ_{j,k})\big)
	f(Xe^{rJ_{j,k}})
$.
\end{lemma}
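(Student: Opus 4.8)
The plan is to recognize $\hat v_{j,k}$, acting in the function representation $f:\Mat_\bR^{m,n}\to\bC$, as the infinitesimal generator of the one-parameter rotation group $R_r:f(X)\mapsto f(Xe^{rJ_{j,k}})$, and then to finish by a uniqueness argument. Since $\hat v_{j,k}^*=-\hat v_{j,k}$, the family $e^{r\hat v_{j,k}}$ is a one-parameter unitary group; and since $e^{rJ_{j,k}}$ is orthogonal, the map $X\mapsto Xe^{rJ_{j,k}}$ is a measure-preserving linear bijection of $\Mat_\bR^{m,n}\cong\bR^{mn}$, so $R_r$ is likewise a one-parameter unitary group on $\cH^{\ox mn}$. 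It therefore suffices to show that the two groups have the same generator on a common core, and the natural choice is the finite-particle space (finite linear combinations of Hermite functions), on which every computation below is exact.

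First I would rewrite $\hat v_{j,k}$ in the function representation. From $\hat a_{i,j}f=(x_{i,j}f+\partial f/\partial x_{i,j})/\sqrt2$ one reads off $\hat a_{i,j}=(x_{i,j}+\partial_{x_{i,j}})/\sqrt2$, and taking adjoints $\hat a_{i,j}^*=(x_{i,j}-\partial_{x_{i,j}})/\sqrt2$. Substituting into $\hat v_{j,k}=\sum_{i=1}^m(\hat a_{i,k}^*\hat a_{i,j}-\hat a_{i,j}^*\hat a_{i,k})$ and expanding, the purely quadratic pieces $x_{i,k}x_{i,j}$ and $\partial_{x_{i,k}}\partial_{x_{i,j}}$ are symmetric in $(j,k)$ and cancel in the antisymmetric combination; since the lemma is nontrivial only for $j\ne k$, the operators $x_{i,j}$ and $\partial_{x_{i,k}}$ commute, and the surviving cross terms combine to the first-order operator
\[
	\hat v_{j,k}
	=
	\sum_{i=1}^m\bigl(x_{i,k}\,\partial_{x_{i,j}}-x_{i,j}\,\partial_{x_{i,k}}\bigr).
\]
(When $j=k$ one has $J_{j,k}=O_n$ and both sides of the lemma reduce to $f$, so that case is immediate.)

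Next I would compute the generator of $R_r$. Differentiating $f(Xe^{rJ_{j,k}})$ at $r=0$ gives $\sum_{i,l}(XJ_{j,k})_{i,l}\,\partial_{x_{i,l}}f$. From the explicit form of $J_{j,k}$ in (\ref{eq_j_j_k}), whose only nonzero entries are $(J_{j,k})_{j,k}=-1$ and $(J_{j,k})_{k,j}=1$, one finds $(XJ_{j,k})_{i,j}=x_{i,k}$, $(XJ_{j,k})_{i,k}=-x_{i,j}$, and $(XJ_{j,k})_{i,l}=0$ otherwise, so that $\frac{d}{dr}f(Xe^{rJ_{j,k}})\big|_{r=0}=\hat v_{j,k}f(X)$, exactly the operator isolated above. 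Differentiating at general $r$ and using the matrix group law $e^{(r+s)J_{j,k}}=e^{rJ_{j,k}}e^{sJ_{j,k}}$ then shows $\frac{d}{dr}R_rf=R_r(\hat v_{j,k}f)=\hat v_{j,k}(R_rf)$; that is, $R_r$ has generator $\hat v_{j,k}$. This is the function-representation counterpart of Lemma \ref{lem:r_right_rep}, which already says that conjugation by $e^{r\hat v_{j,k}}$ rotates the mode operators via $J_{j,k}$.

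Finally I would conclude: both $r\mapsto e^{r\hat v_{j,k}}f$ and $r\mapsto R_rf$ solve the same initial value problem $\partial_r g=\hat v_{j,k}g$, $g|_{r=0}=f$, and since their generators agree on a common core, the uniqueness part of Stone's theorem forces the two unitary groups to coincide, giving $e^{r\hat v_{j,k}}f(X)=f(Xe^{rJ_{j,k}})$. I expect the main difficulty to be not the algebra, which is the routine cancellation above, but the functional-analytic bookkeeping: one must ensure essential skew-adjointness of $\hat v_{j,k}$ on a suitable dense domain and justify the norm-differentiation of $R_rf$. These issues dissolve on the finite-particle space, because $\hat v_{j,k}=\sum_i(\hat a_{i,k}^*\hat a_{i,j}-\hat a_{i,j}^*\hat a_{i,k})$ preserves the total photon number and hence each finite-dimensional fixed-number subspace, so every finite-particle vector is analytic and the generator comparison is rigorous there.
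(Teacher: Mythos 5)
Your proof is correct, and its first half coincides exactly with the paper's: both start by writing $\hat a_{i,j}=(x_{i,j}+\partial_{x_{i,j}})/\sqrt2$, $\hat a_{i,j}^*=(x_{i,j}-\partial_{x_{i,j}})/\sqrt2$ and reducing $\hat v_{j,k}$ to the first-order vector field $\sum_{i=1}^m\bigl(x_{i,k}\partial_{x_{i,j}}-x_{i,j}\partial_{x_{i,k}}\bigr)$, which is precisely the derivative of $r\mapsto f(Xe^{rJ_{j,k}})$ at $r=0$. Where you diverge is in how the exponentiation is justified. The paper passes to polar coordinates $(s_{i,j,k},\varphi_{i,j,k})$ in each pair $(x_{i,j},x_{i,k})$, identifies $\hat v_{j,k}=-\sum_{i=1}^m\partial/\partial\varphi_{i,j,k}$ as a sum of angle derivatives, and exponentiates by the formal Taylor computation $e^{-t\partial/\partial\varphi}\varphi^n=(\varphi-t)^n$, reading off translation in angle, i.e.\ rotation of $X$ by $e^{rJ_{j,k}}$. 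You avoid polar coordinates entirely: you observe that $r\mapsto e^{r\hat v_{j,k}}$ and $r\mapsto R_r$ are both one-parameter unitary groups with the same generator, and you settle the identification by uniqueness (Stone's theorem), using the fact that $\hat v_{j,k}$ conserves total photon number so that the finite-particle space is an invariant core of analytic vectors on which the comparison reduces to finite-dimensional matrix exponentials. Your route is heavier but more rigorous at exactly the point where the paper is weakest --- its exponentiation step is a formal power-series manipulation on monomials in a periodic variable, applied to $L^2$ functions that are not polynomials in $\varphi$ --- whereas the paper's polar-coordinate argument is shorter, more concrete, and makes the geometric content (rotation in each $(x_{i,j},x_{i,k})$ plane) immediately visible. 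Your remark that the generator identity is the function-space counterpart of Lemma \ref{lem:r_right_rep} is also apt; either proof yields the lemma as stated.
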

\begin{proof}
Let $x_{i,j}$ 
be the $(i,j)$-th entry 
of $X$.
It holds that 
$\hat v_{j,k}
=\sum_{i=1}^m
(\hat a_{i,k}^*\hat a_{i,j}-\hat a_{i,j}^*\hat a_{i,k})
=\sum_{i=1}^m
(x_{i,k}\partial/\partial x_{i,j}-x_{i,j}\partial/\partial x_{i,k})$.
Let
$(s_{i,j,k},\varphi_{i,j,k})$
be a polar coordinate of
$(x_{i,j},x_{i,k})$
defined by
$x_{i,j}=s_{i,j,k}\cos\varphi_{i,j,k}$
and
$x_{i,k}=s_{i,j,k}\sin\varphi_{i,j,k}$.
Then,
it holds that
\[
	-\frac{\partial}{\partial\varphi_{i,j,k}}
	=
	-\frac{\partial x_{i,j}}{\partial\varphi_{i,j,k}}
	\frac{\partial}{\partial x_{i,j}}
	-\frac{\partial x_{i,k}}{\partial\varphi_{i,j,k}}
	\frac{\partial}{\partial x_{i,k}}
	=
	x_{i,k}\frac{\partial}{\partial x_{i,j}}
	-
	x_{i,j}\frac{\partial}{\partial x_{i,k}}
	,
\]
and so that 
$\hat v_{j,k}
=-\sum_{i=1}^m\partial/\partial\varphi_{i,j,k}$. 
As 
$
	e^{-t\partial/\partial\varphi}
	=
	\sum_{k=0}^\infty
	(-t\partial/\partial\varphi)^k/k!
$ holds,
it holds that 
$
	e^{-t\partial/\partial\varphi}
	\varphi^n
	=
	\sum_{k=0}^n
	\begin{pmatrix}n\\k\end{pmatrix}
	\varphi^{n-k}(-t)^k
	=
	(\varphi-t)^n
$.
Hence,
we obtain 
$\exp(r\hat v_{j,k})f(X)
=f({^t}[\exp(-rJ_{j,k})({^t}X)])
=f\big(X\exp(rJ_{j,k})\big)$.
\end{proof}

For $\hat L\in\cL(\cH^{\ox mn})$,
let 
$\Null(\hat L)
\subset\cH^{\ox mn}$
be the nullspace 
$\{f\in\cH^{\ox mn}\mid\hat Lf=0\}$.
The following lemma shows that 
$
	\bigcap_{k=1}^{n-1}
	\Null(\hat v_{k,n})
$
is the set of 
square-integrable MwRI functions.

\begin{lemma}\label{lem_nullspace}
It holds that 
$
	\bigcap_{k=1}^{n-1}
	\Null(\hat v_{k,n})
	=
	\bigcap_{A\in\Ant_\bR^n}
	\Null(\hat v_A)
$.
\end{lemma}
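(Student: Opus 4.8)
The plan is to prove that both intersections coincide with the space of square-integrable MwRI functions, the real content being that annihilation by the $n-1$ operators $\hat v_{k,n}$, $k\in\{1,\dots,n-1\}$, already forces annihilation by $\hat v_A$ for every $A\in\Ant_\bR^n$. The inclusion $\bigcap_{A\in\Ant_\bR^n}\Null(\hat v_A)\subseteq\bigcap_{k=1}^{n-1}\Null(\hat v_{k,n})$ is immediate, since each $\hat v_{k,n}=\hat v_{J_{k,n}}$ with $J_{k,n}\in\Ant_\bR^n$; so I would concentrate on the reverse inclusion.

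The structural fact I would exploit is that $\{J_{k,n}\}_{k=1}^{n-1}$ generates $\Ant_\bR^n$ as a Lie algebra. A direct computation using $(\ref{eq_j_j_k})$ gives $[J_{j,n},J_{k,n}]=\pm J_{j,k}$ for distinct $j,k<n$, so the brackets of the $J_{k,n}$ produce every elementary antisymmetric matrix $J_{j,k}$ with $j,k<n$; together with the $J_{k,n}$ themselves, these form a basis of $\Ant_\bR^n$. Fixing $f$ in the left-hand intersection, so that $\hat v_{J_{k,n}}f=0$ for all $k<n$, I would transport this to operators via Lemma \ref{lem:v_a_b_c}, which says $A\mapsto\hat v_A$ is a Lie algebra homomorphism: thus $\hat v_{J_{j,k}}=\pm[\hat v_{J_{j,n}},\hat v_{J_{k,n}}]=\pm(\hat v_{J_{j,n}}\hat v_{J_{k,n}}-\hat v_{J_{k,n}}\hat v_{J_{j,n}})$. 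Applying both terms to $f$ and using $\hat v_{J_{j,n}}f=\hat v_{J_{k,n}}f=0$ yields $\hat v_{J_{j,k}}f=0$ for every pair. Since $A\mapsto\hat v_A$ is linear and the $J_{j,k}$ span $\Ant_\bR^n$, writing $A=\sum_{j<k}c_{jk}J_{j,k}$ gives $\hat v_A f=\sum_{j<k}c_{jk}\hat v_{J_{j,k}}f=0$, i.e. $f\in\bigcap_A\Null(\hat v_A)$.

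The main obstacle is a domain issue: the $\hat v_A$ are unbounded (skew-adjoint) operators, so the operator identity of Lemma \ref{lem:v_a_b_c} and the products $\hat v_{J_{j,n}}\hat v_{J_{k,n}}f$ above a priori require $f$ to lie in a common invariant dense domain rather than in an arbitrary $L^2$ nullspace. I would remove this difficulty by passing to the integrated (group) form supplied by Lemma \ref{lem_v_j_k_f_x}. Each $\hat v_{j,k}$ is skew-adjoint and generates the unitary group $e^{r\hat v_{j,k}}f(X)=f(Xe^{rJ_{j,k}})$, so by Stone's theorem $f\in\Null(\hat v_{j,k})$ holds iff $f$ is invariant under the one-parameter rotation subgroup $\{e^{rJ_{j,k}}\}_r$. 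Consequently the left-hand side is exactly the set of functions invariant under the subgroup of $\SO(n)$ generated by $\{e^{rJ_{k,n}}:k<n,\ r\in\bR\}$, while the right-hand side is the set of $\SO(n)$-invariant functions, which (because $\exp:\Ant_\bR^n\to\SO(n)$ is surjective) is precisely the set of square-integrable MwRI functions.

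Finally I would close the argument by identifying these two invariance classes. Since the generators $J_{k,n}$ bracket-generate $\Ant_\bR^n$ by the computation above, and $\SO(n)$ is connected, the subgroup generated by the one-parameter subgroups $\{e^{rJ_{k,n}}\}$ is all of $\SO(n)$. Hence invariance under these subgroups is equivalent to invariance under all of $\SO(n)$, the two intersections coincide, and each equals the set of square-integrable MwRI functions, which is the assertion of the lemma.
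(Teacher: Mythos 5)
Your proposal is correct, and its algebraic core coincides with the paper's: the paper's proof also rests on Lemma \ref{lem:v_a_b_c}, using $[\hat v_{i,j},\hat v_{j,k}]=\hat v_{k,i}$ to show $\Null(\hat v_{i,j})\cap\Null(\hat v_{j,k})\subseteq\Null(\hat v_{k,i})$, so that the $n-1$ nullspaces already cut out all $\Null(\hat v_{i,j})$, and then invokes linearity of $A\mapsto\hat v_A$ on the basis $\{J_{j,k}\}$ of $\Ant_\bR^n$ exactly as you do. Where you genuinely depart from the paper is in the execution: the paper applies the commutator identity directly to a vector $f$ that is only assumed to lie in the nullspaces, i.e.\ it computes $\hat v_{k,i}f=\hat v_{i,j}\hat v_{j,k}f-\hat v_{j,k}\hat v_{i,j}f=0$ without addressing whether $f$ lies in the domains on which this unbounded-operator identity is valid --- precisely the gap you flag. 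Your integrated version closes it: by Lemma \ref{lem_v_j_k_f_x} and Stone's theorem, $f\in\Null(\hat v_{j,k})$ is equivalent to invariance of $f$ under the one-parameter group $\{e^{rJ_{j,k}}\}$, invariance is automatically preserved under taking the generated subgroup, and since $\{J_{k,n}\}$ bracket-generates $\Ant_\bR^n$ (your computation $[J_{j,n},J_{k,n}]=J_{j,k}$ is consistent with (\ref{eq_j_j_k})) and $\SO_\bR^n$ is connected, that subgroup is all of $\SO_\bR^n$. The trade-off: the paper's infinitesimal argument is shorter and self-contained within its matrix calculus, while yours is functional-analytically rigorous and moreover absorbs the content of Theorem \ref{th_radial} (identification with MwRI functions) into the same argument; the only small point to note is that characterizing the right-hand side as $\SO_\bR^n$-invariance uses $e^{r\hat v_A}f(X)=f(Xe^{rA})$ for general $A\in\Ant_\bR^n$, whereas Lemma \ref{lem_v_j_k_f_x} is stated only for $\hat v_{j,k}$; this extension follows by the same computation (or from the group generation itself), and the paper implicitly uses it later when defining $\dhat V_U$.
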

\begin{proof}
By Lemma \ref{lem:v_a_b_c},
it holds that 
$[\hat v_{i,j},\hat v_{j,k}]=\hat v_{k,i}$
for any
$i,j,k\in\{1,2,...,n\}$.
Hence, 
it holds that 
\[
	\Null(\hat v_{i,j})\cap\Null(\hat v_{j,k})
	=
	\Null(\hat v_{i,j})\cap\Null(\hat v_{j,k})\cap\Null(\hat v_{k,i})
	.
\]
Hence,
we have 
$
	\bigcap_{k=1}^{n-1}
	\Null(\hat v_{k,n})
	=
	\bigcap_{i=1}^n
	\allowbreak
	\bigcap_{j=1}^n
	\Null(\hat v_{i,j})
$.
Since 
$J_{j,k}\in\Ant_\bR^n$
holds,
$
	\bigcap_{i=1}^n
	\bigcap_{j=1}^n
	\Null(\hat v_{i,j})
	\supset
	\bigcap_{A\in\Ant_\bR^n}
	\Null(\hat v_A)
$ holds.
Since any $\hat v_A$
is a linear combination of 
$\hat v_{j,k}$,
we have 
$
	\bigcap_{i=1}^n
	\bigcap_{j=1}^n
	\Null(\hat v_{i,j})
	\subset
	\bigcap_{A\in\Ant_\bR^n}
	\Null(\hat v_A)
$.
Hence,
we obtain the statement.
\end{proof}

\donotdisplay
{
\begin{lemma}\label{lem_radial}
Let 
$\cK=\bigcap_{k=1}^{n-1}\Null(\hat v_{k,n})$.
If $m=1$,
then 
$\cK=\cR$.
\end{lemma}
\begin{proof}
By Lemma \ref{lem_nullspace},
it holds that 
$\cK=\bigcap_{k=1}^{n-1}\Null(\hat v_{k,k+1})$. 
Let 
$(r,s_1,s_2,...,s_{n-1})$
be a polar coordinate 
given by 
$r
=\sqrt{{^t}\vec x\vec x}$
and 
\[
	\vec x
	=
	r
	\begin{pmatrix}
	\cos s_1
	\\
	\cos s_2\sin s_1
	\\
	\vdots
	\\
	\cos s_{n-1}\prod_{k=1}^{n-2}\sin s_k
	\\
	\prod_{k=1}^{n-1}\sin s_k
	\end{pmatrix}
	.
\]
By Lemma 
\ref{lem_v_j_k_f_x}, 
$f
\in\Null(\hat v_{k,k+1})$ 
if and only if 
$f$ does not depend on 
$s_k$. 
Hence, 
$f\in\cK$ 
if and only if 
$f\in\cR$.
\end{proof}
}

\begin{proof}[{\bf Proof of Theorem \ref{th_radial}}]
Let 
$\hat T=\sum_{k=1}^{n-1}\hat v_{k,n}\hat v_{k,n}^*$. 
Then,
it holds that 
$
	\Null(\hat T)
	=
	\bigcap_{k=1}^{n-1}
	\Null(\hat v_{k,n}\hat v_{k,n}^*)
$.
Since 
$\Null(\hat v_{k,n}\hat v_{k,n}^*)
=\Null(\hat v_{k,n}^*\hat v_{k,n})
=\Null(\hat v_{k,n})$
holds,
we have 
$
	\bigcap_{k=1}^{n-1}
	\Null(\hat v_{k,n}\hat v_{k,n}^*)
	=
	\bigcap_{k=1}^{n-1}
	\Null(\hat v_{k,n})
$.
By Lemma 
\ref{lem_nullspace},
$\Null(\hat T)$
is the set of 
square-integrable MwRI functions.
Since 
$\hat T_\inv=\hat R^*\hat T\hat R$
holds,
we have 
$\cK_0
=\Null(\hat T_\inv)
=\hat R^*\Null(\hat T)$.
By 
$\hat R
=\hat R_{n-1}\hat R_{n-2}\cdots\hat R_1$,
and $R_k
=\exp(
	\arctan(\sqrt k)
	\hat v_{k,k+1}
)$,
we have 
$\hat R^*\Null(\hat T)
=\Null(\hat T)$. 
\end{proof}

\subsection{Actions of $\hat u_A$ and $\hat v_B$ on $\vec\theta$}\label{sec_rep}

For $\theta\in\mathbb C$, 
the displacement operator 
$\hat D_\theta\in\cU(\cH)$ 
is defined by 
$\hat D_\theta
=\exp(\hat d_\theta)$,
where
$\hat d_\theta
=\theta\hat a^*-\bar\theta\hat a$.
It holds that
\begin{align}
	\hat D_\theta^*\hat a\hat D_\theta
	=\hat a+\theta\hat I
	\mbox{ and }
	\hat D_\theta^*\hat a^*\hat D_\theta
	=\hat a^*+\bar\theta\hat I
	\label{eq_displace}
\end{align}
because of 
$[\hat a,\hat d_\theta]=\theta\hat I$,
and 
$[\hat a^*,\hat d_\theta]=\bar\theta\hat I$,
respectively.
Using Taylor expansion,
for $r\in\mathbb R$,
we have 
$\hat D_rf(x)
=\exp(-ir\sqrt2\hat p)f(x)
=f(x-\sqrt2r)$.
By Baker-Hausdorff formula
\cite{leonhardt},
it holds that
%$D_{r+i s}=e^{-irs}D_{i s}D_r$
$\hat D_{r+is}
=e^{-i r s}e^{i\sqrt2s\hat q}e^{-i\sqrt2r\hat p}$
for $r,s\in\mathbb R$.
Hence, 
we have
\begin{align}
	\hat D_{r+is}\ket0(x)
	=
	\frac{1}{\pi^{1/4}}
	\exp\Big(
	-irs
	-\frac{(x-\sqrt2r)^2}{2}
	+i\sqrt{2}sx
	\Big)
	=
	\ket{r+is}(x)
	.
	\label{eq_d_ris}
\end{align}
For 
$\vec z
={^t}(z_1,z_2,...,z_m)
\in\Mat_\bC^{m,1}$,
let 
$\Ket{\vec z}
=\ket{z_1}\ox\ket{z_2}\ox\cdots\ox\ket{z_m}
\in\cH^{\ox m}$.
For 
$Z
=(\vec Z_1,\vec Z_2,...,\vec Z_n)
\in\Mat_\bC^{m,n}$,
let 
$\ket Z=\ket{\vec Z_1}\ox\ket{\vec Z_2}\ox\cdots\ox\ket{\vec Z_n}\in\cH^{\ox mn}$.
For $W\in\Mat_\bC^{m,n}$,
let 
\[
	\hat d_W
	=
	\tr_m[W\tilde a_{m,n}^*]
	-\tr_m[\tilde a_{m,n}W^*]
	,
\]
and 
define 
a displacement operator 
$\hat D_W\in\cU(\cH^{\ox mn})$
by 
$
	\hat D_W=\exp(\hat d_W)
$.
By Eq. 
(\ref{eq_d_ris}),
it holds that 
$\hat D_W\Ket{O_{m,n}}=\ket W$,
where 
$O_{m,n}\in\Mat_\bC^{m,n}$
is the zero matrix.
For 
$A\in\Ant_\bC^m$,
let 
$\hat U_A
=\exp(\hat u_A)
\in\cU(\cH^{\ox mn})$.
For $B\in\Ant_\bC^n$, 
let 
$\hat V_B
=\exp(\hat v_B)
\in\cU(\cH^{\ox mn})$.

\begin{lemma}\label{lem_a_b_z}
For any $A\in\Ant_\bC^m$, 
for any $B\in\Ant_\bC^n$ 
and 
for any $Z\in\Mat_\bC^{m,n}$, 
it holds that 
\[
	\hat U_A\hat V_B\ket Z
	=
	\hat V_B\hat U_A\ket Z
	=
	\Ket{e^AZe^{-\bar B}}
	.
\]
\end{lemma}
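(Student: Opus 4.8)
The plan is to reduce the claim to a single conjugation identity for the displacement operator, exploiting that $\hat U_A$ and $\hat V_B$ fix the vacuum. By Eq.~(\ref{eq_ket_zero}) we have $\hat u_A\ket0^{\ox mn}=\hat v_B\ket0^{\ox mn}=0$, hence $\hat U_A\Ket{O_{m,n}}=\hat V_B\Ket{O_{m,n}}=\Ket{O_{m,n}}$, where $\Ket{O_{m,n}}=\ket0^{\ox mn}$. The first equality $\hat U_A\hat V_B\ket Z=\hat V_B\hat U_A\ket Z$ is immediate from Lemma~\ref{lem_u_a_v_b_commute}, which gives $[\hat u_A,\hat v_B]=0$ and therefore $\hat U_A\hat V_B=\hat V_B\hat U_A$. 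Writing $\hat W=\hat U_A\hat V_B$ and recalling $\ket Z=\hat D_Z\Ket{O_{m,n}}$, it then suffices to prove $\hat W\hat D_Z\hat W^*=\hat D_{e^AZe^{-\bar B}}$; for then $\hat W\ket Z=(\hat W\hat D_Z\hat W^*)\,\hat W\Ket{O_{m,n}}=\hat D_{e^AZe^{-\bar B}}\Ket{O_{m,n}}=\Ket{e^AZe^{-\bar B}}$, the last step by the established identity $\hat D_W\Ket{O_{m,n}}=\ket W$.

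To obtain the conjugation identity I would first conjugate the annihilation matrix $\tilde a_{m,n}$ and the creation matrix ${^t}\tilde a_{m,n}^*$. From the series $e^{X}Ye^{-X}=\sum_{k\ge0}(k!)^{-1}(\mathrm{ad}_X)^{k}Y$ with $\mathrm{ad}_X Y=[X,Y]$, the relations $[\tilde a_{m,n},\hat u_A]=A\tilde a_{m,n}$ and $[{^t}\tilde a_{m,n}^*,\hat u_A]=\bar A({^t}\tilde a_{m,n}^*)$ of Lemma~\ref{lem_u_A} are eigen-relations for $\mathrm{ad}_{\hat u_A}$, giving $\hat U_A\tilde a_{m,n}\hat U_A^*=e^{-A}\tilde a_{m,n}$ and $\hat U_A({^t}\tilde a_{m,n}^*)\hat U_A^*=e^{-\bar A}({^t}\tilde a_{m,n}^*)$. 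Likewise, Lemma~\ref{lem:c_right_rep} applied with its $A$ set equal to $B$ yields $[\tilde a_{m,n},\hat v_B]=-\tilde a_{m,n}\bar B$ and $[{^t}\tilde a_{m,n}^*,\hat v_B]=-{^t}\tilde a_{m,n}^*B$, hence $\hat V_B\tilde a_{m,n}\hat V_B^*=\tilde a_{m,n}e^{\bar B}$ and $\hat V_B({^t}\tilde a_{m,n}^*)\hat V_B^*=({^t}\tilde a_{m,n}^*)e^{B}$. Since the scalar matrices $e^{\pm A}$ commute with $\hat V_B$ and $e^{\pm\bar B}$ with $\hat U_A$, composing the two gives $\hat W\tilde a_{m,n}\hat W^*=e^{-A}\tilde a_{m,n}e^{\bar B}$ and $\hat W({^t}\tilde a_{m,n}^*)\hat W^*=e^{-\bar A}({^t}\tilde a_{m,n}^*)e^{B}$.

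Next I would substitute these into the generator, written in index form as $\hat d_Z=\sum_{i,j}(Z_{i,j}\hat a_{i,j}^*-\bar Z_{i,j}\hat a_{i,j})$, which is the entrywise reading of $\hat d_Z=\tr_m[Z\tilde a_{m,n}^*]-\tr_m[\tilde a_{m,n}Z^*]$. Reassembling the transformed coefficients, the creation operators acquire the coefficient matrix ${^t}e^{-\bar A}\,Z\,{^t}e^{B}$ and the annihilation operators the matrix ${^t}e^{-A}\,\bar Z\,{^t}e^{\bar B}$. The decisive simplification is anti-hermiticity: $A^*=-A$ gives ${^t}e^{-\bar A}=e^{-A^*}=e^{A}$ and ${^t}e^{-A}=e^{\bar A}$, while $B^*=-B$ gives ${^t}e^{B}=e^{-\bar B}$ and ${^t}e^{\bar B}=e^{-B}$. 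Hence the creation coefficient collapses to $e^{A}Ze^{-\bar B}=:W$ and the annihilation coefficient to $e^{\bar A}\bar Ze^{-B}=\overline{e^{A}Ze^{-\bar B}}=\bar W$, so $\hat W\hat d_Z\hat W^*=\hat d_{W}$, and exponentiating gives $\hat W\hat D_Z\hat W^*=\hat D_{W}$, which closes the argument.

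The step requiring the most care is the bookkeeping in the third paragraph: tracking how transposition and complex conjugation act on $e^{\pm A}$ and $e^{\pm B}$ when one rewrites $\tr_m[Z\tilde a_{m,n}^*]$ and $\tr_m[\tilde a_{m,n}Z^*]$ entrywise and then folds the transformed coefficients back into matrix products. The identities $A^*=-A$ and $B^*=-B$ are exactly what convert the transposed exponentials into the clean factors $e^{A}$ on the left and $e^{-\bar B}$ on the right; without them the two coefficient matrices would not coincide with $W$ and $\bar W$, and the conjugated generator would fail to be of displacement type.
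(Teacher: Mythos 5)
Your proof is correct and takes essentially the same route as the paper's: commute $\hat U_A$ and $\hat V_B$ via Lemma \ref{lem_u_a_v_b_commute}, reduce to the conjugation identity for the displacement generator using Lemmas \ref{lem_u_A} and \ref{lem:c_right_rep}, and finish with the vacuum-fixing property from Eq. (\ref{eq_ket_zero}). The only difference is that you carry out in full the transposition/conjugation bookkeeping behind $\hat U_A^*\hat V_B^*\hat d_W\hat V_B\hat U_A=\hat d_{e^{-A}We^{\bar B}}$, which the paper asserts without detail, and your computation (including the use of $A^*=-A$, $B^*=-B$ to identify the coefficient matrices as $W$ and $\bar W$) checks out.
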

\begin{proof}
By Lemma
\ref{lem_u_a_v_b_commute},
it holds that 
$\hat U_A\hat V_B=\hat V_B\hat U_A$.
Hence, 
we just need to prove 
$
	\hat U_A\hat V_B\ket Z
	=
	\Ket{e^AZe^{-\bar B}}
$.
By Lemmas 
\ref{lem_u_A}
and 
\ref{lem:c_right_rep},
it holds that 
$\hat U_A^*
	\hat V_B^*
	\hat d_W
	\hat V_B
	\hat U_A
	=
	\hat d_{e^{-A}We^{\bar B}}
$.
%Because of 
%$\hat a_{i,j}\ket0=0$,
By Eq (\ref{eq_ket_zero}),
it holds that 
$\hat U_A\hat V_B\Ket{O_{m,n}}
=\Ket{O_{m,n}}$.
Hence, 
\[
	\hat U_A\hat V_B\ket Z
	=
	\hat U_A\hat V_B\hat D_Z\hat V_B^*\hat U_A^*
	\hat U_A\hat V_B\Ket{O_{m,n}}
	=
	\hat D_{e^AZe^{-\bar B}}\Ket{O_{m,n}}
	=
	\Ket{e^AZe^{-\bar B}}
\]
is obtained.
\end{proof}

For 
$Z
=(\vec Z_1,\vec Z_2,...,\vec Z_n)
\in\Mat_\bC^{m,n}$,
and 
for $N\ge0$,
let 
$\hat\rho_{Z,N}
\in\cS(\cH^{\ox mn})$
be 
$
	\hat\rho_{\vec Z_1,N}\ox
	\hat\rho_{\vec Z_2,N}\ox
	\cdots\ox
	\hat\rho_{\vec Z_n,N}
$.

\begin{lemma}\label{lem_g_a_b_z}
For any $A\in\Ant_\bC^m$,
for any $B\in\Ant_\bC^n$,
for any $Z\in\Mat_\bC^{m,n}$
and 
for any $N\ge0$,
it holds that 
\[
	\hat U_A\hat V_B\hat\rho_{Z,N}\hat V_B^*\hat U_A^*
	=
	\hat V_B\hat U_A\hat\rho_{Z,N}\hat U_A^*\hat V_B^*
	=
	\hat\rho_{e^AZe^{-\bar B},N}
	.
\]
\end{lemma}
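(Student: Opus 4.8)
The plan is to reduce everything to the action on coherent vectors already recorded in Lemma~\ref{lem_a_b_z}. The first equality, $\hat U_A\hat V_B\hat\rho_{Z,N}\hat V_B^*\hat U_A^*=\hat V_B\hat U_A\hat\rho_{Z,N}\hat U_A^*\hat V_B^*$, is immediate from Lemma~\ref{lem_u_a_v_b_commute}: since $[\hat u_A,\hat v_B]=0$, the exponentials satisfy $\hat U_A\hat V_B=\hat V_B\hat U_A$. It therefore suffices to prove $\hat U_A\hat V_B\hat\rho_{Z,N}\hat V_B^*\hat U_A^*=\hat\rho_{e^AZe^{-\bar B},N}$.

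For the pure case $N=0$ I would argue directly: $\hat\rho_{Z,0}=\ket Z\bra Z$, so conjugation gives $(\hat U_A\hat V_B\ket Z)(\hat U_A\hat V_B\ket Z)^*$, which by Lemma~\ref{lem_a_b_z} equals $\Ket{e^AZe^{-\bar B}}\Bra{e^AZe^{-\bar B}}=\hat\rho_{e^AZe^{-\bar B},0}$. For $N>0$ I would first tensor the single-mode definition of $\hat\rho_{\theta,N}$ over all $mn$ modes to obtain the integral representation
\[
	\hat\rho_{Z,N}
	=
	\frac{1}{(\pi N)^{mn}}
	\int_{\Mat_\bC^{m,n}}
	e^{-\|W-Z\|_F^2/N}
	\ket W\bra W
	\,dW
	,
\]
where $\|X\|_F^2=\sum_{i,j}|X_{i,j}|^2$ and $dW$ is Lebesgue measure on $\Mat_\bC^{m,n}\cong\bR^{2mn}$. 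Since $\hat U_A\hat V_B$ is unitary, conjugation passes inside the integral, and Lemma~\ref{lem_a_b_z} replaces each projector $\ket W\bra W$ by $\Ket{e^AWe^{-\bar B}}\Bra{e^AWe^{-\bar B}}$.

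The decisive step is then the substitution $V=e^AWe^{-\bar B}$. Here $e^A$ is unitary because $A=-A^*$, and $e^{-\bar B}$ is unitary because $B=-B^*$ forces $\bar B=-{^t}B$ to be anti-hermitian as well. Bilateral multiplication by unitaries preserves the (Hermitian) Frobenius inner product, hence also its real part, so this real-linear map on $\bR^{2mn}$ is orthogonal; its Jacobian thus has absolute value one and $dW=dV$, while $\|W-Z\|_F=\|e^A(W-Z)e^{-\bar B}\|_F=\|V-e^AZe^{-\bar B}\|_F$. After the substitution the integral is exactly the representation of $\hat\rho_{e^AZe^{-\bar B},N}$, which finishes the proof. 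The only place needing genuine care is this change of variables, specifically verifying that $\bar B$ is anti-hermitian and that bilateral unitary multiplication preserves both the Frobenius norm and Lebesgue measure; every other step is a direct appeal to the lemmas already established.
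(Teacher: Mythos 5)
Your proof is correct and takes essentially the same route as the paper: the first equality from Lemma~\ref{lem_u_a_v_b_commute}, the pure case directly from Lemma~\ref{lem_a_b_z}, and the mixed case by conjugating inside the Gaussian integral representation and substituting $V=e^AWe^{-\bar B}$, with the exponent handled by unitary invariance of the norm exactly as in the paper. The one point of divergence is the Jacobian: the paper constructs the real representation $E\in\Ant_\bR^{2mn}$ of $B\ox I_m+I_n\ox A$ via a Kronecker product and computes $\det[e^E]=\exp(\tr_{2mn}[E])=1$, whereas you deduce $|\det|=1$ from the observation that bilateral multiplication by the unitaries $e^A$ and $e^{-\bar B}$ is a Frobenius isometry and hence orthogonal as a real-linear map --- a slicker verification that suffices equally well, since only the absolute value of the determinant enters the change of variables.
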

\begin{proof}
Let 
$
	\hat\sigma
	=
	\hat U_A\hat V_B
	\hat\rho_{Z,N}
	\hat V_B^*\hat U_A^*
$.
In the case of 
$N=0$,
by Lemma \ref{lem_a_b_z},
we obtain 
$\hat\sigma
=\hat\rho_{e^AZe^{-\bar B},0}$.
Consider 
the case of 
$N>0$.
By Lemma
\ref{lem_u_a_v_b_commute},
it holds that 
$
	\hat\sigma
	=
	\hat V_B\hat U_A
	\hat\rho_{Z,N}
	\hat U_A^*\hat V_B^*
$.
By Lemma \ref{lem_a_b_z},
it holds that 
\begin{align*}
	&
	\hat\sigma
	=
	\left.\overbrace{\int\cdots\int}^{2mn}\right._{\bR^{2mn}}
	\frac
	{e^{-\|W-Z\|^2/N}}
	{(\pi N)^{mn}}
	\Ket{e^AWe^{-\bar B}}
	\Bra{e^AWe^{-\bar B}}
	\prod_{i=1}^m
	\prod_{j=1}^n
	du_{i,j}
	dv_{i,j}
	,
\end{align*}
where
$\|W-Z\|^2
=\tr_m[(W-Z)(W^*-Z^*)]$, 
and where 
$u_{i,j}$
and 
$v_{i,j}$
are 
the $(i,j)$-th entries of  
$\re(W)$ and $\im(W)$,
respectively. 
Replace 
$\re(e^AWe^{-\bar B})$ 
by 
$R\in\Mat_\bR^{m,n}$,
and replace 
$\im(e^AWe^{-\bar B})$ 
by 
$S\in\Mat_\bR^{m,n}$.
Let 
$\vec u_j,\vec v_j,\vec r_j$ and $\vec s_j
\in\Mat_\bR^{m,1}$
be 
$j$-th column vectors of 
%$\re(e^AWe^{-\bar B})$,
$\re(W)$,
%$\im(e^AWe^{-\bar B})$,
$\im(W)$,
$R$ and $S$,
respectively.
Define 
$\vec u,\vec v,\vec r$ and $\vec s
\in\Mat_\bR^{mn,1}$
by
\[
	\vec u
	=
	\begin{pmatrix}
	\vec u_1\\\vdots\\\vec u_n
	\end{pmatrix}
	,\
	\vec v
	=
	\begin{pmatrix}
	\vec v_1\\\vdots\\\vec v_n
	\end{pmatrix}
	,\
	\vec r
	=
	\begin{pmatrix}
	\vec r_1\\\vdots\\\vec r_n
	\end{pmatrix}
	\mbox{ and }
	\vec s
	=
	\begin{pmatrix}
	\vec s_1\\\vdots\\\vec s_n
	\end{pmatrix}
	,
\]
respectively.
Let 
$r_{i,j}$ and $s_{i,j}$ 
be 
the $(i,j)$-th entries of 
$R$ and $S$,
respectively. 
Then, 
it holds that 
$\vec r+\sqrt{-1}\vec s
=C(\vec u+\sqrt{-1}\vec v)$,
where 
$C\in\Mat_\bC^{mn}$
is a unitary matrix 
given by 
the Kronecker product 
of $e^B$ and $e^A$ 
as 
$C=e^B\ox e^A$.
Let 
$D=B\ox I_m+I_n\ox A
\in\Ant_\bC^{mn}$.
Then, 
it holds that 
$C=e^D$. 
Define 
$\vec w,\vec t\in\Mat_\bR^{2mn,1}$
and 
$E\in\Ant_\bR^{2mn}$ by
\[
	\vec w
	=
	\begin{pmatrix}
	\vec u\\\vec v
	\end{pmatrix}
	,\
	\vec t
	=
	\begin{pmatrix}
	\vec r\\\vec s
	\end{pmatrix}
	\mbox{ and }
	E
	=
	\begin{pmatrix}
	\re(D)&-\im(D)\\
	\im(D)&\re(D)
	\end{pmatrix}
	,
\]
respectively.
Then, 
it holds that 
$\vec t=e^E\vec w$
and that 
$\det[e^E]=\exp(\tr_{2mn}[E])=1$.
Hence, 
the Jacobian of the replacement 
of 
$e^E\vec w$
by 
$\vec t$
is one.
Hence, 
we have 
\[
	\hat\sigma
	=
	\left.\overbrace{\int\cdots\int}^{2mn}\right._{\bR^{2mn}}
	\frac
	{e^{-\|e^{-A}Te^{\bar B}-Z\|^2/N}}
	{(\pi N)^{mn}}
	\ket T\bra T
	\prod_{i=1}^m
	\prod_{j=1}^n
	dr_{i,j}
	ds_{i,j}
	,
\]
where 
$T=e^AWe^{-\bar B}$.
Since
\begin{align*}
	&
	\|e^{-A}Te^{\bar B}-Z\|^2
	=
	\tr_m[(e^{-A}Te^{\bar B}-Z)(e^{-A}Te^{\bar B}-Z)^*]
	\\
	&=
	\tr_m[(T-e^AZe^{-\bar B})(T-e^AZe^{-\bar B})^*]
	=
	\|T-e^AZe^{-\bar B}\|^2
\end{align*}
holds,
we obtain 
$\sigma
=\hat\rho_{e^AZe^{-\bar B},N}$.
\end{proof}

%\subsection{On rotations of Gaussian states}

%
%
%
\donotdisplay
{
For any 
$m\in\bN$,
and
for any 
$\vec\alpha
={^t}(\alpha_1,\alpha_2,...,\alpha_m)
\in\Mat_\bC^{m,1}$,
define 
$\Ket{\vec\alpha}
\in\cH^{\ox m}$
by 
$\Ket{\vec\alpha}
=\ket{\alpha_1}\ox\ket{\alpha_2}\ox\cdots\ox\ket{\alpha_m}$.
Then, 
for 
$\vec x
\in\Mat_\bR^{m,1}$,
it holds that 
\begin{align}
	\Ket{\vec\alpha}(\vec x)
	=
	\exp\Big[
	-\sqrt{-1}{^t}\vec\xi\vec\eta
	-\frac{{^t}(\vec x-\sqrt2\vec\xi)(\vec x-\sqrt2\vec\xi)}{2}
	+\sqrt{-2}{^t}\vec x\vec\eta
	\Big]
	,
	\label{eq:m_coh}
\end{align}
where
$\vec\xi
=\re(\vec\alpha)
\in\Mat_\bR^{m,1}$,
and where 
$\vec\eta
=\im(\vec\alpha)
\in\Mat_\bR^{m,1}$,.
For any 
$\vec\alpha,\vec\beta
\in\Mat_\bC^{m,1}$,
it holds that 
\begin{align}
	\langle\vec\alpha\mid\vec\beta\rangle
	=
	e^{-\|\vec\alpha\|^2/2
	-\|\vec\beta\|^2/2
	+(\vec\alpha^*)\vec\beta}
	.
	\label{eq:coh_inner_prod}
\end{align}
By Eq.
(\ref{eq:m_coh}),
it holds that 
$\Ket{\vec\alpha}(U\vec x)
=\Ket{{^t}U\vec\alpha}$,
where
$U$ is any real orthogonal matrix.
By Eq.
(\ref{eq:hat_v_i_j_f_vec_x}),
it holds that
\begin{align}
	e^{t\hat v_{i,j}}\Ket{\vec\alpha}
	=e^{t\hat v_{J_{i,j}}}\Ket{\vec\alpha}
	=\Ket{\vec\alpha}(e^{-tJ_{i,j}}\vec x)
	=\Ket{e^{tJ_{i,j}}\vec\alpha}
	.
	\label{eq_ket_vec_alpha}
\end{align}
Let 
$\hat\rho_{\vec\alpha,N}
=\hat\rho_{\alpha_1,N}\ox\hat\rho_{\alpha_2,N}\ox\cdots\ox\hat\rho_{\alpha_n,N}$.
Then, 
for $\vec\theta
\in\Mat_\bC^{n,1}$,
it holds that 
\[
	\hat\rho_{\vec\theta,N}
	=
	\frac{1}{(\pi N)^n}
	\left.\overbrace{\int\cdots\int}^{2n}\right._{\bR^{2n}}
	e^{-\|\vec\alpha-\vec\theta\|^2/N}
	\Ket{\vec\alpha}\Bra{\vec\alpha}
	\prod_{k=1}^n
	d\xi_kd\eta_k
	,
\]
where
$\xi_k=\re(\alpha_k)$
and where 
$\eta_k=\im(\alpha_k)$.
It holds that 
\begin{align}
	\exp(t\hat v_{i,j})
	\hat\rho_{\vec\theta,N}
	\exp(-t\hat v_{i,j})
	=
	\hat\rho_{\vec\zeta,N}
	,
	\mbox{ where }
	\vec\zeta
	=
	\exp(tJ_{i,j})\vec\theta
	.
	\label{eq:hat_v_i_j_hat_rho}
\end{align}
}

\donotdisplay
{
\begin{lemma}\label{lem:abs_theta}
For any $\vec\theta\in\Mat_\bC^{m,1}$, 
and 
for any $A\in\Ant_\bC^n$, 
it holds that 
$\Tr[\hat\rho_{\vec\theta,N}^{\ox n}\hat v_A]
=\Tr[\hat\rho_{\|\vec\theta\|\vec\nu,N}^{\ox n}\hat v_A]$, 
where 
$\vec\nu\in\Mat_\bC^{m,1}$ 
is an arbitrary unit vector. 
\end{lemma}
\begin{proof}
Let $B\in\Ant_\bC^m$ 
be a solution to the equation 
$e^B\vec\theta=\|\vec\theta\|\vec\nu$. 
Then, 
by Lemma \ref{lem_g_a_b_z},
it holds that 
$
	\exp(\hat u_B)
	\hat\rho_{\vec\theta,N}^{\ox n}
	\exp(\hat u_B)^*
	=
	\hat\rho_{\|\vec\theta\|\vec\nu,N}^{\ox n}
$.
By Lemma
\ref{lem_u_a_v_b_commute},
$\hat u_B$
commutes with
$\hat v_A$.
Hence,
we have
$\Tr[\hat\rho_{\vec\theta,N}^{\ox n}\hat v_A]
=\Tr[\exp(\hat u_B)\hat\rho_{|\theta|,N}^{\ox n}\exp(\hat u_B)^*\hat v_A]
\allowbreak
=\Tr[\hat\rho_{\|\vec\theta\|\vec\nu,N}^{\ox n}\hat v_A]$.
\end{proof}
}

\begin{lemma}\label{lem:dariano}
For any $\vec\theta\in\Mat_\bC^{m,1}$, 
it holds that
$\hat R
\hat\rho_{\vec\theta,N}^{\ox n}\hat R^*
=\hat\rho_{\vec0_m,N}^{\ox(n-1)}
\ox\hat\rho_{\sqrt n\vec\theta,N}$.
\end{lemma}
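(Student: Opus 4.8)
The plan is to pass to the matrix-column notation $\hat\rho_{Z,N}$ introduced just above the statement and let the right-multiplication rule of Lemma~\ref{lem_g_a_b_z} carry the computation. Writing $Z_0=(\vec\theta,\vec\theta,\dots,\vec\theta)\in\Mat_\bC^{m,n}$, the definition of $\hat\rho_{Z,N}$ gives $\hat\rho_{\vec\theta,N}^{\ox n}=\hat\rho_{Z_0,N}$, so it suffices to evaluate $\hat R\hat\rho_{Z_0,N}\hat R^*$.

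First I would put each rotation factor into the $\hat V_B$ form. Since $\hat v_{k,k+1}=\hat v_{J_{k,k+1}}$ and $\hat v_A$ is linear in $A$, we have $\hat r_k=\hat v_{B_k}$ with $B_k=\arctan(\sqrt k)\,J_{k,k+1}\in\Ant_\bR^n$, whence $\hat R_k=\hat V_{B_k}$. As $B_k$ is real we have $\bar B_k=B_k$, and Lemma~\ref{lem_g_a_b_z} applied with $A=O_m$ (so that $\hat U_{O_m}=\hat I$) yields $\hat R_k\hat\rho_{Z,N}\hat R_k^*=\hat\rho_{Ze^{-B_k},N}$. Peeling the factors of $\hat R=\hat R_{n-1}\cdots\hat R_1$ off from the inside then gives
\[
	\hat R\hat\rho_{Z_0,N}\hat R^*
	=\hat\rho_{Z_0M,N},
	\qquad
	M=e^{-B_1}e^{-B_2}\cdots e^{-B_{n-1}}.
\]

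The computational heart is to show $Z_0M=(\vec0_m,\dots,\vec0_m,\sqrt n\,\vec\theta)$, which by the definition of $\hat\rho_{Z,N}$ is exactly $\hat\rho_{\vec0_m,N}^{\ox(n-1)}\ox\hat\rho_{\sqrt n\vec\theta,N}$. Writing $Z_0=\vec\theta\,{^t}\vec 1_n$ with $\vec 1_n={^t}(1,\dots,1)\in\Mat_\bR^{n,1}$, this reduces to evaluating the row vector ${^t}\vec 1_nM$. Each factor $e^{-B_k}$ is the identity off the coordinates $k,k+1$, where, using $\cos\arctan\sqrt k=(k+1)^{-1/2}$ and $\sin\arctan\sqrt k=\sqrt k\,(k+1)^{-1/2}$, it equals the block
\[
	\frac{1}{\sqrt{k+1}}
	\begin{pmatrix}
	1&\sqrt k\\
	-\sqrt k&1
	\end{pmatrix}.
\]

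I would finish by an induction on $k$ showing ${^t}\vec 1_n\,e^{-B_1}\cdots e^{-B_k}=(0,\dots,0,\sqrt{k+1},1,\dots,1)$ with the entry $\sqrt{k+1}$ in position $k+1$; the inductive step is the two-coordinate check that position $k$ becomes $\sqrt k(k+1)^{-1/2}-\sqrt k(k+1)^{-1/2}=0$ while position $k+1$ becomes $k(k+1)^{-1/2}+(k+1)^{-1/2}=\sqrt{k+1}$. Taking $k=n-1$ gives ${^t}\vec 1_nM=(0,\dots,0,\sqrt n)$, hence $Z_0M=(\vec0_m,\dots,\vec0_m,\sqrt n\vec\theta)$, completing the proof. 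The matrix $M$ is simply the Helmert transformation that rotates the all-ones direction onto the last axis, so the only real obstacle is bookkeeping: tracking the order in which the $\hat R_k$ act and respecting the right-multiplication convention $Ze^{-B_k}$ coming from Lemma~\ref{lem_g_a_b_z}. Once those are pinned down the induction is immediate.
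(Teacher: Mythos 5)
Your proposal is correct and takes essentially the same route as the paper: both pass to the notation $\hat\rho_{Z,N}$ with $Z_0=\vec\theta\,({^t}\vec1_n)$, use Lemma \ref{lem_g_a_b_z} to turn conjugation by $\hat R$ into right-multiplication by the product of Helmert rotations, and reduce the claim to showing that this product carries ${^t}\vec1_n$ to $\sqrt n\,{^t}\vec e_n$. The only (harmless) difference is that you verify the vector identity by a direct induction on ${^t}\vec1_n\,e^{-B_1}\cdots e^{-B_k}$, whereas the paper recursively tracks the rows of $R_k R_{k-1}\cdots R_1$ and then invokes orthogonality of $R$ to conclude $R\vec1_n=\sqrt n\,\vec e_n$.
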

\begin{proof}
Let $\vec1_n\in\Mat_\bR^{n,1}$ 
be a vector whose entries are all one,
and let 
$\vec e_n\in\Mat_\bR^{n,1}$
be a unit vector whose $n$-th entry is one.
It holds that 
$\hat\rho_{\vec\theta,N}^{\ox n}=\hat\rho_{\vec\theta({^t}\vec1_n),N}$
and that 
$\hat\rho_{\vec0_m,N}^{\ox(n-1)}\ox\hat\rho_{\sqrt n\vec\theta,N}
=\hat\rho_{\sqrt n\vec\theta({^t}\vec e_n),N}$. 
Let 
$R_k
=\exp[(\arctan\sqrt k)J_{k,k+1}]
\in\Mat_\bR^n$,
and 
let 
$R=R_{n-1}R_{n-2}\cdots R_1
\in\Mat_\bR^n$.
%By Eq.
%(\ref{eq:hat_v_i_j_hat_rho}),
By Lemma \ref{lem_g_a_b_z},
it holds that
$\hat R
\hat\rho_{\vec\theta,N}^{\ox n}
\hat R^*
=\hat\rho_{\vec\theta[{^t}(R\vec1_n)],N}$. 
If $t\in[0,\pi/2)$,
then
$\cos t=1/\sqrt{1+\tan^2t}$
and
$\sin t=\tan t/\sqrt{1+\tan^2t}$
hold.
Hence,
$\cos(\arctan\sqrt k)=1/\sqrt{k+1}$
and
$\sin(\arctan\sqrt k)=\sqrt k/\sqrt{k+1}$
hold.
Hence,
we have 
\[
	R_k
	=
	\frac{1}{\sqrt{k+1}}
	\begin{pmatrix}
	I_{k-1}\\
	& 1 & -\sqrt k \\
	& \sqrt k & 1 \\
	& & & I_{n-k-1}
	\end{pmatrix}
	.
\]
The $k$-th and $k+1$-th row vectors
of $R_kR_{k-1}\cdots R_1$
are recursively obtained 
by calculating 
the $2$-by-$2$ submatrix as 
\[
	\begin{pmatrix}
	\frac{1}{\sqrt{k+1}} & -\frac{\sqrt k}{\sqrt{k+1}}\\
	\frac{\sqrt k}{\sqrt{k+1}} & \frac{1}{\sqrt{k+1}}
	\end{pmatrix}
	\begin{pmatrix}
	\frac{1}{\sqrt k} & 0 \\
	0 & 1 
	\end{pmatrix}
	=
	\begin{pmatrix}
	\frac{1}{\sqrt k\sqrt{k+1}} & -\frac{\sqrt k}{\sqrt{k+1}} \\
	\frac{1}{\sqrt{k+1}} & \frac{1}{\sqrt{k+1}} 
	\end{pmatrix}
	.
\]
Hence,
the $n$-th row vector 
of $R$ 
is 
${^t}(1/\sqrt n\vec1_n)$. 
Since $R$ is an orthogonal matrix,
it holds that
$R\vec1_n=\sqrt n\vec e_n$.
\end{proof}

\subsection{Some properties related to the Fock vectors}

Let 
$\hat N\in\cL(\cH)$ 
be 
the number operator 
$\hat a^*\hat a$.
It holds that 
$[\hat a,\hat N]=\hat a$,
and so that 
$e^{-ir\hat N}\hat ae^{ir\hat N}=e^{ir}\hat a$
(${^\forall}r\in\bR$), 
where 
$i=\sqrt{-1}$.
Let 
$\bN_0$
be 
$\bN\cup\{0\}$.
For $n\in\bN_0$,
the $n$-th Fock vector, 
or the $n$-th number vector, 
$f_n\in\cH$ 
is defined by 
$f_n
=(1/\sqrt{n!})(\hat a^*)^n\ket0$.
It holds that 
$\hat af_n=\sqrt nf_{n-1}$,
$\hat a^* f_n=\sqrt{n+1}f_{n+1}$
and
$\hat Nf_n=nf_n$.

By Baker-Hausdorff formula,
the displacement operator 
$\hat D_\theta\in\cU(\cH)$
satisfies 
$\hat D_\theta
=e^{-|\theta|^2/2}e^{\theta\hat a^*}e^{-\bar\theta\hat a}$
for $\theta\in\mathbb C$.
Hence
we have
\begin{align}
	%\ket\theta(x)
	%=
	%\frac{1}{\pi^{1/4}}
	%\exp\left[
	%-\frac{x^2}{2}
	%+\sqrt{2}\theta x
	%-[\mathrm{Re}(\theta)]^2
	%-i\mathrm{Re}(\theta)\mathrm{Im}(\theta)
	%\right]
	\ket\theta
	=
	e^{-|\theta|^2/2}
	\sum_{n=0}^\infty
	\frac{\theta^n}{\sqrt{n!}}
	f_n
	.
	\label{eq_fock_coherent}
\end{align}
Hence,
by the power series calculation,
we have 
%$\hat a\ket\theta=\theta\ket\theta$,
\begin{align}
	\langle\theta\mid\eta\rangle
	=e^{-|\theta|^2/2-|\eta|^2/2+\bar\theta\eta}
	.
	\label{eq_inner_product}
\end{align}
%and $e^{ir\hat N}\ket\theta=\ket{e^{i r}\theta}$.
Moreover, 
$\{f_n\}_{n=0}^\infty$ 
is a complete orthonormal basis 
of $\cH$ 
because 
$f_n^*g=0$ 
$({^\forall}n\in\bN_0)$ 
is equivalent to 
$\int_\bR e^{i\sqrt2 sx}g(x)dx=0$ 
$({^\forall}s\in\bR)$. 
Hence,
it holds that
$\hat N=\sum_{n=0}^\infty
nf_nf_n^*$.
Calculating the Gaussian mixture of $\ket\theta\bra\theta$ 
using the form of 
$(\ref{eq_fock_coherent})$,
we have 
\begin{align}
	\hat\rho_{0,N}
	=
	\frac{1}{N+1}
	\sum_{n=0}^\infty
	\frac{N^n}{(N+1)^n}
	f_nf_n^*
	.
	\label{eq_gauss_fock}
\end{align}
This equation will be used 
in
Lemma \ref{lem_one_mode_char_func}.

\donotdisplay
{
Hence, 
it holds that 
\[
	\hat\rho_{0,N}
	=
	\begin{cases}
	\frac{1}{2\pi}\int_\bR e^{it\hat N}dt & \mbox{if }N=0,\\
	\frac{1}{N+1}
	\exp\Big[
	\log(\frac{N}{N+1})\hat N
	\Big]
	&\mbox{if }N>0,
	\end{cases}
\]
where 
$r=\log[N/(N+1)]$.
}
%It holds that
%$\mathrm{Tr}[\hat L]
%=\sum_{n=0}^\infty\langle f_n\mid\hat L\mid f_n\rangle$.

For $j,k\in\{1,2,...,n\}$
with $j<k$,
define $K_{j,k}
\in\mathrm{Ant}_\mathbb C^n$
by
\[
	K_{j,k}
	=
	\sqrt{-1}
	\begin{pmatrix}
	O_{j-1}\\
	&1\\
	&&O_{k-j-1}\\
	&&&-1\\
	&&&&O_{n-k}
	\end{pmatrix}
	.
\]
For $k>j$,
let $K_{j,k}=-K_{k,j}$.
Let $K_{j,j}=O_n$.
The operator 
$\hat d_{j,k}$
defined in 
(\ref{eq_hat_d_j_k})
is equal to 
$\hat v_{K_{j,k}}$.
In Sec. \ref{sec_n2},
we will use 
the following lemma. 

\begin{lemma}\label{lem:2_realization}
\footnote{This lemma was suggested by Prof. K. Matsumoto.}
For any $j,k\in\{1,2,..,n\}$,
let 
$\hat U\in\cU(\cH^{\ox mn})$
be 
$\exp((\pi/4)\hat v_{J_{j,k}})$, 
and 
let 
$\hat V\in\cU(\cH^{\ox mn})$
be 
$\exp((\pi/4)\hat v_{K_{j,k}})$.
Then,
it holds that 
$
	\hat U^*\hat V^*
	\hat v_{j,k}
	\hat V\hat U
	=
	\hat d_{j,k}
$.
\end{lemma}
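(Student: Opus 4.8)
The plan is to convert the operator conjugation into a finite-dimensional matrix computation by exploiting that $A\mapsto\hat v_A$ is a Lie-algebra homomorphism (Lemma \ref{lem:v_a_b_c}), and then to evaluate the resulting matrix identity inside the two-dimensional block indexed by $j$ and $k$.

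First I would rewrite everything in terms of $\hat v_{(\cdot)}$: recall $\hat v_{j,k}=\hat v_{J_{j,k}}$ and $\hat d_{j,k}=\hat v_{K_{j,k}}$, while $\hat U=\exp(\hat v_{(\pi/4)J_{j,k}})$ and $\hat V=\exp(\hat v_{(\pi/4)K_{j,k}})$. Each $\hat v_A$ is anti-hermitian, so $\hat U,\hat V$ are unitary with $\hat U^*=\exp(-\hat v_{(\pi/4)J_{j,k}})$ and $\hat V^*=\exp(-\hat v_{(\pi/4)K_{j,k}})$, and the claim becomes the equality of $\hat v_{K_{j,k}}$ with $\exp(-\hat v_{(\pi/4)J_{j,k}})\exp(-\hat v_{(\pi/4)K_{j,k}})\hat v_{J_{j,k}}\exp(\hat v_{(\pi/4)K_{j,k}})\exp(\hat v_{(\pi/4)J_{j,k}})$.

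The key step is the conjugation identity $\exp(-\hat v_B)\hat v_C\exp(\hat v_B)=\hat v_{e^{-B}Ce^{B}}$ for $B,C\in\Ant_\bC^n$. I would prove it by setting $f(t)=\exp(-t\hat v_B)\hat v_C\exp(t\hat v_B)$, differentiating and invoking Lemma \ref{lem:v_a_b_c} to get $f'(t)=-\exp(-t\hat v_B)\hat v_{[B,C]}\exp(t\hat v_B)$, so that $f^{(k)}(0)=(-1)^k\hat v_{\mathrm{ad}_B^kC}$ and hence $f(1)=\hat v_{e^{-\mathrm{ad}_B}C}=\hat v_{e^{-B}Ce^{B}}$. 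Applying this twice reduces the target to the single matrix identity $M'=K_{j,k}$, where $M'=e^{-(\pi/4)J_{j,k}}e^{-(\pi/4)K_{j,k}}J_{j,k}e^{(\pi/4)K_{j,k}}e^{(\pi/4)J_{j,k}}$.

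Finally I would carry out this identity in the $(j,k)$-block, where $J_{j,k}$ restricts to $\left(\begin{smallmatrix}0&-1\\1&0\end{smallmatrix}\right)$ and $K_{j,k}$ to $\left(\begin{smallmatrix}\sqrt{-1}&0\\0&-\sqrt{-1}\end{smallmatrix}\right)$; these square to $-I$ on the block, so $e^{(\pi/4)J_{j,k}}$ and $e^{(\pi/4)K_{j,k}}$ equal $\cos(\pi/4)I+\sin(\pi/4)(\cdot)$. Writing $L=\tfrac12[J_{j,k},K_{j,k}]\in\Ant_\bC^n$, the three matrices close into the relations $[J_{j,k},K_{j,k}]=2L$, $[K_{j,k},L]=2J_{j,k}$ and $[L,J_{j,k}]=2K_{j,k}$, so conjugation by $e^{(\pi/4)K_{j,k}}$ rotates $J_{j,k}\mapsto L$ and conjugation by $e^{(\pi/4)J_{j,k}}$ rotates $L\mapsto K_{j,k}$; composing yields $M'=K_{j,k}$. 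Either the bracket relations or a direct $2\times2$ multiplication suffices here. The only genuine obstacle is justifying the conjugation identity, since the $\hat v_A$ are unbounded; I would work on the dense span of Fock vectors, on each of which the relevant series terminate, so that the differentiation of $f$ and the resulting power series are legitimate.
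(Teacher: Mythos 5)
Your proposal is correct and follows essentially the same route as the paper: the paper likewise reduces the claim to the $2\times2$ block identity $U^*V^*J_{j,k}VU=K_{j,k}$ (which it checks by writing out $U=\frac{1}{\sqrt2}\begin{pmatrix}1&-1\\1&1\end{pmatrix}$ and $V=\begin{pmatrix}e^{i\pi/4}&0\\0&e^{-i\pi/4}\end{pmatrix}$ explicitly) and then lifts it to the operator level via Lemma~\ref{lem:v_a_b_c}. Your only divergences are matters of detail—checking the block identity through the $\mathfrak{su}(2)$-type bracket relations instead of direct multiplication, and spelling out the conjugation identity $\exp(-\hat v_B)\hat v_C\exp(\hat v_B)=\hat v_{e^{-B}Ce^{B}}$ together with its justification on the number-preserving finite-dimensional subspaces, a step the paper leaves implicit in its appeal to Lemma~\ref{lem:v_a_b_c}.
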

\begin{proof}
Let 
$U=\exp\Big[\frac\pi4\begin{pmatrix}0&-1\\1&0\end{pmatrix}\Big]$, 
and let 
$V=\exp\Big[\frac\pi4\begin{pmatrix}i&0\\0&-i\end{pmatrix}\Big]$. 
As 
$U=\frac{1}{\sqrt2}\begin{pmatrix}1&-1\\1&1\end{pmatrix}$ 
and 
$V=\begin{pmatrix}e^{i\pi/4}&0\\0&e^{-i\pi/4}\end{pmatrix}$, 
it holds that 
$
	U^*V^*
	J_{j,k}
	VU 
	=
	K_{j,k}
$.
By Lemma \ref{lem:v_a_b_c},
we have 
$
	\hat U^*\hat V^*
	\hat v_{j,k}
	\hat V\hat U
	=
	\hat d_{j,k}
$.
\end{proof}

In quantum optical experiments,
the unitary operators 
$\hat U$ and $\hat V$
of the above lemma 
can be realized by 
phase-shifting 
and 
beam-splitting,
respectively.
Moreover,
$\hat T_{j,k}
=-\sqrt{-1}\hat d_{j,k}$
is an observable 
whose POVM can be realized by 
arithmetic subtraction 
of data 
obtained by number measurements.
Hence,
if $n=2$,
then 
$T^\inv_\alpha$
can be realized by 
beam-splitters 
and photon counters.

\subsection{Proof of Theorem \ref{th_n2} (Negative binomial and Poisson distributions)}\label{sec_n2}

Let 
$\hat N
\in\cL(\cH)$
be 
$\hat a^*\hat a$.
Let 
$h_{\theta,N}(r)$
be 
$\Tr[\hat\rho_{\theta,N}e^{i r\hat N}]$,
where 
$i=\sqrt{-1}$.
We write 
$h_{\theta,N}(r)$
using 
$\gamma(r)$ 
and 
$\psi_s(r)$ 
of 
(\ref{eq_def_gamma}).

\begin{lemma}\label{lem_gamma_r}
For any $N\ge0$,
it holds that 
$h_{0,N}(r)
=\gamma(r)$.
\end{lemma}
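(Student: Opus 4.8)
The plan is to reduce everything to the explicit diagonal form of $\hat\rho_{0,N}$ in the Fock basis and then sum a geometric series. Equation (\ref{eq_gauss_fock}) already provides the expansion $\hat\rho_{0,N}=\frac{1}{N+1}\sum_{n=0}^\infty\frac{N^n}{(N+1)^n}f_nf_n^*$, so the entire task is to understand how $e^{ir\hat N}$ interacts with each rank-one term $f_nf_n^*$ and then to recognize the resulting series.

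First I would exploit $\hat Nf_n=nf_n$, which immediately gives $e^{ir\hat N}f_n=e^{irn}f_n$. Together with the orthonormality $f_m^*f_n=\delta_{mn}$ (part of the fact, recorded just before the lemma, that $\{f_n\}_{n=0}^\infty$ is a complete orthonormal basis of $\cH$), a term-by-term evaluation of the trace yields $\Tr[f_nf_n^*e^{ir\hat N}]=e^{irn}$. Substituting this into the expansion of $\hat\rho_{0,N}$ turns $h_{0,N}(r)=\Tr[\hat\rho_{0,N}e^{ir\hat N}]$ into $\frac{1}{N+1}\sum_{n=0}^\infty\big(\frac{Ne^{ir}}{N+1}\big)^n$.

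The final step is to identify this as a geometric series. Since $N/(N+1)<1$ for every $N\ge0$, the common ratio $Ne^{ir}/(N+1)$ has modulus strictly less than one, so the series converges, and summing it gives $\frac{1}{N+1}\cdot\frac{N+1}{N+1-Ne^{ir}}=\frac{1}{N+1-Ne^{ir}}$, which is exactly $\gamma(r)$ as defined in (\ref{eq_def_gamma}).

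I do not expect a serious obstacle here: the computation is routine once the Fock expansion is in hand. The only points meriting a line of care are the interchange of the trace with the infinite sum (justified by the absolute convergence just noted) and the degenerate case $N=0$, where $\hat\rho_{0,0}=f_0f_0^*$ is the vacuum projection and only the $n=0$ term survives, so that $h_{0,0}(r)=1=\gamma(r)$ consistently; in the series formula this is the standard convention $0^0=1$.
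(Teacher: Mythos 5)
Your proposal is correct and follows essentially the same route as the paper: substitute the Fock-basis expansion of $\hat\rho_{0,N}$ from Eq.~(\ref{eq_gauss_fock}) into the trace and sum the resulting geometric series with ratio $Ne^{ir}/(N+1)$ to obtain $\gamma(r)$. Your extra remarks on absolute convergence and the degenerate case $N=0$ are sound but not needed beyond what the paper records.
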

\begin{proof}
By Eq. $(\ref{eq_gauss_fock})$,
it holds that 
\[
	h_{0,N}(r)
	=
	\frac{1}{N+1}
	\sum_{k=0}^\infty
	\Big(
	\frac
	{e^{i r}N}
	{N+1}
	\Big)^k
	=
	\frac{1}{N+1}
	\frac{1}{1-\frac{e^{i r}N}{N+1}}
	=
	\frac{1}{N+1-Ne^{i r}}
	.
\]
Hence,
we have 
$h_{0,N}(r)
=\gamma(r)$.
\end{proof}

\begin{lemma}\label{lem_h_0}
If $N=0$,
then 
it holds that 
$h_{\theta,0}(r)
=\psi_{|\theta|}(r)$.
\end{lemma}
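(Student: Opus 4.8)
Since $N=0$, the state $\hat\rho_{\theta,0}=\ket\theta\bra\theta$ is a pure coherent projector, so the quantity to be computed collapses to the single matrix element $h_{\theta,0}(r)=\bra\theta e^{ir\hat N}\ket\theta$. On the target side, putting $N=0$ in the definition (\ref{eq_def_gamma}) gives $\gamma(r)=1$, whence $\psi_{|\theta|}(r)=\exp[(e^{ir}-1)|\theta|^2]$. Thus the lemma reduces to the single identity $\bra\theta e^{ir\hat N}\ket\theta=\exp[(e^{ir}-1)|\theta|^2]$.

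The plan is to evaluate this matrix element directly in the Fock basis. Using the expansion (\ref{eq_fock_coherent}) of $\ket\theta$ together with the eigenrelation $\hat Nf_n=nf_n$, one obtains $e^{ir\hat N}\ket\theta=e^{-|\theta|^2/2}\sum_{n=0}^\infty(e^{ir}\theta)^n/\sqrt{n!}\,f_n$. Pairing this against $\bra\theta$ and invoking the orthonormality $f_m^*f_n=\delta_{m,n}$ kills the cross terms and leaves $e^{-|\theta|^2}\sum_{n=0}^\infty(|\theta|^2e^{ir})^n/n!=e^{-|\theta|^2}\exp(|\theta|^2e^{ir})$, which is exactly $\exp[(e^{ir}-1)|\theta|^2]$.

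A cleaner alternative route would first show that $e^{ir\hat N}$ rotates coherent states, namely $e^{ir\hat N}\ket\theta=\ket{e^{ir}\theta}$. This follows because $\hat a(e^{ir\hat N}\ket\theta)=e^{ir}e^{ir\hat N}\hat a\ket\theta=(e^{ir}\theta)(e^{ir\hat N}\ket\theta)$, using the intertwining relation $e^{-ir\hat N}\hat ae^{ir\hat N}=e^{ir}\hat a$ together with $\hat a\ket\theta=\theta\ket\theta$ from (\ref{eq_hat_a_ket_theta}); hence $e^{ir\hat N}\ket\theta$ is a coherent state with eigenvalue $e^{ir}\theta$, and comparing $f_0$-components (both equal $e^{-|\theta|^2/2}$ since $|e^{ir}\theta|=|\theta|$) pins the normalization at $1$. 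Then $h_{\theta,0}(r)=\langle\theta\mid e^{ir}\theta\rangle$, and the inner-product formula (\ref{eq_inner_product}) gives $\exp[-|\theta|^2/2-|\theta|^2/2+e^{ir}|\theta|^2]=\exp[(e^{ir}-1)|\theta|^2]$.

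There is no real obstacle here; both routes are short and elementary. The only points demanding care are bookkeeping: that the adjoint of the unitary $e^{ir\hat N}$ is $e^{-ir\hat N}$ when it is moved onto $\bra\theta$, that the coherent-state series converges absolutely so the term-by-term evaluation is legitimate, and, in the alternative route, that the normalization constant is actually fixed rather than the identity being established only up to scale. I would present the Fock-basis computation as the main proof, since it is self-contained and sidesteps the eigenvector-normalization step.
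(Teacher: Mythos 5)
Your main (Fock-basis) computation is correct and is essentially the paper's own proof: expand $\ket\theta$ via (\ref{eq_fock_coherent}), sum the resulting series to $\exp[|\theta|^2(e^{ir}-1)]$, and note $\gamma(r)=1$ at $N=0$ so this equals $\psi_{|\theta|}(r)$. Your alternative route via $e^{ir\hat N}\ket\theta=\ket{e^{ir}\theta}$ and (\ref{eq_inner_product}) is also sound and in fact mirrors how the paper handles the $N>0$ case in Lemma \ref{lem_one_mode_char_func}, but for this lemma the paper takes your primary route.
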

\begin{proof}
By Eq. (\ref{eq_fock_coherent}),
it holds that 
\[
	h_{\theta,0}(r)
	=
	e^{-|\theta|^2}
	\sum_{k=0}^\infty
	\frac
	{e^{i k r}|\theta|^{2k}}
	{k!}
	=
	\exp[|\theta|^2
	(e^{i r}-1)]
	.
\]
Since 
$\gamma(r)$ 
is one,
$h_{\theta,0}(r)$
is 
$\psi_{|\theta|}(r)$.
\end{proof}

\begin{lemma}\label{lem_one_mode_char_func}
For any $N\ge0$,
it holds that 
$h_{\theta,N}(r)
=\gamma(r)\psi_{|\theta|}(r)$.
\end{lemma}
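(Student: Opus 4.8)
The plan is to split into the pure case $N=0$ and the mixed case $N>0$, the former being immediate from what is already established. When $N=0$ we have $\gamma(r)=1$ by Lemma \ref{lem_gamma_r}, so $\gamma(r)\psi_{|\theta|}(r)=\psi_{|\theta|}(r)=h_{\theta,0}(r)$ by Lemma \ref{lem_h_0}; this settles the claim with no further work, and the rest of the argument concerns only $N>0$.

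For $N>0$ I would compute the trace straight from the Gaussian-mixture definition of $\hat\rho_{\theta,N}$. Writing $z=x+iy$ with $i=\sqrt{-1}$, this gives
\[
	h_{\theta,N}(r)
	=
	\frac{1}{\pi N}
	\iint_{\bR^2}
	e^{-|z-\theta|^2/N}
	\bra z e^{ir\hat N}\ket z
	\,dx\,dy
	.
\]
The first step is to evaluate the matrix element in the integrand. Expanding $\ket z$ in the Fock basis via Eq. (\ref{eq_fock_coherent}) and using $\hat Nf_n=nf_n$ yields $e^{ir\hat N}\ket z=\ket{ze^{ir}}$, whence the inner-product formula (\ref{eq_inner_product}) gives $\bra z e^{ir\hat N}\ket z=\exp[|z|^2(e^{ir}-1)]$.

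The heart of the argument is then a single complex Gaussian integral. Setting $a=N^{-1}-(e^{ir}-1)$, the exponent $-|z-\theta|^2/N+|z|^2(e^{ir}-1)$ becomes $-a|z|^2+(\bar\theta z+\theta\bar z)/N-|\theta|^2/N$. Since $\re a=N^{-1}+1-\cos r>0$, the integral converges absolutely, and completing the square in $z$ gives $\pi^{-1}\iint e^{-a|z|^2+\bar b z+b\bar z}\,dx\,dy=a^{-1}e^{|b|^2/a}$ with $b=\theta/N$. Collecting the constants leaves
\[
	h_{\theta,N}(r)
	=
	\frac{1}{Na}
	\exp\!\Big(
	\frac{|\theta|^2}{N^2a}
	-\frac{|\theta|^2}{N}
	\Big)
	.
\]

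Finally I would verify the two algebraic identities $Na=N+1-Ne^{ir}$, so that $(Na)^{-1}=\gamma(r)$, and $\gamma(r)-1=N(e^{ir}-1)\gamma(r)$. The latter recasts the exponent as $\frac{|\theta|^2}{N}(\gamma(r)-1)=|\theta|^2(e^{ir}-1)\gamma(r)$, and comparing with the definition (\ref{eq_def_gamma}) of $\psi_s$ gives $h_{\theta,N}(r)=\gamma(r)\psi_{|\theta|}(r)$, as required. The only point needing a little care is the contour shift hidden in the completion of the square, since $a$ is complex; this is justified as usual by $\re a>0$ together with analyticity, after which everything reduces to the elementary algebra above. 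An alternative that avoids the integral is to write $\hat\rho_{\theta,N}=\hat D_\theta\hat\rho_{0,N}\hat D_\theta^*$ and track $\hat D_\theta^*e^{ir\hat N}\hat D_\theta$ through Eq. (\ref{eq_displace}), but the direct integral seems cleanest and reuses Lemmas \ref{lem_gamma_r} and \ref{lem_h_0} most transparently.
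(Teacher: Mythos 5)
Your proof is correct and follows essentially the same route as the paper's: both split off the $N=0$ case via Lemmas \ref{lem_gamma_r} and \ref{lem_h_0}, evaluate $\bra z e^{ir\hat N}\ket z=\exp[|z|^2(e^{ir}-1)]$, and complete the square in the same Gaussian integral with your $a$ equal to the paper's $c_{N,r}=1/N+1-e^{ir}$. The only (cosmetic) difference is that the paper factorizes the remaining integral as $\psi_{|\theta|}(r)\,h_{0,N}(r)$ and reinvokes Lemma \ref{lem_gamma_r}, whereas you evaluate the integral in closed form and finish by elementary algebra, along the way explicitly justifying the complex contour shift via $\re a>0$ — a subtlety the paper's shifted-mean Gaussian integration leaves implicit.
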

\begin{proof}
If $N=0$,
then 
$\gamma(r)$
is one,
and 
by 
Lemma \ref{lem_h_0},
we have 
$h_{\theta,0}(r)
=\gamma(r)\psi_{|\theta|}(r)$.
Hereafter,
assume that 
$N>0$.
For 
$x,y\in\bR$,
let $z\in\bC$
be 
$x+iy$,
and let 
$f_{\theta,N}(z)$
be 
$\exp(-|z-\theta|^2/N)$.
It holds that 
\[
	h_{\theta,N}(r)
	=
	\frac
	{1}
	{\pi N}
	\iint_{\bR^2}
	\bra z
	e^{i r\hat N}
	\ket z
	f_{\theta,N}(z)
	dxdy
	.
\]
By Lemma 
\ref{lem_a_b_z},
we have 
$e^{i r\hat N}
\ket z
=\Ket{e^{ir}z}$.
By 
Eq. (\ref{eq_inner_product}),
we have 
$\langle z\ket{e^{i r}z}
=\exp(e^{ir}|z|^2-|z|^2)$.
Hence,
$\bra ze^{i r\hat N}\ket zf_{\theta,N}(z)$
is 
$\exp(-g_{\theta,N}(z))$,
where 
$
	g_{\theta,N}(z)
	=
	|z-\theta|^2/N
	+(1-e^{i r})|z|^2
$.
Let 
$c_{N,r}$ be 
$1/N+1-e^{i r}$.
For $w\in\bC$,
let 
$\vec\mu_w
\in\Mat_\bR^{2,1}$
be 
${^t}(\re(w),\im(w))$.
Then,
it holds that 
\begin{align*}
	g_{\theta,N}(z)
	=&
	c_{N,r}
	\Big(
	{^t}\vec\mu_z
	-\frac{{^t}\vec\mu_\theta}{Nc_{N,r}}
	\Big)
	\Big(
	\vec\mu_z
	-\frac{\vec\mu_\theta}{Nc_{N,r}}
	\Big)
	-\frac{|\theta|^2}{N^2c_{N,r}}
	+\frac{|\theta|^2}{N}
	\\
	=&
	c_{N,r}
	\Big(
	{^t}\vec\mu_z
	-\frac{{^t}\vec\mu_\theta}{Nc_{N,r}}
	\Big)
	\Big(
	\vec\mu_z
	-\frac{\vec\mu_\theta}{Nc_{N,r}}
	\Big)
	-\log(\psi_{|\theta|}(r))
	.
\end{align*}
We can factorize 
$h_{\theta,N}(r)$
to 
$\psi_{|\theta|}(r)$
and 
$h_{0,N}(r)$
as 
\begin{align*}
	h_{\theta,N}(r)
	=&
	\frac{\psi_{|\theta|}(r)}{\pi N}
	\iint_{\bR^2}
	\exp\left[
	-c_{N,r}
	\Big(
	{^t}\vec\mu_z
	-\frac{{^t}\vec\mu_\theta}{Nc_{N,r}}
	\Big)
	\Big(
	\vec\mu_z
	-\frac{\vec\mu_\theta}{Nc_{N,r}}
	\Big)
	\right]
	dxdy
	\\
	=&
	\frac{\psi_{|\theta|}(r)}{\pi N}
	\iint_{\bR^2}
	\exp(
	-c_{N,r}
	{^t}\vec\mu_z
	\vec\mu_z
	)
	dxdy
	=
	\psi_{|\theta|}(r)
	h_{0,N}(r)
	.
\end{align*}
By Lemma \ref{lem_gamma_r},
$h_{0,N}(r)$
is 
$\gamma(r)$.
Hence,
we have 
$h_{\theta,N}(r)=\gamma(r)\psi_{|\theta|}(r)$.
\end{proof}

Next,
we prove 
(i) of 
Theorem \ref{th_n2}
for the case of $m=1$.

\begin{lemma}\label{lem_char_func}
If $m=1$,
and 
if $n=2$,
then 
it holds that 
$\varphi_Y(r)
=
\gamma(r)
\gamma(-r)
\allowbreak
\psi_{|\theta|}(r)
\psi_{|\theta|}(-r)$.
\end{lemma}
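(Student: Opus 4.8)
The plan is to reduce $\varphi_Y(r)=\Tr[\hat\rho_{\theta,\eta,N}^{\ox 2}e^{i r\hat T}]$, with $\hat T=-i\hat v_{1,2}$, to a product of two single-mode characteristic functions already evaluated in Lemma \ref{lem_one_mode_char_func}. First I would strip off the squeezing. By Theorem \ref{th_inv}, $\hat v_{1,2}$ commutes with $\hat S_\eta^{\ox 2}$, hence so does $e^{i r\hat T}$; writing $\hat\rho_{\theta,\eta,N}^{\ox 2}=\hat S_\eta^{\ox 2}\hat\rho_{\theta,N}^{\ox 2}(\hat S_\eta^{\ox 2})^*$ and using cyclicity of the trace, the squeezing operators cancel, so $\varphi_Y(r)=\Tr[\hat\rho_{\theta,N}^{\ox 2}e^{i r\hat T}]$ independently of $\eta$.

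Next I would diagonalize the observable via Lemma \ref{lem:2_realization}. Setting $\hat W=\hat V\hat U$, that lemma gives $\hat v_{1,2}=\hat W\hat d_{1,2}\hat W^*$, so $\hat T=\hat W\hat T'\hat W^*$ with $\hat T'=-i\hat d_{1,2}$. For $m=1$ the operator $\hat T'$ is simply the number difference $\hat N\ox\hat I-\hat I\ox\hat N$, so that $e^{i r\hat T'}=e^{i r\hat N}\ox e^{-i r\hat N}$. Moving $\hat W$ through the trace by cyclicity yields $\varphi_Y(r)=\Tr[\hat W^*\hat\rho_{\theta,N}^{\ox 2}\hat W\,e^{i r\hat T'}]$.

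The main computation is to evaluate the conjugated state $\hat W^*\hat\rho_{\theta,N}^{\ox 2}\hat W$. Since $\hat U=\hat V_{(\pi/4)J_{1,2}}$ and $\hat V=\hat V_{(\pi/4)K_{1,2}}$ are both of the form $\hat V_B$, Lemma \ref{lem_g_a_b_z} turns each conjugation into a right multiplication of the displacement matrix $Z=(\theta,\theta)$ by a $2\times 2$ matrix exponential, giving $\hat W^*\hat\rho_{\theta,N}^{\ox 2}\hat W=\hat\rho_{Z',N}$ with $Z'=Z\,e^{\overline{(\pi/4)K_{1,2}}}\,e^{(\pi/4)J_{1,2}}$. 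Computing the product of the diagonal factor $e^{\overline{(\pi/4)K_{1,2}}}=\mathrm{diag}(e^{-i\pi/4},e^{i\pi/4})$ with the rotation $e^{(\pi/4)J_{1,2}}=\tfrac{1}{\sqrt2}\left(\begin{smallmatrix}1&-1\\1&1\end{smallmatrix}\right)$ and applying it to $(\theta,\theta)$ gives $Z'=(\theta,i\theta)$. Hence the conjugated state factorizes across the two modes as $\hat\rho_{\theta,N}\ox\hat\rho_{i\theta,N}$.

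Finally the trace factorizes into single-mode pieces:
\[
	\varphi_Y(r)
	=\Tr[\hat\rho_{\theta,N}e^{i r\hat N}]\,\Tr[\hat\rho_{i\theta,N}e^{-i r\hat N}]
	=h_{\theta,N}(r)\,h_{i\theta,N}(-r).
\]
Applying Lemma \ref{lem_one_mode_char_func} to each factor and using $|i\theta|=|\theta|$ gives $h_{\theta,N}(r)=\gamma(r)\psi_{|\theta|}(r)$ and $h_{i\theta,N}(-r)=\gamma(-r)\psi_{|\theta|}(-r)$, whose product is the claimed $\gamma(r)\gamma(-r)\psi_{|\theta|}(r)\psi_{|\theta|}(-r)$. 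I expect the only delicate point to be bookkeeping the direction of conjugation and the complex conjugate $\bar B$ in Lemma \ref{lem_g_a_b_z} when passing $\hat W^*$ past the state; this is why I would track the displacement matrix $Z$ explicitly rather than manipulate the operators directly, so that the emergence of the second displacement $i\theta$ (and hence the appearance of the reflected arguments $-r$) is transparent.
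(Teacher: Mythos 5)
Your proposal is correct and follows essentially the same route as the paper: conjugate the state by the phase-shift/beam-splitter unitaries of Lemma \ref{lem:2_realization} via Lemma \ref{lem_g_a_b_z}, factorize the trace over the two modes, and invoke Lemma \ref{lem_one_mode_char_func}. The only (harmless) deviation is that the paper first applies Lemma \ref{lem:dariano} to pass to $\hat\rho_{0,N}\ox\hat\rho_{\sqrt2\theta,N}$, so its conjugated state is $\hat\rho_{e^{i\pi/4}\theta,N}^{\ox2}$ rather than your $\hat\rho_{\theta,N}\ox\hat\rho_{i\theta,N}$; since both mode displacements have modulus $|\theta|$, the two computations agree, and your version is in fact slightly leaner by omitting that step.
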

\begin{proof}
By Lemma
\ref{lem:dariano},
it holds that
$
	\varphi_Y(r)
	=
	\Tr[\hat\rho_{0,N}\otimes\hat\rho_{\sqrt2\theta,N}
	\exp(r\hat v_{1,2})]
$.
Let 
$\hat U=\exp((\pi/4)\hat v_{J_{1,2}})$, 
and let 
$\hat V=\exp((\pi/4)\hat v_{K_{1,2}})$.
By Lemma 
\ref{lem_g_a_b_z},
it holds that 
$
	\hat U^*
	\hat V^*
	\hat\rho_{0,N}\otimes\hat\rho_{\sqrt2\theta,N}
	\hat V
	\hat U
	=
	\hat\rho_{e^{i\pi/4}\theta,N}^{\ox2}
$.
By 
Lemma \ref{lem:2_realization},
it holds that 
$\hat U^*\hat V^*\hat v_{1,2}\hat V\hat U
=\hat d_{j,k}$.
Hence,
we have 
\begin{align*}
	\varphi_Y(r)
	=&
	\Tr[\hat\rho_{e^{i\pi/4}\theta,N}^{\ox2}\exp(r\hat d_{1,2})]
	\\
	=&
	\Tr[\hat\rho_{e^{i\pi/4}\theta,N}\exp(i r\hat N)]
	\Tr[\hat\rho_{e^{i\pi/4}\theta,N}\exp(-i r\hat N)]
	.
\end{align*}
By Lemma 
\ref{lem_one_mode_char_func},
we have 
$
	\varphi_Y(r)
	=
	\gamma(r)\gamma(-r)
	\psi_{|\theta|}(r)
	\psi_{|\theta|}(-r)
$.
\end{proof}

For any random variable $Z$, 
and for any constant $c\in\bR$, 
the characteristic functions 
of $Z$ and $cZ$
satisfy 
\begin{align}
	\varphi_{cZ}(t)=\varphi_Z(ct)
	.
	\label{eq_c_char}
\end{align}
For any mutually independent random variables 
$Z$ and $W$,
the characteristic functions of 
$Z$, $W$ and $S=Z+W$ satisfy 
\begin{align}
	\varphi_S(r)
	=
	\varphi_Z(r)
	\varphi_W(r)
	.
	\label{eq_char_ind}
\end{align}

\begin{proof}[{\bf Proof of Theorem \ref{th_n2}}]
(i) 
Let $\theta_k$ 
be the $k$-th entry of 
$\vec\theta\in\Mat_\bC^{m,1}$.
By Lemma 
\ref{lem_char_func}, 
and by 
Eq. 
(\ref{eq_char_ind}),
we have 
\begin{align*}
	\varphi_Y(r)
	=&
	[\gamma(r)\gamma(-r)]^m
	\Big[
	\prod_{k=1}^m
	\psi_{|\theta_k|}(r)
	\psi_{|\theta_k|}(-r)
	\Big]
	\\
	=&
	[\gamma(r)\gamma(-r)]^m
	\psi_{\|\vec\theta\|}(r)
	\psi_{\|\vec\theta\|}(-r)
	.
\end{align*}
(ii) 
Let 
$S=F-G+\sum_{k=1}^\infty(kP_k-kQ_k)$. 
By Eqs. (\ref{eq_c_char})
and (\ref{eq_char_ind}),
it holds that 
\begin{align*}
	\varphi_S(r)
	=&
	\varphi_F(r)\varphi_G(-r)
	\prod_{k=1}^\infty
	[
	\varphi_{P_k}(kr)
	\varphi_{Q_k}(-kr)
	]
	.
\end{align*}
Since $F$ and $G$ obey 
$NB_m(N(N+1)^{-1})$,
it holds that 
$\varphi_F(r)
=\varphi_G(r)
=\gamma(r)$. 
Let 
$\lambda_k=\|\vec\theta\|^2N^{k-1}(N+1)^{-k-1}$.
Since 
$P_k$ and $Q_k$ obey 
$\Poi(\lambda_k)$,
it holds that 
$\varphi_{P_k}(r)
=\varphi_{Q_k}(r)
=\exp[\lambda_k(e^{i r}-1)]$. 
As 
\begin{align*}
	&
	\sum_{k=1}^\infty\log[\varphi_{P_k}(kr)]
	=
	\frac{\|\vec\theta\|^2}{N(N+1)}
	\sum_{k=1}^\infty
	\Big(
	\frac{N}{N+1}
	\Big)^k
	(e^{i k r}-1)
	\\
	&=
	\frac{\|\vec\theta\|^2}{N(N+1)}
	\Big(
	\frac{1}{1-\frac{Ne^{i r}}{N+1}}
	-
	\frac{1}{1-\frac{N}{N+1}}
	\Big)
	=
	\frac{\|\vec\theta\|^2}{N}
	\Big(
	\frac{1}{N+1-Ne^{i r}}
	-
	1
	\Big)
\end{align*}
holds,
we have 
$\varphi_S(r)
=\gamma(r)
\gamma(-r)
\psi_{\|\vec\theta\|}(r)
\psi_{\|\vec\theta\|}(-r)$.
\end{proof}

\subsection{Proof of Theorem \ref{th_type2} (Type II error probability of $T^\inv_\alpha$ for $N=0$)}\label{sec_proof_type2}

For 
$\mu,\nu\in\bN$,
let 
$T^{\mu,\nu}_X$
be the tangent space of 
$\Mat_\bR^{\mu,\nu}$
at 
$X\in\Mat_\bR^{\mu,\nu}$.
Let 
$x_{i,j}$
be the coordinate  variable 
for the $(i,j)$-th entry 
of 
$X\in\Mat_\bR^{\mu,\nu}$.
Then,
a basis of 
$T^{\mu,\nu}_X$
is 
$\{\partial/\partial x_{i,j}|_X
\mid1\le i\le\mu,1\le j\le\nu\}$.
For 
$Y\in\Mat_\bR^\mu$
and 
for 
$Z\in\Mat_\bR^\nu$,
let 
$f_{Y,Z}:\Mat_\bR^{\mu,\nu}\to\Mat_\bR^{\mu,\nu}$
be a map given by 
$X\mapsto YX({^t}Z)$.
The pushforward 
$df_{Y,Z}|_X:T^{\mu,\nu}_X\to T^{\mu,\nu}_{f_{Y,Z}(X)}$
at $X\in\Mat_\bR^{\mu,\nu}$
is given by 
$df_{Y,Z}|_X(\partial/\partial x_{i,j}|_X)
=\sum_{m=1}^\mu\sum_{n=1}^\nu
Y_{m,i}Z_{n,j}
\partial/\partial x_{m,n}|_{f_{Y,Z}(X)}$,
where 
$M_{k,l}$
is the $(k,l)$-th entry of 
a matrix $M$.
For $X\in\Mat_\bR^{\mu,\nu}$
and for $K\in\bN$,
let 
$\cP
=\{P_1(X),P_2(X),...,P_K(X)\}
\subset\Mat_\bR^\mu$
and 
$\cQ
=\{Q_1(X),Q_2(X),...,Q_K(X)\}
\subset\Mat_\bR^\nu$
be sets of 
$K$ matrices 
which satisfy 
$\sum_{k=1}^K
\tr_\nu[{^t}MP_k(X)MQ_k(X)]>0$
for any non-zero matrix 
$M\in\Mat_\bR^{\mu,\nu}$.
The inner product 
$(u,v)^{\cP,\cQ}_X$
of 
$u,v\in T^{\mu,\nu}_X$
is defined by 
$(\partial/\partial x_{p,q}|_X,\partial/\partial x_{r,s}|_X)^{\cP,\cQ}_X
=\sum_{k=1}^KP_{k,p,r}(X)Q_{k,s,q}(X)$,
where 
$P_{k,i,j}(X)$
and 
$Q_{k,i,j}(X)$
are the $(i,j)$-th entries of 
$P_k(X)$ and $Q_k(X)$,
respectively.
Let 
$\|u\|^{\cP,\cQ}_X$
be 
$\sqrt{(u,u)^{\cP,\cQ}_X}$.
If 
$\cP=\{I_\mu\}$
and 
$\cQ=\{I_\nu\}$
for any $X\in\Mat_\bR^{\mu,\nu}$,
then 
the inner product 
is said to be 
Euclidean.
The pullback of 
$(\cdot,\cdot)^{\cP,\cQ}_{f_{Y,Z}(X)}$
by 
$f_{Y,Z}$
is 
$(\cdot,\cdot)^{\cP',\cQ'}_X$,
where 
$\cP'\subset\Mat_\bR^\mu$
and 
$\cQ'\subset\Mat_\bR^\nu$
consist of 
${^t}YP_k(f_{Y,Z}(X))Y$
and 
${^t}ZQ_k(f_{Y,Z}(X))Z$,
respectively.

Let 
$\SO_\bR^n$
be 
$\{e^A\mid A\in\Ant_\bR^n\}$,
the set of 
special orthogonal matrices.
For $R\in\SO_\bR^n$,
let 
$T_R\subset T^{n,n}_R$
be the tangent space 
of 
$\SO_\bR^n\subset\Mat_\bR^n$
at $R$.
We use 
$(\cdot,\cdot)^{\cP,\cQ}_R$
by restricting the space 
$T^{n,n}_R$
to $T_R$.
The dimension of 
$T_R$
is 
$d_n=n(n-1)/2$.
If $C\subset T_R$
is a cuboid 
framed by edge vectors 
$v_1,v_2,...,v_{d_n}\in T_R$
which are mutually orthogonal 
with respect to 
$(\cdot,\cdot)^{\cP,\cQ}_R$,
then the volume 
$\vol^{\cP,\cQ}_R[C]$
is 
$\prod_{k=1}^{d_n}
\|v_k\|^{\cP,\cQ}_R$.
Let 
$\cF_{\SO_\bR^n}$
be the set of Borel subsets of 
$\SO_\bR^n$.
Let 
$\mu^{\cP,\cQ}:\cF_{\SO_\bR^n}\to[0,\infty)$
be a measure on $\SO_\bR^n$
given by 
$\vol^{\cP,\cQ}_R[\cdot]$.
For $L\in\SO_\bR^n$,
and for 
$M\subset\SO_\bR^n$,
let $LM\subset\SO_\bR^n$
be 
$\{Lx\mid x\in M\}$.
The following lemma implies that, 
if 
the inner product is Euclidean,
then 
$\mu^{\cP,\cQ}$ is a left-invariant Haar measure.

\begin{lemma}\label{lem_left_inv}
If 
$\cP=\cQ=\{I_n\}$,
then,
for any $L\in\SO_\bR^n$,
and for any 
$M\in\cF_{\SO_\bR^n}$
it holds that 
$\mu^{\cP,\cQ}(LM)
=\mu^{\cP,\cQ}(M)$.
\end{lemma}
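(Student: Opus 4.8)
The plan is to exhibit left multiplication by $L$ as an isometry of $\SO_\bR^n$ for the Euclidean metric, and then use that the Riemannian volume measure is invariant under isometries. Throughout, $\cP=\cQ=\{I_n\}$. Set $f=f_{L,I_n}$, so that $f(X)=LX({}^tI_n)=LX$. As $L\in\SO_\bR^n$, the matrix $LX$ is again special orthogonal (${}^t(LX)(LX)={}^tX{}^tL L X={}^tX X=I_n$ and $\det(LX)=\det L\det X=1$), so $f$ is a diffeomorphism of $\SO_\bR^n$ onto itself with inverse $f_{{}^tL,I_n}$. Consequently its pushforward restricts to a linear isomorphism $df|_R\colon T_R\to T_{LR}$ for each $R\in\SO_\bR^n$; from the pushforward formula with $Y=L$, $Z=I_n$ one reads off $df|_R(v)=Lv$ in matrix notation, and indeed $df|_R(RA)=(LR)A\in T_{LR}$ for $A\in\Ant_\bR^n$.

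The computational core is the pullback identity recalled just before the lemma. Applied to $f=f_{L,I_n}$ it says that the pullback of $(\cdot,\cdot)^{\cP,\cQ}_{LR}$ is $(\cdot,\cdot)^{\cP',\cQ'}_R$, where $\cP'$ consists of ${}^tL I_n L$ and $\cQ'$ consists of ${}^tI_n I_n I_n$. Using orthogonality ${}^tL L=I_n$ these reduce to $\cP'=\cQ'=\{I_n\}=\cP=\cQ$, so the pullback of the metric is the metric itself:
\[
	(df|_R(u),df|_R(v))^{\cP,\cQ}_{LR}
	=(u,v)^{\cP,\cQ}_R
	\qquad(u,v\in T_R).
\]
Equivalently, one verifies directly that $(Lu,Lv)^{\cP,\cQ}=\tr_n[{}^tu\,{}^tL L\,v]=\tr_n[{}^tu\,v]=(u,v)^{\cP,\cQ}$. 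Thus $df|_R$ is an isometry of $(T_R,(\cdot,\cdot)^{\cP,\cQ}_R)$ onto $(T_{LR},(\cdot,\cdot)^{\cP,\cQ}_{LR})$.

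From the isometry the volume identity is immediate. If $C\subset T_R$ is a cuboid with mutually orthogonal edges $v_1,\dots,v_{d_n}$, then $df|_R(v_1),\dots,df|_R(v_{d_n})$ are mutually orthogonal in $T_{LR}$ with $\|df|_R(v_k)\|^{\cP,\cQ}_{LR}=\|v_k\|^{\cP,\cQ}_R$, so $df|_R(C)$ is a cuboid with $\vol^{\cP,\cQ}_{LR}[df|_R(C)]=\prod_{k=1}^{d_n}\|v_k\|^{\cP,\cQ}_R=\vol^{\cP,\cQ}_R[C]$. Since $\mu^{\cP,\cQ}$ is assembled from the local volume elements $\vol^{\cP,\cQ}_R[\cdot]$ and $f$ carries the volume element at $R$ to that at $LR$ without distortion, $f$ preserves $\mu^{\cP,\cQ}$; approximating an arbitrary $M\in\cF_{\SO_\bR^n}$ from inside and outside by finite disjoint unions of small cuboids, whose images under $f$ are cuboids of equal volume, and passing to the limit yields $\mu^{\cP,\cQ}(LM)=\mu^{\cP,\cQ}(f(M))=\mu^{\cP,\cQ}(M)$.

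The main obstacle is the final measure-theoretic step, since the paper specifies $\mu^{\cP,\cQ}$ only informally through the local volumes $\vol^{\cP,\cQ}_R[\cdot]$. I would make it rigorous either by the covering/limiting argument just sketched or, more cleanly, by observing that in any smooth local chart the density of $\mu^{\cP,\cQ}$ is $\sqrt{\det G}$ for $G$ the Gram matrix of the metric, whereupon the isometry property forces the change-of-variables factor of $f$ to be identically $1$. Everything else is bookkeeping; the one essential input is ${}^tL L=I_n$, which is precisely where $\cP=\cQ=\{I_n\}$ enters, and for non-Euclidean $\cP,\cQ$ the pulled-back families ${}^tL P_k L$ need not equal $P_k$, so left-invariance may genuinely fail.
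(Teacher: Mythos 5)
Your proof is correct and follows essentially the same route as the paper's: you show that left multiplication by $L$ pulls the Euclidean inner product back to itself (via the pullback formula with $\cP'=\{{^t}LI_nL\}=\{I_n\}$, which is exactly where orthogonality of $L$ enters), so $df_L|_R$ is an isometry $T_R\to T_{LR}$, and then conclude that infinitesimal cuboids map to cuboids of equal volume, giving $\mu^{\cP,\cQ}(LM)=\mu^{\cP,\cQ}(M)$. The paper asserts the isometry property without computation and finishes with the same cuboid argument, so your write-up simply supplies the details (including the measure-theoretic limiting step) that the paper leaves implicit.
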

\begin{proof}
Let 
$f_L:\SO_\bR^n\to\SO_\bR^n$
be 
$U\mapsto LU$,
and let 
$df_L|_U:T_U\to T_{LU}$
be the pushforward.
For any $U\in\SO_\bR^n$
and 
for any $u,v\in T_U$,
we have 
$
	(df_L(u),df_L(v))^{\{I_n\},\{I_n\}}_{LU}
	=(u,v)^{\{I_n\},\{I_n\}}_U
$.
Let 
$C\subset M$
be any infinitesimal cuboid,
and let 
$R\in C$ be one of the vertices.
Then,
$C$
is identified by 
mutually orthogonal edge vectors 
$v_1,v_2,...,v_{d_n}\in T_R$.
It holds that 
$\vol^{\cP,\cQ}_{LR}[LC]
=\vol^{\cP,\cQ}_R[C]$.
Hence,
we have 
$\mu^{\cP,\cQ}(LM)
=\mu^{\cP,\cQ}(M)$.
\end{proof}

Let 
$\mu:\cF_{\SO_\bR^n}\to[0,\infty)$
be 
$\mu^{\cP,\cQ}$
with 
$\cP=\cQ=\{I_n\}$.
For $U\in\SO_\bR^n$,
let 
$\dhat V_U\in\cU(\cH^{\ox mn})$
be 
$\exp(\hat v_{\log U})$.
By Lemma \ref{lem:v_a_b_c},
for any 
$L,R\in\SO_\bR^n$,
it holds that 
$\dhat V_L\dhat V_R
=\dhat V_{L R}$.
Moreover,
by Lemma \ref{lem_v_j_k_f_x},
it holds that 
$\dhat V_L\dhat V_Rf(X)
=\dhat V_Lf(X R)
=f(X L R)$.
Hence, 
$\SO_\bR^n$
acts on the left of $\cH^{\ox mn}$
by 
$\dhat V:\SO_\bR^n\to\cU(\cH^{\ox mn})$.
Let 
$
	\hat W
	\in
	\cL(\cH^{\ox mn})
$
be 
$
	\mu(\SO_\bR^n)^{-1}
	\int_{\SO_\bR^n}
	\dhat V_U
	\mu(dU)
$.

\begin{lemma}\label{lem_proj}
For any mode size $m\ge1$,
it holds that 
$
	\hat K_0
	=
	\hat W
$.
\end{lemma}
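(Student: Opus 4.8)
The plan is to recognize $\hat W$ as the group average of the unitary representation $\dhat V$ of the compact group $\SO_\bR^n$, and to show that such an average is precisely the orthogonal projection onto the invariant subspace of $\dhat V$, which by Theorem \ref{th_radial} coincides with $\cK_0$. Recall that $\cK_0=\Null(\hat T_\inv)$ (since $\hat T_\inv\ge0$) and that $\hat K_0$ is by definition the orthogonal projection onto $\cK_0$.

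First I would identify the invariant subspace of $\dhat V$. Since $\dhat V_U f(X)=f(XU)$ (a consequence of Lemma \ref{lem_v_j_k_f_x} and $\dhat V_L\dhat V_R=\dhat V_{LR}$, which follows from Lemma \ref{lem:v_a_b_c}), a vector $f\in\cH^{\ox mn}$ satisfies $\dhat V_Uf=f$ for every $U\in\SO_\bR^n$ iff $f(XU)=f(X)$ for all $U$. Because the exponential map $A\mapsto e^A$ carries $\Ant_\bR^n$ onto the connected compact group $\SO_\bR^n$, this is exactly the MwRI condition $f(X)=f(Xe^A)$ for all $A\in\Ant_\bR^n$. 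Hence the $\dhat V$-invariant subspace is the space of square-integrable MwRI functions, which by Theorem \ref{th_radial} equals $\cK_0$.

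Second I would verify that $\hat W$ is the orthogonal projection onto this subspace. Since $\SO_\bR^n$ is compact, the measure $\mu$ of Lemma \ref{lem_left_inv} is finite, and being left-invariant it is (by uniqueness of Haar measure up to scale) bi-invariant and inversion-invariant. Strong continuity of $\dhat V$ together with $\|\dhat V_U\|=1$ makes the weak integral $\langle\hat Wf,g\rangle=\mu(\SO_\bR^n)^{-1}\int_{\SO_\bR^n}\langle\dhat V_Uf,g\rangle\,\mu(dU)$ convergent, defining a bounded operator with $\|\hat W\|\le1$. Using $\dhat V_L\dhat V_U=\dhat V_{LU}$ and left-invariance of $\mu$, a change of variable $U\mapsto LU$ gives $\dhat V_L\hat W=\hat W$ for every $L$, so $\hat Wf$ is always $\dhat V$-invariant and thus $\mathrm{range}(\hat W)\subseteq\cK_0$; conversely, if $f\in\cK_0$ then $\dhat V_Uf=f$ for all $U$, whence $\hat Wf=f$. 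These two facts yield $\hat W^2=\hat W$ and $\mathrm{range}(\hat W)=\cK_0$. Self-adjointness follows from $\dhat V_U^*=\dhat V_{U^{-1}}$ and inversion-invariance of $\mu$, since then $\langle\hat Wf,g\rangle=\mu(\SO_\bR^n)^{-1}\int\langle f,\dhat V_{U^{-1}}g\rangle\,\mu(dU)=\langle f,\hat Wg\rangle$. An idempotent self-adjoint operator with range $\cK_0$ is the orthogonal projection onto $\cK_0$, so $\hat W=\hat K_0$.

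The main obstacle I anticipate is the functional-analytic bookkeeping rather than the algebra: justifying the operator-valued integral and, most delicately, upgrading the left-invariance proved in Lemma \ref{lem_left_inv} to the inversion-invariance needed for self-adjointness. I would handle both points by invoking compactness of $\SO_\bR^n$, so that a left Haar measure is automatically bi-invariant and inversion-invariant, and by working throughout with the weak form $\langle\hat Wf,g\rangle$ of the integral, which makes interchanging the inner product with integration over $U$ elementary.
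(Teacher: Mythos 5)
Your proof is correct and follows essentially the same route as the paper's: average $\dhat V_U$ over $\SO_\bR^n$, use the left-invariance of $\mu$ (Lemma \ref{lem_left_inv}) to get $\dhat V_L\hat W=\hat W$ so that $\mathrm{range}(\hat W)\subseteq\cK_0$ via Theorem \ref{th_radial}, and check $\hat Wg=g$ for $g\in\cK_0$. In fact you are slightly more complete than the paper, which stops at these two facts and asserts the conclusion; your explicit verification of self-adjointness of $\hat W$ (via $\dhat V_U^*=\dhat V_{U^{-1}}$ and inversion-invariance of Haar measure on the compact group) rules out an oblique projection and closes a step the paper leaves implicit.
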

\begin{proof}
For any $f\in\cH^{\ox mn}$,
and for any $L\in\SO_\bR^n$,
it holds that 
$
	\dhat V_L\hat Wf
	=
	\mu(\SO_\bR^n)^{-1}
	\int_{\SO_\bR^n}
	f(X L U)
	\mu(dU)
$.
Let 
$U'$
be 
$L U$.
By 
Lemma \ref{lem_left_inv},
it holds that 
$
	\mu(dU)
	=
	[
	\mu(dU)
	/
	\mu(dU')
	]
	\mu(dU')
	=
	\mu(dU')
$.
Hence, 
it holds that 
$\dhat V_L\hat Wf=\hat Wf$.
By Theorem \ref{th_radial},
we have 
$\hat Wf\in\cK_0$.
If 
$g\in\cK_0$,
then 
%by Theorem \ref{th_radial},
it holds that 
$\dhat V_Ug=g$
for any $U\in\SO_\bR^n$.
Hence,
we have 
$
	\hat Wg
	=
	\mu(\SO_\bR^n)^{-1}
	\int_{\SO_\bR^n}
	g
	\mu(dU)
	=
	g
$.
As a result, 
$\hat W$
is the projection 
on $\cK_0$.
\end{proof}

Let 
$\cS^{n-1}\subset\Mat_\bR^{1,n}$
be 
the set of unit row vectors.
For $M\subset\cS^{n-1}$,
and for $U\in\SO_\bR^n$,
let 
$MU\subset\cS^{n-1}$ 
be 
$\{\vec vU\mid\vec v\in M\}$.
Let $\cF_{\cS^{n-1}}$
be the set of Borel subsets of 
$\cS^{n-1}$.
If a measure 
$\lambda:\cF_{\cS^{n-1}}\to[0,\infty)$
satisfies 
$\lambda(MU)=\lambda(M)$
(${^\forall}M\in\cF_{\cS^{n-1}}$,
${^\forall}U\in\SO_\bR^n$),
then 
$\lambda$ 
is said to be 
rotationally uniform.
For 
$\vec v\in\cS^{n-1}$,
let 
$T^\cS_{\vec v}\subset T^{1,n}_{\vec v}$
be the tangent space of 
$\cS^{n-1}\subset\Mat_\bR^{1,n}$
at $\vec v$.
We use 
the inner product 
$(\cdot,\cdot)^{\cP,\cQ}_{\vec v}$
by restricting the space 
to $T^\cS_{\vec v}$.
Since 
elements of 
$\cP$ are scalars,
we assume, 
without loss of generality,
that 
$\cP=\{I_1\}$
and that 
$\cQ=\{Q(\vec v)\}$,
where 
$Q(\vec v)\in\Sym_\bR^n$
is a positive matrix 
which may depend on 
$\vec v$.
Let 
$(\cdot,\cdot)^\cQ_{\vec v}
=(\cdot,\cdot)^{\{I_1\},\{Q(\vec v)\}}_{\vec v}$,
and let 
$\|\cdot\|^\cQ_{\vec v}
=\|\cdot\|^{\{I_1\},\{Q(\vec v)\}}_{\vec v}$.
The dimension of 
$T^\cS_{\vec v}$
is 
$n-1$.
If 
$C\in T^\cS_{\vec v}$
is a cuboid with edge vectors 
$v_1,v_2,...,v_{n-1}
\in T^\cS_{\vec v}$
which are mutually orthogonal 
with respect to 
$(\cdot,\cdot)^\cQ_{\vec v}$.
then 
the volume 
$\vol^\cQ_{\vec v}[C]$
is 
$\prod_{k=1}^{n-1}\|v_k\|^\cQ_{\vec v}$.
Let 
$\lambda^\cQ:\cF_{\cS^{n-1}}\to[0,\infty)$
be the measure given by 
$\vol^\cQ_{\vec v}[\cdot]$.

\begin{lemma}\label{lem_rot_uni_euc}
If 
$\lambda^\cQ:\cF_{\cS^{n-1}}\to[0,\infty)$
is rotationally uniform,
then there exists 
$c>0$ such that, 
for any $\vec v\in\cS^{n-1}$,
$Q(\vec v)$
%of 
%$(\cdot,\cdot)^\cQ_{\vec v}$
is 
$cI_n$.
\end{lemma}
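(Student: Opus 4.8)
The plan is to show that the hypothesis forces the restriction of $Q(\vec v)$ to the tangent space $T^\cS_{\vec v}=\vec v^{\perp}$ to be one fixed constant multiple of the Euclidean form; since the component of $Q(\vec v)$ along $\vec v$ never enters $(\cdot,\cdot)^\cQ_{\vec v}$, and hence never enters $\lambda^\cQ$, I may normalise that component freely to reach $Q(\vec v)=cI_n$. Writing $(\vec a,\vec b)^\cQ_{\vec v}={}^t\vec a\,Q(\vec v)\,\vec b$ for $\vec a,\vec b\in\vec v^{\perp}$, the measure $\lambda^\cQ$ is the Riemannian volume of the metric $g_{\vec v}=Q(\vec v)|_{\vec v^{\perp}}$ on $\cS^{n-1}$, so the lemma reduces to identifying this metric pointwise.

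The main tool I would use is the isotropy subgroup $H_{\vec v}=\{U\in\SO_\bR^n\mid\vec vU=\vec v\}\cong\SO_\bR^{n-1}$. Each $U\in H_{\vec v}$ gives a diffeomorphism $\phi_U:\vec w\mapsto\vec wU$ fixing $\vec v$, whose differential acts on $T^\cS_{\vec v}=\vec v^{\perp}$ as the defining representation of $\SO_\bR^{n-1}$. For $n\ge3$ this representation is irreducible over $\bR$ (no invariant line survives under all rotations), and this is the crux: \emph{once} the form $(\cdot,\cdot)^\cQ_{\vec v}$ is known to be $H_{\vec v}$-invariant, Schur's lemma forces it to be a scalar multiple $c(\vec v)$ of the Euclidean form, i.e.\ $Q(\vec v)|_{\vec v^{\perp}}=c(\vec v)I$. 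For $n=2$ the tangent space is one-dimensional and this scalar shape is automatic.

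To finish I would make $c(\vec v)$ independent of $\vec v$ by transitivity: for any $\vec v,\vec w$ choose $U\in\SO_\bR^n$ with $\vec w=\vec vU$; since $\phi_U$ is an isometry of the round sphere and $\lambda^\cQ(MU)=\lambda^\cQ(M)$, the density $\sqrt{\det\big(Q(\cdot)|_{\cdot^{\perp}}\big)}$ is constant along the orbit, so $c(\vec v)=c(\vec w)=:c>0$. Combining the two steps yields $Q(\vec v)|_{\vec v^{\perp}}=cI$ for every $\vec v$, and the harmless normalisation of the $\vec v$-direction upgrades this to $Q(\vec v)=cI_n$.

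The hard part is the invariance step flagged above: passing from rotational uniformity of the \emph{measure} to $H_{\vec v}$-invariance of the \emph{form}. A bare volume argument controls only $\det\big(Q(\vec v)|_{\vec v^{\perp}}\big)$, which pins the density but not the shape of $g_{\vec v}$, so measure-uniformity by itself is not enough to trigger Schur. I would close this gap with the pullback identity for $(\cdot,\cdot)^{\cP,\cQ}$ recorded earlier: for $U\in H_{\vec v}$ the pullback of the form at $\vec vU$ by $\phi_U$ is the form with matrix $UQ(\vec vU)\,{}^tU=UQ(\vec v)\,{}^tU$, so the form transforms covariantly under the stabiliser, and it is this covariant origin of $\cQ$—rather than the determinant condition alone—that must be invoked to conclude that the bilinear form, and not merely its volume, is $H_{\vec v}$-invariant. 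This is where I expect the real work to lie.
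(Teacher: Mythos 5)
Your proposal follows essentially the same route as the paper's proof: identify the stabilizer of $\vec v$ with $\SO_\bR^{n-1}$ acting irreducibly on $T^\cS_{\vec v}$, apply Schur's lemma to force the form there to be a scalar $c(\vec v)$ times the Euclidean one, use transitivity of the $\SO_\bR^n$-action to make $c$ constant, and normalize the (irrelevant) component of $Q(\vec v)$ along $\vec v$ to upgrade to $Q(\vec v)=cI_n$ --- the last normalization being a point you in fact treat slightly more carefully than the paper, which states the Schur conclusion directly as $Q(\vec v)=\varphi(\vec v)I_n$. The step you flag as ``where the real work lies'' --- passing from rotational uniformity of the measure to stabilizer-invariance of the bilinear form, which a determinant-only volume argument cannot deliver --- is precisely the step the paper itself asserts without further justification (``it is necessary that \dots''), so your account matches the paper's argument, including its level of rigor at that one point.
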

\begin{proof}
For 
$U\in\SO_\bR^n$,
define 
$f_U:\cS^{n-1}\to\cS^{n-1}$
by 
$\cS^{n-1}\ni\vec v\mapsto\vec vU^{-1}=\vec v({^t}U)$.
For any 
$U\in\SO_\bR^n$,
for any $\vec v\in\cS^{n-1}$,
and for any 
$u,v\in T^\cS_{\vec v}$,
it holds that 
$(df_U(u),df_U(v))^{\{Q(\vec v)\}}_{\vec vU^{-1}}
=(u,v)^{\{{^t}UQ(\vec v)U\}}_{\vec v}$.
The group action 
$(U,\vec v)\mapsto f_U(\vec v)$
is transitive,
that is,
for any $\vec v\in\cS^{n-1}$,
the orbit 
$\{f_U(\vec v)\mid U\in\SO_\bR^n\}$
is $\cS^{n-1}$.
For 
$\vec v\in\cS^{n-1}$,
let 
$\Sigma_{\vec v}\subset\SO_\bR^n$
be 
$\{U\in\SO_\bR^n\mid f_U(\vec v)=\vec v\}$,
the set of stabilizers of 
$\vec v$.
For any $\vec v\in\cS^{n-1}$,
$\Sigma_{\vec v}$
is isomorphic to 
$\SO_\bR^{n-1}$.
For 
$\vec v\in\cS^{n-1}$,
define a group representation 
$\rho_{\vec v}:\Sigma_{\vec v}\to\GL(T^\cS_{\vec v})$
by 
$U\mapsto df_U|_{\vec v}$.
Then,
$\rho_{\vec v}$
is irreducible.
Since 
$\lambda^\cQ$ 
is rotationally uniform,
it is necessary that, 
for any 
$\vec v\in\cS^{n-1}$,
for any 
$R\in\Sigma_{\vec v}$,
and for any 
$u,v\in T^\cS_{\vec v}$,
$(\rho_{\vec v}(R)u,\rho_{\vec v}(R)v)^\cQ_{\vec v}$
is 
$(u,v)^\cQ_{\vec v}$,
and so that, 
for any 
$\vec v\in\cS^{n-1}$,
and for any 
$R\in\Sigma_{\vec v}$,
${^t}RQ(\vec v)R$
is 
$Q(\vec v)$.
By Schur's lemma,
there exists 
a function 
$\varphi:\cS^{n-1}\to\bR$
such that 
$Q(\vec v)=\varphi(\vec v)I_n$.
(See \cite{fulton}.)
By the positivity of 
the inner product,
$\varphi(\vec v)$
is positive  
for any 
$\vec v\in\cS^{n-1}$.
By the transitivity 
of $f_U$,
$\varphi(\vec v)$
is constant.
\end{proof}

For 
$\vec v\in\cS^{n-1}$,
define a measure 
$\lambda_{\vec v}:\cF_{\cS^{n-1}}\to[0,\infty)$
by 
$
	\lambda_{\vec v}(M)
	=
	\mu\big(
	\{
	U\in\SO_\bR^n\mid
	\vec vU^{-1}
	\allowbreak
	\in M
	\}
	\big)
$.

\begin{lemma}\label{lem_uniform}
For any $\vec v\in\cS^{n-1}$,
$\lambda_{\vec v}$
is rotationally uniform.
\end{lemma}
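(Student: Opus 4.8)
The plan is to verify the definition of rotational uniformity directly: I will show that $\lambda_{\vec v}(MW)=\lambda_{\vec v}(M)$ holds for every $W\in\SO_\bR^n$ and every $M\in\cF_{\cS^{n-1}}$, reducing the claim to the left-invariance of $\mu$ already established in Lemma \ref{lem_left_inv}.

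First I would unfold the definition of $\lambda_{\vec v}$, writing
\[
	\lambda_{\vec v}(MW)
	=
	\mu\big(\{U\in\SO_\bR^n\mid\vec vU^{-1}\in MW\}\big).
\]
The key algebraic step is to rewrite the defining condition on $U$. Since $MW=\{\vec wW\mid\vec w\in M\}$, the condition $\vec vU^{-1}\in MW$ is equivalent to $\vec vU^{-1}W^{-1}\in M$, and because $U^{-1}W^{-1}=(WU)^{-1}$ this is equivalent to $\vec v(WU)^{-1}\in M$. Hence, by the substitution $U'=WU$,
\[
	\{U\mid\vec vU^{-1}\in MW\}
	=
	W^{-1}\{U'\mid\vec v(U')^{-1}\in M\},
\]
where the bracketed set on the right is exactly the set whose $\mu$-measure defines $\lambda_{\vec v}(M)$. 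I would then apply Lemma \ref{lem_left_inv} with $L=W^{-1}$ to conclude that $\mu$ is unchanged under left translation by $W^{-1}$, giving $\lambda_{\vec v}(MW)=\lambda_{\vec v}(M)$, which is precisely rotational uniformity.

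The main obstacle is bookkeeping rather than conceptual: the sphere $\cS^{n-1}$ carries a right action of $\SO_\bR^n$ (namely $\vec v\mapsto\vec vU$), whereas $\mu$ is a left-invariant Haar measure, so one must be careful to convert the right translation on the sphere into the correct left translation on the group. The identity $(WU)^{-1}=U^{-1}W^{-1}$ is what makes this conversion work and pins down that the relevant group translation is left multiplication by $W^{-1}$. A minor secondary point I would check is measurability: the set $\{U\mid\vec vU^{-1}\in M\}$ is the preimage of the Borel set $M$ under the continuous map $U\mapsto\vec vU^{-1}$, hence Borel, and its left translate under $W^{-1}$ is again Borel, so every set appearing above is a legitimate argument of $\mu$.
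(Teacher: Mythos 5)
Your proof is correct and is essentially identical to the paper's: both rewrite $\{U\mid\vec vU^{-1}\in MW\}$ as the left translate $W^{-1}\{U'\mid\vec v(U')^{-1}\in M\}$ and invoke the left-invariance of $\mu$ from Lemma \ref{lem_left_inv}. Your added measurability remark is a harmless refinement the paper leaves implicit.
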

\begin{proof}
Choose 
$M\in\cF_{\cS^{n-1}}$
and 
$U\in\SO_\bR^n$,
arbitrarily.
It holds that 
$
	\{L\in\SO_\bR^n\mid\vec vL^{-1}\in MU\}
	=
	U^{-1}\{R\in\SO_\bR^n\mid\vec vR^{-1}\in M\}
$.
By Lemma \ref{lem_left_inv},
we have 
$\lambda_{\vec v}(MU)
=\mu(U^{-1}\{R\mid\vec vR^{-1}\in M\})
=\mu(\{R\mid\vec vR^{-1}\in M\})
=\lambda_{\vec v}(M)$.
\end{proof}

Let 
$\lambda_1:\cF_{\cS^{n-1}}\to[0,\infty)$
be the rotationally uniform measure 
with $\lambda_1(\cS^{n-1})=1$.
Consider the case of $m=1$,
and,
for 
$\vec r
=(r_1,r_2,...,r_n)
\in\Mat_\bR^{1,n}$,
let 
$\Ket{\vec r}\in\cH^{\ox n}$
be 
$\ket{r_1}\ox\ket{r_2}\ox\cdots\ox\ket{r_n}$.

\begin{lemma}\label{lem_lambda_1}
If the mode size $m$ is one,
then,
for any $r\ge0$,
and 
for any $\vec u\in\cS^{n-1}$,
it holds that 
$\hat K_0\Ket{r\vec u}
=\int_{\cS^{n-1}}
\Ket{r\vec v}
\lambda_1(d\vec v)$.
\end{lemma}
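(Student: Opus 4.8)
The plan is to reduce the action of $\hat K_0$ on a coherent state to an average over $\SO_\bR^n$ and then push that average forward onto the sphere $\cS^{n-1}$. First I would invoke Lemma \ref{lem_proj} to replace $\hat K_0$ by $\hat W=\mu(\SO_\bR^n)^{-1}\int_{\SO_\bR^n}\dhat V_U\,\mu(dU)$, which gives
\[
	\hat K_0\Ket{r\vec u}
	=
	\mu(\SO_\bR^n)^{-1}
	\int_{\SO_\bR^n}
	\dhat V_U\Ket{r\vec u}
	\,\mu(dU)
	.
\]
Next I would compute $\dhat V_U\Ket{r\vec u}$ using Lemma \ref{lem_a_b_z}. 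Since $m=1$ I take the anti-hermitian part to be $0$, and writing $\dhat V_U=\exp(\hat v_{\log U})=\hat V_{\log U}$ with $B=\log U\in\Ant_\bR^n$, the relation $\hat V_B\ket Z=\Ket{Ze^{-\bar B}}$ together with $\bar B=B$ yields $\dhat V_U\Ket{r\vec u}=\Ket{r\vec u\,U^{-1}}$. Thus the integrand depends on $U$ only through $\vec v=\vec u\,U^{-1}\in\cS^{n-1}$.

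Then I would perform the change of variables $U\mapsto\vec v=\vec u\,U^{-1}$. By the definition preceding Lemma \ref{lem_uniform}, the pushforward of $\mu$ under this map is exactly $\lambda_{\vec u}$, so
\[
	\hat K_0\Ket{r\vec u}
	=
	\mu(\SO_\bR^n)^{-1}
	\int_{\cS^{n-1}}
	\Ket{r\vec v}
	\,\lambda_{\vec u}(d\vec v)
	.
\]
By Lemma \ref{lem_uniform}, $\lambda_{\vec u}$ is rotationally uniform. Because the rotationally uniform measure on $\cS^{n-1}$ is unique up to a positive scalar (the sphere is the homogeneous space $\SO_\bR^n/\SO_\bR^{n-1}$, on which the invariant measure is unique up to normalization), I have $\lambda_{\vec u}=c\,\lambda_1$ with $c=\lambda_{\vec u}(\cS^{n-1})$. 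Since $\vec u\,U^{-1}$ is a unit vector for every $U\in\SO_\bR^n$, the defining set $\{U\mid\vec u\,U^{-1}\in\cS^{n-1}\}$ is all of $\SO_\bR^n$, whence $c=\lambda_{\vec u}(\cS^{n-1})=\mu(\SO_\bR^n)$. Substituting $c=\mu(\SO_\bR^n)$ cancels the prefactor and produces the claimed identity.

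The main obstacle I anticipate is the measure-theoretic bookkeeping rather than the algebra: one must justify the change of variables for the $\cH^{\ox n}$-valued (Bochner) integral $\int\Ket{r\vec v}\,\lambda_{\vec u}(d\vec v)$, and one must correctly invoke uniqueness of the rotationally uniform measure up to normalization in order to identify $\lambda_{\vec u}$ with $\mu(\SO_\bR^n)\lambda_1$ and thereby fix the constant. Once the convention $\dhat V_U\Ket{r\vec u}=\Ket{r\vec u\,U^{-1}}$ is pinned down (via $\bar B=B$ for real antisymmetric $B$), the remaining steps are direct substitutions using the action formulas already established.
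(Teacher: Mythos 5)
Your proposal follows the paper's proof essentially verbatim: it invokes Lemma \ref{lem_proj} to write $\hat K_0$ as the group average $\hat W$, uses Lemma \ref{lem_a_b_z} (with $\bar B=B$ for real antisymmetric $B$) to get $\dhat V_U\Ket{r\vec u}=\Ket{r\vec u\,U^{-1}}$, pushes $\mu$ forward to $\lambda_{\vec u}$, and identifies $\lambda_{\vec u}$ with $\mu(\SO_\bR^n)\lambda_1$ via Lemma \ref{lem_uniform}. Your explicit handling of the normalization constant $c=\lambda_{\vec u}(\cS^{n-1})=\mu(\SO_\bR^n)$ and the uniqueness of the rotationally uniform measure merely spells out what the paper leaves implicit in its final step.
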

\begin{proof}
%Let 
%$c=\mu(\SO_\bR^n)^{-1}$.
By Lemma \ref{lem_proj},
it holds that 
$\hat K_0\Ket{r\vec u}
=
\mu(\SO_\bR^n)^{-1}
\int_{\SO_\bR^n}
\dhat V_U\Ket{r\vec u}
\mu(dU)$.
By Lemma \ref{lem_a_b_z},
it holds that 
$\dhat V_U\Ket{r\vec u}
=\Ket{r\vec uU^{-1}}$,
and so that 
$\hat K_0\Ket{r\vec u}
=%c
\mu(\SO_\bR^n)^{-1}
\int_{\cS^{n-1}}
\Ket{r\vec v}
\lambda_{\vec u}(d\vec v)$.
By Lemma \ref{lem_uniform},
we have 
$\hat K_0\Ket{r\vec u}
=
\int_{\cS^{n-1}}
\Ket{r\vec v}
\allowbreak
\lambda_1(d\vec v)$.
\end{proof}

Let 
$\bT\subset\bR^{n-2}$
be 
$[0,\pi)^{n-2}$,
and let 
$\cT\subset\bR^{n-1}$
be 
$\bT\times[0,2\pi)$.
For 
$k\in\{1,2,...,n-1\}$,
let 
$t_k$
be the coordinate variable 
of the $k$-th entry of 
$\vec t\in\cT$.
We define 
unit row vectors 
$\vec\nu_0,\vec\nu_1,...,\vec\nu_{n-1}
\in\cS^{n-1}$
which are 
parameterized by 
$\vec t\in\cT$,
as follows.
Let 
$\vec\nu_0\in\cS^{n-1}$
be 
$(1,O_{1,n-1})$.
For 
$k\in\{1,2,...,n-1\}$,
let 
$\vec\nu_k\in\cS^{n-1}$
be 
\[
	\Big(
	\cos t_1,\
	\cos t_2\sin t_1,\
	\cdots,\
	\cos t_k\prod_{j=1}^{k-1}\sin t_j,\
	\prod_{j=1}^k\sin t_j,
	O_{1,n-k-1}
	\Big)
	.
\]
Define 
$\varphi:\cT\to\cS^{n-1}$
by 
$\varphi(\vec t)=\vec\nu_{n-1}$.
For 
$k\in\{1,2,...,n\}$,
let 
$x_k$
be the coordinate variable 
of the $k$-th entry of 
$\vec v\in\cS^{n-1}$.
The pushforward 
$d\varphi(\partial/\partial t_i|_{\vec t})$
is 
$\sum_{j=1}^n
(\partial x_j/\partial t_i)
\partial/\partial x_j
|_{\varphi(\vec t)}$.
Let 
$\vec J_i\in\Mat_\bR^{1,n}$
be a row vector whose 
$j$-th entry is 
$\partial x_j/\partial t_i$.
Then,
as the pullback of 
the Euclidean inner product for 
$T^\cS_{\varphi(\vec t)}$,
the inner product of 
$\partial/\partial t_i|_{\vec t}$
and 
$\partial/\partial t_j|_{\vec t}$
is given by 
\[
	\Big(
	\frac{\partial}{\partial t_i}\Big|_{\vec t},
	\frac{\partial}{\partial t_j}\Big|_{\vec t}
	\Big)_{\vec t}
	=
	\Big(
	d\varphi\big(
	\frac{\partial}{\partial t_i}
	\Big|_{\vec t}
	\big)
	,
	d\varphi\big(
	\frac{\partial}{\partial t_j}
	\Big|_{\vec t}
	\big)
	\Big)^{\{I_n\}}_{\vec\nu_{n-1}}
	=
	\vec J_i
	({^t}\vec J_j)
	.
\]
For $k\in\{1,2,...,n-1\}$,
let
\[
	R_k
	=
	\begin{pmatrix}
	I_{k-1}\\
	&\cos t_k&\sin t_k\\
	&-\sin t_k&\cos t_k\\
	&&&I_{n-k-1}
	\end{pmatrix}
	\in
	\Mat_\bR^n
	,
\]
and 
let
$
	D_k
	=\partial R_k/\partial t_k
	\in
	\Mat_\bR^n
$.
Let
$R_{j,k}=R_jR_{j+1}\cdots R_k$. 
If $j>k$, 
let 
$R_{j,k}$
be 
$I_n$. 
It holds that
$\vec\nu_k=\vec\nu_0R_{1,k}$
and that 
$
	\vec J_i
	=
	\vec\nu_{i-1}
	D_i
	R_{i+1,n-1}
$.

\begin{lemma}\label{lem_e_k}
It holds that
$
	\|d\varphi(\partial/\partial t_k|_{\vec t})\|^{\{I_n\}}_{\vec\nu_{n-1}}
	=
	c_k
$,
where
\[
	c_k
	=
	\begin{cases}
	1&\mbox{if }k=1,\\
	\prod_{j=1}^{k-1}\sin t_j
	&\mbox{if }2\le k\le n-1
	.
	\end{cases}
\]
\end{lemma}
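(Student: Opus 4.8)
The plan is to evaluate the squared norm $\vec J_k({^t}\vec J_k)$, which equals $(\|d\varphi(\partial/\partial t_k|_{\vec t})\|^{\{I_n\}}_{\vec\nu_{n-1}})^2$ by the identity $(\partial/\partial t_i|_{\vec t},\partial/\partial t_j|_{\vec t})_{\vec t}=\vec J_i({^t}\vec J_j)$ recorded just before the statement. First I would use the factorization $\vec J_k=\vec\nu_{k-1}D_kR_{k+1,n-1}$. Since $R_{k+1,n-1}=R_{k+1}R_{k+2}\cdots R_{n-1}$ is a product of planar rotations it is orthogonal, so $R_{k+1,n-1}({^t}R_{k+1,n-1})=I_n$ and the tail factor cancels:
\[
	\vec J_k({^t}\vec J_k)
	=\vec\nu_{k-1}D_k({^t}D_k)({^t}\vec\nu_{k-1}).
\]

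Next I would compute $D_k({^t}D_k)$. As $D_k=\partial R_k/\partial t_k$, its only nonzero entries form the $2$-by-$2$ block $\begin{pmatrix}-\sin t_k&\cos t_k\\-\cos t_k&-\sin t_k\end{pmatrix}$ on rows and columns $k$ and $k+1$, and a one-line multiplication shows that this block times its transpose is $I_2$. Hence $D_k({^t}D_k)$ is the diagonal matrix with $1$ in positions $(k,k)$ and $(k+1,k+1)$ and $0$ elsewhere, so $\vec J_k({^t}\vec J_k)$ reduces to the sum of the squares of the $k$-th and $(k+1)$-th entries of the row vector $\vec\nu_{k-1}$.

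Finally I would read those two entries off the explicit formula for $\vec\nu_{k-1}$. Its last nonzero entry is $\prod_{j=1}^{k-1}\sin t_j$, sitting in position $k$, while every entry in position $\ge k+1$ vanishes; in particular the $(k+1)$-th entry is $0$. Therefore $\vec J_k({^t}\vec J_k)=\big(\prod_{j=1}^{k-1}\sin t_j\big)^2$, and taking the nonnegative square root, with the empty product for $k=1$ read as $1$, yields exactly $c_k$. The linear algebra here is routine; the one place that demands care, and the likeliest source of an off-by-one slip, is the index bookkeeping that aligns the nonzero block of $D_k({^t}D_k)$ with the entries of $\vec\nu_{k-1}$ and confirms that precisely one of the two projected coordinates is nonzero.
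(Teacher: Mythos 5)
Your proposal is correct and follows essentially the same route as the paper's proof: cancel the orthogonal tail $R_{k+1,n-1}$, observe that $D_k({^t}D_k)$ is the matrix $E_k$ with $I_2$ in rows and columns $k,k+1$, and read off that the $k$-th entry of $\vec\nu_{k-1}$ is $\prod_{j=1}^{k-1}\sin t_j$ while the $(k+1)$-th entry vanishes. The one detail the paper states explicitly that you only implicitly rely on is that $t_j\in[0,\pi)$ for $j<k$ gives $\sin t_j\ge0$, hence $c_k\ge0$, which is what justifies identifying the nonnegative square root $\sqrt{c_k^2}$ with $c_k$ rather than $|c_k|$.
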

\begin{proof}
Define 
$E_k\in\Mat_\bR^n$ by
\[
	E_k
	=
	\begin{pmatrix}
	O_{k-1}\\
	&I_2\\
	&&O_{n-k-1}
	\end{pmatrix}
	.
\]
It holds that
$D_k({^t}D_k)=E_k$.
Since the $k$-th entry of 
$\vec\nu_{k-1}$ is $c_k$,
it holds that
$
	\vec\nu_{k-1}E_k({^t}\vec\nu_{k-1})
	=
	c_k^2
$.
Hence, 
we have
$
	\vec J_k({^t}\vec J_k)
	=
	c_k^2
$.
Since, 
for $j<k$,
$t_j$
belongs to 
$[0,\pi)$,
it holds that 
$\sin t_j\ge0$,
and so that 
$\sqrt{c_k^2}=c_k$.
\end{proof}

\begin{lemma}\label{lem_tanaka}\label{lem_f_k}
\footnote{This lemma was suggested by Prof. F. Tanaka.}
If $j\ne k$,
then 
$d\varphi(\partial/\partial t_j|_{\vec t})$
is orthogonal to 
$d\varphi(\partial/\partial t_k|_{\vec t})$
with respect to 
the Euclidean inner product.
\end{lemma}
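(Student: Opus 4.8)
The plan is to turn the geometric statement into the vanishing of a scalar inner product of row vectors and then exploit the nested rotational structure recorded above. Since the Euclidean pullback gives $\big(\partial/\partial t_i|_{\vec t},\partial/\partial t_j|_{\vec t}\big)_{\vec t}=\vec J_i({^t}\vec J_j)$, it suffices to prove $\vec J_j({^t}\vec J_k)=0$ whenever $j\ne k$, and by symmetry I may assume $j<k$. Writing $\vec w_i=\vec\nu_{i-1}D_i$ and $e_l$ for the $l$-th standard row vector, the factorization $\vec J_i=\vec w_iR_{i+1,n-1}$ gives
\[
	\vec J_j({^t}\vec J_k)
	=\vec w_jR_{j+1,n-1}\,{^t}\!\big(R_{k+1,n-1}\big)\,{^t}\vec w_k .
\]
Each $R_l$ is orthogonal and, for $j<k$, $R_{j+1,n-1}=R_{j+1,k}R_{k+1,n-1}$; hence the middle factor collapses as $R_{j+1,n-1}\,{^t}R_{k+1,n-1}=R_{j+1,k}$, and the whole expression reduces to $\vec w_jR_{j+1,k}({^t}\vec w_k)$.

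Next I would make $\vec w_i$ explicit. From $\vec\nu_k=\vec\nu_0R_{1,k}$ we get $\vec\nu_i=\vec\nu_{i-1}R_i$, and since $\vec\nu_{i-1}$ is independent of $t_i$ this yields $\vec w_i=\vec\nu_{i-1}D_i=\partial\vec\nu_i/\partial t_i$. Because the $i$-th entry of $\vec\nu_{i-1}$ equals $c_i$ (as used in Lemma~\ref{lem_e_k}) and $D_i$ is supported on the $(i,i+1)$-block, $\vec w_i=c_i(-\sin t_i\,e_i+\cos t_i\,e_{i+1})$, so $\vec w_i$ lives on positions $i$ and $i+1$. Two observations then remove most terms: the rotations $R_{j+1},\dots,R_k$ fix position $j$, so $\vec w_jR_{j+1,k}$ still carries $-c_j\sin t_j$ in position $j$, while the $j$-th entry of $\vec w_k$ vanishes because $j<k$. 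Thus only the $e_{j+1}$-part of $\vec w_j$ survives the pairing:
\[
	\vec J_j({^t}\vec J_k)=c_j\cos t_j\,\big(e_{j+1}R_{j+1,k}\big)({^t}\vec w_k).
\]

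Finally I would evaluate this last pairing. Put $\vec\zeta=e_{j+1}R_{j+1,k-1}$, a vector supported on positions $j+1,\dots,k$ whose $(k+1)$-st entry is therefore zero, and let $\zeta_k$ be its $k$-th entry. Then $e_{j+1}R_{j+1,k}=\vec\zeta R_k$, and the explicit $(k,k+1)$-block of $R_k$ shows that this vector has $k$-th entry $\zeta_k\cos t_k$ and $(k+1)$-st entry $\zeta_k\sin t_k$. Pairing with $\vec w_k=-c_k\sin t_k\,e_k+c_k\cos t_k\,e_{k+1}$ gives $\zeta_kc_k(-\cos t_k\sin t_k+\sin t_k\cos t_k)=0$, which proves the lemma.

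The conceptual heart is that $\vec w_k=\partial\vec\nu_k/\partial t_k$ is tangent to the unit sphere and so is orthogonal to the radial direction it meets in the $(k,k+1)$-plane; I expect the only friction to be the index bookkeeping, in particular the identity $R_{j+1,n-1}\,{^t}R_{k+1,n-1}=R_{j+1,k}$ and the degenerate case $j=k-1$, where $R_{j+1,k-1}=I_n$ and $\vec\zeta=e_k$ but the same cancellation applies verbatim.
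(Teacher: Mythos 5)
Your proposal is correct and follows essentially the same route as the paper's proof: you collapse the middle rotations by orthogonality via $R_{j+1,n-1}\,({^t}R_{k+1,n-1})=R_{j+1,k}$ exactly as the paper does, and your final trigonometric cancellation $-\cos t_k\sin t_k+\sin t_k\cos t_k=0$ is precisely the paper's identities $R_k({^t}D_k)=A_k$ and $F_kA_kF_k=O_n$ (both paired vectors having vanishing $(k+1)$-st entry) written out in explicit coordinates. The added bookkeeping — the closed form $\vec w_i=c_i(-\sin t_i\,e_i+\cos t_i\,e_{i+1})$ and the check of the degenerate case $j=k-1$ — is sound but is absorbed automatically by the paper's more abstract packaging.
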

\begin{proof}
Define 
$A_k\in\Ant_\bR^n$ 
and 
$F_k\in\Mat_\bR^n$ 
by 
\[
	A_k
	=
	\begin{pmatrix}
	O_{k-1}\\
	&0&-1\\
	&1&0\\
	&&&O_{n-k-1}
	\end{pmatrix}
	\mbox{ and }
	F_k
	=
	\begin{pmatrix}
	I_k\\
	&O_{n-k}
	\end{pmatrix}
	,
\]
respectively.
Then,
it hoolds that 
$R_k({^t}D_k)=A_k$
and that 
$F_kA_kF_k=O_n$.
Assume that $j<k$.
Let 
$\vec\nu_{j,k}
=\vec\nu_{j-1}D_jR_{j+1,k}$.
Then, 
it holds that
$\vec J_j({^t}\vec J_k)
=
	\vec\nu_{j,k-1}
	A_k
	({^t}\vec\nu_{k-1})
$,
and that 
\[
	\vec\nu_{j,k-1}
	A_k
	({^t}\vec\nu_{k-1})
	=
	(\vec\nu_{j,k-1}
	F_k)
	A_k
	(F_k
	{^t}\vec\nu_{k-1})
	=
	\vec\nu_{j,k-1}
	O_n
	{^t}\vec\nu_{k-1}
	=
	0
	.
\]
Hence,
we have
$\vec J_j({^t}\vec J_k)
=0$.
\end{proof}

Let 
$P\subset T^\cS_{\vec\nu_{n-1}}$
be a parallelepiped 
framed by 
$\{
	d\varphi(\partial/\partial t_k|_{\vec t})
	\mid k=1,2,...,n-1\}
$.
Let 
\[
	s_k
	=
	\begin{cases}
	\int_0^\pi\sin^{n-k-1}tdt&\mbox{if }1\le k\le n-2,\\
	2\pi&\mbox{if }k=n-1.
	\end{cases}
\]

\begin{lemma}\label{lem_vol_parallelepiped}
If the inner product of 
$T^\cS_{\vec\nu_{n-1}}$
is Euclidean,
then 
\\
(i)\
the volume of 
$P$ 
is 
$\prod_{k=1}^{n-2}
\sin^{n-k-1}t_k$,
\\
and 
\\
(ii)\
the area 
$|\cS^{n-1}|$
of 
$\cS^{n-1}$
is 
$\prod_{k=1}^{n-1}s_k$,
\end{lemma}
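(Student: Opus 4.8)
The plan is to derive part (i) immediately from Lemmas \ref{lem_e_k} and \ref{lem_f_k}, and then to obtain part (ii) by integrating the volume form produced in (i) over the coordinate domain $\cT$.

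For (i) I would argue as follows. By Lemma \ref{lem_f_k} the edge vectors $d\varphi(\partial/\partial t_k|_{\vec t})$, $k=1,2,\ldots,n-1$, are mutually orthogonal with respect to the Euclidean inner product on $T^\cS_{\vec\nu_{n-1}}$, so the volume of the parallelepiped $P$ they frame is simply the product of their norms. By Lemma \ref{lem_e_k} these norms are $c_1,c_2,\ldots,c_{n-1}$, whence $\vol(P)=\prod_{k=1}^{n-1}c_k$. The only remaining work is to simplify this product: since $c_1=1$ and $c_k=\prod_{j=1}^{k-1}\sin t_j$ for $k\ge2$, each factor $\sin t_j$ occurs in exactly those $c_k$ with $j+1\le k\le n-1$, i.e. with multiplicity $n-1-j$. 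Reindexing then gives $\prod_{k=1}^{n-1}c_k=\prod_{k=1}^{n-2}\sin^{n-k-1}t_k$, which is the claimed volume.

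For (ii) the idea is that $\varphi:\cT\to\cS^{n-1}$ is the usual spherical-coordinate chart, a bijection off a set of measure zero, and that under the Euclidean hypothesis the metric volume form pulled back to $\cT$ is exactly $\vol(P)\,dt_1\cdots dt_{n-1}$ with $\vol(P)$ the quantity from (i). I would therefore compute the area as the integral of this form over $\cT=[0,\pi)^{n-2}\times[0,2\pi)$. Because the integrand from (i) does not involve $t_{n-1}$, the multiple integral factors into one-dimensional integrals:
\[
	|\cS^{n-1}|
	=
	\Big(\prod_{k=1}^{n-2}\int_0^\pi\sin^{n-k-1}t\,dt\Big)
	\int_0^{2\pi}dt_{n-1}
	=
	\prod_{k=1}^{n-1}s_k
	,
\]
by the definition of $s_k$.

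The step needing the most care is the geometric bookkeeping in (ii): that integrating the pulled-back round volume form over the parameter box recovers the surface area, which requires that $\varphi$ covers $\cS^{n-1}$ up to measure zero and that its Jacobian with respect to the Euclidean metric equals the $\vol(P)$ of (i). This is the standard change-of-variables principle for integration on a Riemannian manifold; once it is granted, the orthogonality from Lemma \ref{lem_f_k} makes the Gram determinant diagonal, and the factorization of the integral and hence both formulas are immediate.
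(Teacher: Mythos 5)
Your proof is correct and follows essentially the same route as the paper: part (i) combines Lemma \ref{lem_f_k} (orthogonality, so $P$ is a cuboid) with Lemma \ref{lem_e_k} (the edge norms $c_k$), and part (ii) integrates the resulting volume form over $\cT$. Your exponent bookkeeping, showing each $\sin t_j$ appears with multiplicity $n-1-j$ so that $\prod_{k=1}^{n-1}c_k=\prod_{k=1}^{n-2}\sin^{n-k-1}t_k$, and your remark that $\varphi$ covers $\cS^{n-1}$ up to measure zero, merely make explicit steps the paper leaves implicit.
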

\begin{proof}
(i)\
By Lemma 
\ref{lem_tanaka}\label{lem_f_k},
$P$
is a cuboid 
with respect to 
the Euclidean inner product.
By Lemmas 
\ref{lem_e_k}, 
we have 
$
	\vol^{\{I_n\}}_{\vec\nu_{n-1}}[P]
	%=
	%\prod_{k=1}^{n-1}
	%\sin^{n-k-1}t_k
	=
	\prod_{k=1}^{n-2}
	\sin^{n-k-1}t_k
$.
\\
(ii)\
By taking integral of (i),
we obtain (ii).
\end{proof}

For $r\ge0$, 
define 
$g_r\in\cH^{\ox n}$ 
by 
\[
	g_r
	=
	\frac{1}{|\cS^{n-1}|}
	\int_0^{2\pi}
	\Big[
	\left.\overbrace{\int\cdots\int}^{n-2}\right._{\llap{\ \scriptsize$\bT$}}
	\Ket{r\vec\nu_{n-1}}
	\prod_{k=1}^{n-2}
	(\sin t_k)^{n-k-1}
	dt_k
	\Big]
	dt_{n-1}
	,
\]

\begin{lemma}\label{lem_k0}
If the mode size $m$ is one,
then 
for any $\vec r\in\Mat_\bR^{1,n}$, 
it holds that 
$\hat K_0\Ket{\vec r}
=g_{\|\vec r\|}$.
\end{lemma}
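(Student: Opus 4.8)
The plan is to feed the reduction of Lemma~\ref{lem_lambda_1} into the explicit spherical parameterization $\varphi:\cT\to\cS^{n-1}$, $\varphi(\vec t)=\vec\nu_{n-1}$, whose Jacobian has already been computed in Lemmas~\ref{lem_e_k}, \ref{lem_f_k} and \ref{lem_vol_parallelepiped}. First I would write $\vec r=\|\vec r\|\vec u$ with $\vec u\in\cS^{n-1}$ (choosing $\vec u$ arbitrarily when $\vec r=O_{1,n}$, which is harmless since then $\|\vec r\|\vec u=\vec r$ regardless of $\vec u$). Lemma~\ref{lem_lambda_1} then gives
\[
	\hat K_0\Ket{\vec r}
	=
	\int_{\cS^{n-1}}
	\Ket{\|\vec r\|\vec v}
	\lambda_1(d\vec v)
	,
\]
so the whole content of the lemma is to rewrite this integral over $\cS^{n-1}$ as the iterated integral over $\cT$ appearing in the definition of $g_{\|\vec r\|}$.

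Second, I would pin down $\lambda_1$ in the coordinates $\vec t$. By definition $\lambda_1$ is the rotationally uniform probability measure. On the other hand the normalized Euclidean surface measure is itself rotationally uniform: under the right action $\vec v\mapsto\vec vU^{-1}$ the Euclidean inner product is preserved because ${}^tUU=I_n$, so the Euclidean volume form $\vol^{\{I_n\}}_{\cdot}[\cdot]$ is invariant. Since both $\lambda_1$ and the normalized Euclidean surface measure are rotationally uniform probability measures, and the definition of $\lambda_1$ presupposes that such a measure is unique, it follows that $\lambda_1$ equals $\vol^{\{I_n\}}_{\cdot}[\cdot]/|\cS^{n-1}|$. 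Pulling this Euclidean volume form back along $\varphi$, Lemma~\ref{lem_f_k} shows that the pushforwards $d\varphi(\partial/\partial t_k|_{\vec t})$ are mutually orthogonal, Lemma~\ref{lem_e_k} gives their lengths $c_k$, and Lemma~\ref{lem_vol_parallelepiped}(i) therefore yields the volume element $\prod_{k=1}^{n-2}(\sin t_k)^{n-k-1}\,d\vec t$. Hence, in spherical coordinates,
\[
	\lambda_1(d\vec v)
	=
	\frac{1}{|\cS^{n-1}|}
	\prod_{k=1}^{n-2}
	(\sin t_k)^{n-k-1}
	\,dt_1\cdots dt_{n-1}
	.
\]

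Third, I would substitute. Splitting $\cT=\bT\times[0,2\pi)$ so that $t_{n-1}\in[0,2\pi)$ is the azimuthal variable and $t_1,\dots,t_{n-2}\in[0,\pi)$ range over $\bT$, the integral for $\hat K_0\Ket{\vec r}$ becomes
\[
	\frac{1}{|\cS^{n-1}|}
	\int_0^{2\pi}
	\Big[
	\left.\overbrace{\int\cdots\int}^{n-2}\right._{\llap{\ \scriptsize$\bT$}}
	\Ket{\|\vec r\|\vec\nu_{n-1}}
	\prod_{k=1}^{n-2}
	(\sin t_k)^{n-k-1}
	dt_k
	\Big]
	dt_{n-1}
	,
\]
which is exactly $g_{\|\vec r\|}$ by its definition, completing the proof.

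The step I expect to be the main obstacle is not algebraic but measure-theoretic: one must verify that $\varphi$ restricts to a diffeomorphism from the interior of $\cT$ onto $\cS^{n-1}$ minus a set of $\lambda_1$-measure zero, so that the change-of-variables formula is legitimate and no lower-dimensional boundary strata contribute. This is the standard fact underlying spherical coordinates; once it is granted, the identification of the Jacobian with the normalized Euclidean volume form from Lemmas~\ref{lem_e_k}, \ref{lem_f_k} and \ref{lem_vol_parallelepiped} reduces the claim to a direct substitution.
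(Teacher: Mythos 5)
Your proposal is correct and takes essentially the same route as the paper's proof: Lemma~\ref{lem_lambda_1} to reduce to an integral over $\cS^{n-1}$, identification of $\lambda_1$ with the normalized Euclidean surface measure, and the spherical-coordinate volume element from Lemmas~\ref{lem_e_k}, \ref{lem_f_k} and \ref{lem_vol_parallelepiped}. The only minor variation is in the middle step, where you invoke uniqueness of the rotationally uniform probability measure directly while the paper cites its Schur's-lemma result (Lemma~\ref{lem_rot_uni_euc}); both justifications are valid, and your closing remark that the boundary strata of the parameterization $\varphi$ form a $\lambda_1$-null set makes explicit a point the paper leaves tacit.
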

\begin{proof}
By Lemma 
\ref{lem_lambda_1},
we have 
$\hat K_0\Ket{\vec r}
=\int_{\cS^{n-1}}
\Ket{\|\vec r\|\vec v}
\lambda_1(d\vec v)$.
By Lemma 
\ref{lem_rot_uni_euc},
the integration by 
$\lambda_1$
is calculated by 
the Euclidean inner product.
By Lemma 
\ref{lem_vol_parallelepiped},
we have 
$\hat K_0\Ket{\vec r}
=g_{\|\vec r\|}$.
\donotdisplay
{
First,
define the set of orthogonal matrices
with determinant one by  
$\SO_\bR^n=\{e^A\mid A\in\Ant_\bR^n\}$.
Let 
$\mu$ be a left-invariant Haar measure 
on $\SO_\bR^n$.
By Theorem \ref{th_radial} 
with $m=1$, 
the projection 
$\hat K_0:\cH^{\ox n}\to\cK_0$
is represented by 
\[
	(\hat K_0 f)(\vec x)
	=
	\frac{1}{\mu(\SO_\bR^n)}
	\int_{\SO_\bR^n}
	f(\vec x U)
	\mu(dU)
	.
\]
By Eq.
(\ref{eq_def_coherent}),
we have 
\[
	\Ket{\vec r}(\vec x)
	=
	\frac
	{1}
	{\pi^{1/4}}
	e^{-\vec r\vec r^*/2}
	e^{-\vec x({^t}\vec x)/2}
	e^{-\vec r({^t}\vec r)/2}
	e^{\sqrt2\vec x({^t}\vec r)}
	.
\]
Hence,
it holds that 
\begin{align*}
	(\hat K_0\Ket{\vec r})(\vec x)
	=&
	\frac{1}{\mu(\SO_\bR^n)}
	\int_{\SO_\bR^n}
	\Ket{\vec r({^t}U)}(\vec x)
	\mu(dU)
	.
\end{align*}
Expanding the integral 
explicitly,
we obtain 
$\hat K_0\Ket{\vec r}=g_{\|\vec r\|}$.
}
\end{proof}

\donotdisplay
{
\begin{lemma}\label{lem:int_sin}
Assume that $m=1$.
For
$m\in\mathbb N$,
it holds that
\begin{align*}
	\int_0^{\pi}
	\sin^m t dt
	=&
	B\Big(
	\frac{m+1}{2},\frac{1}{2}
	\Big)
	,
\end{align*}
where $B(x,y)$ is the beta function.
\end{lemma}
\begin{proof}
As
$\sin(t-\pi/2)$
is an even function,
it holds that
$\int_0^\pi\sin^mt dt
=$
$2\int_0^{\pi/2}$
$\sin^mt dt$.
Replace $\sin^2t$
by $s$.
The integration interval is
$0\le s\le1$,
and the Jacobian is
\[
	\frac{dt}{ds}
	=
	\frac{1}{ds/dt}
	=
	\frac{1}{2\cos t\sin t}
	=
	\frac{1}{2\sqrt{1-s}\sqrt{s}}
	.
\]
Hence,
\begin{align*}
	&
	\int_0^\pi
	\sin^m t dt
	=
	2\int_0^1
	\frac
	{s^{m/2}}
	{2\sqrt{1-s}\sqrt s}
	ds
	\\
	&
	=
	\int_0^1
	s^{(m+1)/2-1}(1-s)^{1/2-1}
	ds
	=
	B\Big(\frac{m+1}{2},\frac{1}{2}\Big)
\end{align*}
is obtained.
\end{proof}
}

\donotdisplay
{
\begin{lemma}\label{lem_vec_r_vec_s}
Assume that $m=1$.
For any $\vec r,\vec s\in\Mat_\bR^{1,n}$,
it holds that 
\[
	\Bra{\vec r}\hat K_0\Ket{\vec s}
	=
	\Bra{\|\vec r\|\vec\nu_0}\hat K_0\Ket{\vec s}
	.
\]
\end{lemma}
\begin{proof}
Let $A\in\Ant_\bR^n$
be a solution to the equation 
$\vec\nu_0e^{-A}=\vec r/\|\vec r\|$.
%By Eq. $(\ref{eq_ket_vec_alpha})$,
%v_A\ket\vec z=ket U\vecz
By Lemma \ref{lem_a_b_z},
it holds that 
$
	\Bra{\vec r}\hat K_0\Ket{\vec s}
	=
	\Bra{\|\vec r\|\vec\nu_0}e^{-\hat v_A}\hat K_0\Ket{\vec s}
$.
By Theorem \ref{th_radial},
it holds that 
$
	\Bra{\|\vec r\|\vec\nu_0}e^{-\hat v_A}\hat K_0\Ket{\vec s}
	=
	\Bra{\|\vec r\|\vec\nu_0}\hat K_0\Ket{\vec s}
$.
\end{proof}
}

\begin{lemma}\label{lem_type_ii_error_prob}
Assume that $m=1$.
For $\vec r\in\Mat_\bR^{1,n}$,
it holds that
\begin{align}
	\Bra{\vec r}\hat K_0\Ket{\vec r}
	=
	\frac{e^{-\|\vec r\|^2}}{B\left(\frac{n-1}{2},\frac12\right)}
	\int_0^\pi
	e^{\|\vec r\|^2\cos\varphi}
	(\sin\varphi)^{n-2}
	d\varphi
	.
	\label{eq_type_ii_error_prob}
\end{align}
\end{lemma}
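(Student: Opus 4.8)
The plan is to start from Lemma \ref{lem_k0}, which gives $\hat K_0\Ket{\vec r}=g_{\|\vec r\|}$, so that $\Bra{\vec r}\hat K_0\Ket{\vec r}=\Bra{\vec r}g_{\|\vec r\|}$. Equivalently, by Lemma \ref{lem_lambda_1} the left-hand side of (\ref{eq_type_ii_error_prob}) is the sphere average
\[
	\Bra{\vec r}\hat K_0\Ket{\vec r}
	=
	\int_{\cS^{n-1}}
	\langle\vec r\mid\|\vec r\|\vec v\rangle
	\,\lambda_1(d\vec v)
\]
of coherent-state overlaps against the normalized rotationally uniform measure $\lambda_1$. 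Since Lemma \ref{lem_lambda_1} shows $\hat K_0\Ket{r\vec u}$ is independent of the unit direction $\vec u$, I may assume without loss of generality that $\vec r=\|\vec r\|\vec\nu_0=(\|\vec r\|,0,\dots,0)$.

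Next I would evaluate the overlap. For mode size $m=1$ the vectors factor as tensor products of single-mode coherent vectors with real labels, so Eq. (\ref{eq_inner_product}), applied entrywise, gives
\[
	\langle\vec r\mid\|\vec r\|\vec\nu_{n-1}\rangle
	=
	\exp\Big(
	-\tfrac12\|\vec r\|^2
	-\tfrac12\|\vec r\|^2\,\vec\nu_{n-1}({^t}\vec\nu_{n-1})
	+\|\vec r\|\,\vec r({^t}\vec\nu_{n-1})
	\Big).
\]
Using $\vec\nu_{n-1}({^t}\vec\nu_{n-1})=1$ and, in the standard direction, $\vec r({^t}\vec\nu_{n-1})=\|\vec r\|\,\vec\nu_0({^t}\vec\nu_{n-1})=\|\vec r\|\cos t_1$, this collapses to $\exp(-\|\vec r\|^2+\|\vec r\|^2\cos t_1)$, a function of the first polar angle $t_1$ alone.

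Then I would insert this into the explicit polar-coordinate form of $g_{\|\vec r\|}$, whose volume element was computed in Lemma \ref{lem_vol_parallelepiped} as $\prod_{k=1}^{n-2}(\sin t_k)^{n-k-1}$ over $\cT=\bT\times[0,2\pi)$, with total area $|\cS^{n-1}|=\prod_{k=1}^{n-1}s_k$. Because the integrand depends on $t_1$ only, the integrations over $t_2,\dots,t_{n-1}$ separate and contribute $\prod_{k=2}^{n-1}s_k$; dividing by the normalization $\prod_{k=1}^{n-1}s_k$ leaves the single factor $1/s_1$ in front of the remaining $t_1$-integral. Finally, recognizing $s_1=\int_0^\pi\sin^{n-2}t\,dt=B\!\left(\frac{n-1}{2},\frac12\right)$ by the standard beta-function identity and renaming $t_1$ to $\varphi$ yields (\ref{eq_type_ii_error_prob}). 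The only real care needed is the bookkeeping in the middle two steps: checking that the coherent overlap reduces exactly to a function of $\cos t_1$ so that all but one angular integral factor out, and matching that first polar angle to the variable $\varphi$ in the statement; everything else is routine.
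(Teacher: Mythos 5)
Your proposal is correct and takes essentially the same route as the paper's proof: both reduce $\Bra{\vec r}\hat K_0\Ket{\vec r}$ via Lemma \ref{lem_k0} to the spherical average $g_{\|\vec r\|}$, evaluate the coherent overlap by Eq. (\ref{eq_inner_product}) as $e^{-\|\vec r\|^2+\|\vec r\|^2\cos t_1}$, factor out the $t_2,\dots,t_{n-1}$ integrals using the volume element of Lemma \ref{lem_vol_parallelepiped}, and identify $s_1=B\big(\frac{n-1}{2},\frac12\big)$. The only cosmetic difference is that you rotate $\vec r$ to $\|\vec r\|\vec\nu_0$ in both bra and ket, whereas the paper replaces only the bra; both reductions rest on the same rotational invariance of $g_{\|\vec r\|}$.
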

\begin{proof}
\donotdisplay
{
By Lemma \ref{lem_vec_r_vec_s},
it holds that 
$
	\Bra{\vec r}\hat K_0\Ket{\vec r}
	=
	\Bra{\|\vec r\|\vec\nu_0}\hat K_0\Ket{\vec r}
$.
}
By Eq. $(\ref{eq_inner_product})$,
for any $r\ge0$,
it holds that 
\[
	\langle r\vec\nu_0\mid r\vec\nu_{n-1}\rangle
	=
	e^{-r^2+r^2\cos t_1}
	.
\]
By Lemma \ref{lem_k0}, 
it holds that 
\begin{align*}
	&
	\Bra{\vec r}
	\hat K_0
	\Ket{\vec r}
	=
	\Bra{\|\vec r\|\vec\nu_0}
	\hat K_0
	\Ket{\vec r}
	\\
	&=
	\frac{1}{|\cS^{n-1}|}
	\int_0^{2\pi}
	\Big[
	\left.\overbrace{\int\cdots\int}^{n-2}\right._{\llap{ \scriptsize$\bT$}}
	e^{-\|\vec r\|^2+\|\vec r\|^2\cos t_1}
	\prod_{k=1}^{n-2}
	\sin^{n-k-1}t_k
	dt_k
	\Big]
	dt_{n-1}
	\\
	&=
	\frac{1}{s_1}
	\int_0^\pi
	e^{-\|\vec r\|^2+\|\vec r\|^2\cos t_1}
	\sin^{n-2}t_1
	dt_1
	.
\end{align*}
Calculating 
the beta integration 
$
	s_1
	=\int_0^\pi\sin^{n-2}tdt
	%=B((n-1)/2,1/2)
$,
we obtain 
(\ref{eq_type_ii_error_prob}).
\end{proof}

\begin{proof}[{\bf Proof of Theorem \ref{th_type2}}]
Since the POVM 
$\{\hat\Pi_0,\hat\Pi_1\}$
of 
$T^\inv_\alpha$
is given by 
$\hat\Pi_0
=(1-\alpha)\hat K_0$,
we have 
$
	\beta_{\hat\rho_{\vec\theta,0}^{\ox n}}[T^\inv_\alpha]
	=
	(1-\alpha)
	\beta_{\hat\rho_{\vec\theta,0}^{\ox n}}[T^\inv_0]
$.
Let 
$\vec u_m\in\Mat_\bR^{m,1}$
be the unit vector 
whose first entry is one.
For 
$\vec\theta\in\Mat_\bC^{m,1}$,
let 
$A\in\Ant_\bC^n$
be a solution to the equation 
$e^A\vec\theta=\|\vec\theta\|\vec u_m$.
Let 
$E_m\in\Mat_\bR^{m,n}$
be 
$
	(\overbrace{\vec  u_m,\vec u_m,...,\vec u_m}^n)
$.
By Lemma 
\ref{lem_g_a_b_z}, 
it holds that 
$
	\hat U_A
	\hat\rho_{\vec\theta,0}^{\ox n}
	\hat U_A^*
	=
	\hat\rho_{\|\vec\theta\|E_m,0}
$.
Let 
$\hat K'_0
\in\cL(\cH^{\ox n})$
be $\hat K_0$
for $m=1$.
Then,
it holds that 
$
	\Tr[\hat\rho_{\|\vec\theta\|E_m,0}\hat K_0]
	=
	\Tr[\hat\rho_{\|\vec\theta\|E_1,0}\hat K'_0]
	\Tr[\hat\rho_{0E_1,0}\hat K'_0]^{m-1}
$.
By Lemma
\ref{lem_type_ii_error_prob},
we have the statement.
\end{proof}
\donotdisplay
{
\[
	\beta_{T_\inv}
	=
	\frac{e^{-n\|\vec\theta\|^2}}{B(\frac{n-1}{2},\frac12)}
	\int_0^\pi
	e^{n\|\vec\theta\|^2\cos t}
	\sin^{n-2}t
	dt
	.
\]
}

\subsection{Proof of Theorem \ref{th_not_si} ($T^\HH_\alpha$ is not SI)}\label{sec_proof_not_si}

By Eq. (3.12) of 
Leonhardt \cite{leonhardt},
the Fourier transform of 
the Wigner function of 
$\hat\rho\in\cS(\cH)$ is 
$F_{\hat\rho}(u,v)=\Tr[\hat\rho\exp(-iu\hat q-iv\hat p)]$,
where 
$i=\sqrt{-1}$,
$\hat q=(\hat a+\hat a^*)/\sqrt2$
and
$\hat p=-i(\hat a-\hat a^*)/\sqrt2$.
For $k\in\{1,2,...,m\}$,
let 
$\hat q_k=(\hat a_k+\hat a_k^*)/\sqrt2\in\cL(\cH^{\ox m})$
and let
$\hat p_k=-i(\hat a_k-\hat a_k^*)/\sqrt2\in\cL(\cH^{\ox m})$.
Let 
\[
	\vec q=
	\begin{pmatrix}
	\hat q_1\\\vdots\\\hat q_m
	\end{pmatrix}
	\in\Mat_{m,1}^{m,1}
	,\
	\vec p=
	\begin{pmatrix}
	\hat p_1\\\vdots\\\hat p_m
	\end{pmatrix}
	\in\Mat_{m,1}^{m,1}
	\mbox{ and }
	\vec r
	=
	\begin{pmatrix}
	\vec q\\\vec p
	\end{pmatrix}
	\in\Mat_{m,1}^{2m,1}
	.
\]
Let 
\[
	\vec u=
	\begin{pmatrix}
	u_1\\\vdots\\ u_m
	\end{pmatrix}
	\in\Mat_\bR^{m,1}
	,\
	\vec v=
	\begin{pmatrix}
	v_1\\\vdots\\ v_m
	\end{pmatrix}
	\in\Mat_\bR^{m,1}
	\mbox{ and }
	\vec w
	=
	\begin{pmatrix}
	\vec u\\\vec v
	\end{pmatrix}
	\in\Mat_\bR^{2m,1}
	.
\]
The Fourier transform of the Wigner function of 
$\hat\rho\in\cS(\cH^{\ox m})$ is
$
	F_{\hat\rho}
	=
	\Tr[
		\hat\rho
		\exp(-i({^t}\vec w)\vec r)
	]
$.
Let 
$F_{\vec\theta,\eta,N}(\vec u,\vec v)
=F_{\hat\rho_{\vec\theta,\eta,N}}(\vec u,\vec v)$.
We use 
$\vec\mu_{\vec\theta}$ 
and 
$G_\eta$
defined in 
(\ref{eq_g_eta}).

\begin{lemma}\label{lem_fourier_wigner}
It holds that 
\[
	F_{\vec\theta,\eta,N}(\vec u,\vec v)
	=
	\exp\Big(
	-\frac{2N+1}{4}
	({^t}\vec w)G_\eta({^t}G_\eta)\vec w
	-i\sqrt2
	({^t}\vec w)G_\eta\vec\mu_{\vec\theta}
	\Big)
	.
\]
\end{lemma}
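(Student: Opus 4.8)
The plan is to peel off the squeezing operator by conjugation, reduce to the displaced thermal state, and then compute a Gaussian characteristic function mode by mode. Since $\hat S_\eta$ is unitary and $\hat\rho_{\vec\theta,\eta,N}=\hat S_\eta\hat\rho_{\vec\theta,N}\hat S_\eta^*$, cyclicity of the trace gives
\[
	F_{\vec\theta,\eta,N}(\vec u,\vec v)
	=
	\Tr[\hat\rho_{\vec\theta,N}\,\hat S_\eta^*\exp(-i({^t}\vec w)\vec r)\hat S_\eta]
	=
	\Tr[\hat\rho_{\vec\theta,N}\exp(-i({^t}\vec w)\,\hat S_\eta^*\vec r\hat S_\eta)].
\]
Everything then hinges on how $\hat S_\eta$ conjugates the quadrature vector $\vec r$, so the first step is to show $\hat S_\eta^*\vec r\hat S_\eta=G_\eta\vec r$. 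Writing $\dvec a={^t}(\hat a_1,\dots,\hat a_m,\hat a_1^*,\dots,\hat a_m^*)$ (the $n=1$ case), the quadratures are a fixed $\bC$-linear recombination $\vec r=\Omega\dvec a$ with $\Omega=\tfrac{1}{\sqrt2}\left(\begin{smallmatrix}I_m&I_m\\-iI_m&iI_m\end{smallmatrix}\right)$. By Lemma~\ref{lem_s_rep} we have $[\dvec a,\hat s_\eta]=\eta\dvec a$, hence $\mathrm{ad}_{\hat s_\eta}^k\dvec a=(-\eta)^k\dvec a$, and the Hadamard expansion yields $\hat S_\eta^*\dvec a\hat S_\eta=e^{-\mathrm{ad}_{\hat s_\eta}}\dvec a=e^{\eta}\dvec a$. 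As $\Omega$ is a scalar matrix it passes through the conjugation, so $\hat S_\eta^*\vec r\hat S_\eta=\Omega e^{\eta}\Omega^{-1}\vec r$.

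The key algebraic identity is $\Omega e^{\eta}\Omega^{-1}=G_\eta$. I would prove it by a direct $2\times2$ block computation of $\Omega\eta\Omega^{-1}$: with $\eta=\left(\begin{smallmatrix}A&S\\\bar S&\bar A\end{smallmatrix}\right)$ and the relations $A+\bar A=2\re A$, $A-\bar A=2i\im A$ (and likewise for $S$), the four blocks of $\Omega\eta\Omega^{-1}$ come out exactly as $\re A+\re S$, $-\im A+\im S$, $\im A+\im S$, $\re A-\re S$, i.e.\ the real matrix appearing in the exponent of $(\ref{eq_g_eta})$; exponentiation commutes with conjugation by the invertible $\Omega$, so $\Omega e^{\eta}\Omega^{-1}=G_\eta$. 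Using ${^t}\vec w\,G_\eta=\,{^t}({^t}G_\eta\vec w)$, this reduces the claim to the non-squeezed transform $\tilde F_{\vec\theta,N}(\vec w):=\Tr[\hat\rho_{\vec\theta,N}\exp(-i({^t}\vec w)\vec r)]$ evaluated at the shifted argument ${^t}G_\eta\vec w$.

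Next I would compute $\tilde F_{\vec\theta,N}$ mode by mode, since $\hat\rho_{\vec\theta,N}=\bigotimes_k\hat\rho_{\theta_k,N}$ and $\hat\rho_{\theta_k,N}=\hat D_{\theta_k}\hat\rho_{0,N}\hat D_{\theta_k}^*$. The displacement relation $(\ref{eq_displace})$ gives $\hat D_\theta^*\hat q\hat D_\theta=\hat q+\sqrt2\re\theta$ and $\hat D_\theta^*\hat p\hat D_\theta=\hat p+\sqrt2\im\theta$, so displacement contributes only the scalar factor $\exp(-i\sqrt2(u\re\theta+v\im\theta))$. For the thermal factor I would recognise $-iu\hat q-iv\hat p=\hat d_\xi$ with $\xi=(v-iu)/\sqrt2$ and $|\xi|^2=(u^2+v^2)/2$, and evaluate $\Tr[\hat\rho_{0,N}\hat D_\xi]$ either from the Fock expansion $(\ref{eq_gauss_fock})$ together with $(\ref{eq_fock_coherent})$, or from the Gaussian-mixture integral; both give $\exp(-(N+\tfrac12)|\xi|^2)=\exp(-\tfrac{2N+1}{4}(u^2+v^2))$. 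Taking the product over the $m$ modes and collecting terms in vector form yields $\tilde F_{\vec\theta,N}(\vec w)=\exp(-\tfrac{2N+1}{4}({^t}\vec w)\vec w-i\sqrt2({^t}\vec w)\vec\mu_{\vec\theta})$. Substituting $\vec w\mapsto{^t}G_\eta\vec w$ and using $({^t}({^t}G_\eta\vec w))({^t}G_\eta\vec w)={^t}\vec w\,G_\eta({^t}G_\eta)\vec w$ and $({^t}({^t}G_\eta\vec w))\vec\mu_{\vec\theta}={^t}\vec w\,G_\eta\vec\mu_{\vec\theta}$ gives the stated formula.

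I expect the main obstacle to be the identity $\Omega e^{\eta}\Omega^{-1}=G_\eta$: both the block bookkeeping that matches $\Omega\eta\Omega^{-1}$ to the real generator of $G_\eta$, and the justification of the Hadamard manipulation for the unbounded operator $\hat s_\eta$ (which is legitimate on the dense domain spanned by coherent vectors, on which all the relevant series converge). The thermal Gaussian integral is standard but must be carried out with care over the sign and normalisation so that the coefficient $\tfrac{2N+1}{4}$ emerges correctly.
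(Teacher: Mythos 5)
Your proof is correct, and its computational core coincides with the paper's: both rest on Lemma~\ref{lem_s_rep} giving $\hat S_\eta^*\dvec a\hat S_\eta=e^\eta\dvec a$, on the block identity $L^*e^\eta L=G_\eta$ (your $\Omega$ is exactly the paper's $L^*$, and your $2\times2$ block check of $\Omega\eta\Omega^{-1}$ is precisely the verification the paper asserts without displaying), and on the displacement relation~(\ref{eq_displace}). Where you diverge is in the order of reductions. The paper keeps the state squeezed and works in two stages: it first evaluates the pure case $F_{\vec\theta,\eta,0}$ by conjugating $\dvec a$ through $\hat S_\eta$ and $\hat D_{\vec\theta}$ inside a vacuum expectation, using $\hat D_{\vec z}\Ket{\vec 0_m}=\Ket{\vec z}$ from Eq.~(\ref{eq_d_ris}) together with the coherent overlap~(\ref{eq_inner_product}); it then passes to $N>0$ by writing $F_{\vec\theta,\eta,N}$ as the $2m$-fold Gaussian mixture~(\ref{eq_g_f}) and recognizing that integral as the Fourier transform of the Gaussian weight $g_{\vec\theta,N}$. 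You instead strip the squeezing first, by cyclicity of the trace, reducing everything to the non-squeezed transform at the shifted argument ${^t}G_\eta\vec w$, and then compute the displaced thermal characteristic function mode by mode via $\Tr[\hat\rho_{0,N}\hat D_\xi]=e^{-(N+1/2)|\xi|^2}$ with $\xi=(v-iu)/\sqrt2$. Your arrangement buys a cleaner separation of the squeezing from the mixture and replaces the multimode Gaussian integral over $\re(W),\im(W)$ by a one-mode calculation; its only cost is that the thermal factor needs an independent evaluation --- if done through the Fock form~(\ref{eq_gauss_fock}) it requires the Laguerre generating function, which the paper never introduces, whereas your alternative Gaussian-mixture evaluation is just the single-mode instance of the paper's step~(\ref{eq_g_f}) and closes the argument entirely with tools already present in the paper. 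Either variant of your final step is sound, so the proposal stands as a complete proof.
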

\begin{proof}
First, 
define 
$L\in\Mat_\bC^{2m}$
and 
$\dvec a\in\Mat_{m,1}^{2m,1}$
by 
\[
	L
	=
	\frac{1}{\sqrt2}
	\begin{pmatrix}
	I_m&iI_m\\
	I_m&-iI_m
	\end{pmatrix}
	\mbox{ and }
	\dvec a
	=L\vec r
	,
\]
respectively.
Let 
$\vec z
=-i(\vec u+i\vec v)/\sqrt2
\in\Mat_\bC^{m,1}$,
and let 
$\dvec z
={^t}({^t}\vec z,\vec z^*)
\in\Mat_\bC^{2m,1}$.
Let 
$\hat D_{\vec z}
=\exp(\dvec a^*K_m\dvec z)
\in\cU(\cH^{\ox m})$.
Then,
we have 
\[
	\exp(-i({^t}\vec w)\vec r)
	=
	\exp(-i({^t}\vec w)L^*\dvec a)
	=
	\exp(-i({^t}\vec w)L^*K_m^2\dvec a)
	=
	\hat D_{\vec z}
	.
\]
By Eq. (\ref{eq_d_ris}),
we have 
$\hat D_{\vec z}\Ket{\vec0_m}
=\Ket{\vec z}$.
By Eq. (\ref{eq_inner_product}),
we have 
$\big\langle\vec0_m
\Ket{\vec z}
=\exp(-\|\vec u\|^2/4-\|\vec v\|^2/4)$.
Hence, 
we obtain 
\begin{align}
	F_{\vec0_m,O_{2m},0}(\vec u,\vec v)
	=\exp(-({^t}\vec w)\vec w/4)
	.
	\label{eq_f000}
\end{align}
Next,
we have 
$L^*e^\eta L
=\exp(L^*\eta L)
=G_\eta$.
Moreover,
it holds that 
$F_{\vec\theta,\eta,0}
=
\Bra{\vec0_m}
\hat D_{\vec\theta}^*
\hat S_\eta^*
\exp(-i({^t}\vec w)L^*\dvec a)
\hat S_\eta
\hat D_{\vec\theta}
\Ket{\vec0_m}$.
By Lemma \ref{lem_s_rep},
it holds that 
$\hat S_\eta^*\dvec a\hat S_\eta
=e^\eta\dvec a$.
By Eq. (\ref{eq_displace}),
it holds that 
$\hat D_{\vec\theta}^*
\dvec a
\hat D_{\vec\theta}
=\dvec a+\sqrt2L\vec\mu_{\vec\theta}\hat I^{\ox m}$.
Hence,
we have 
\begin{align}
	F_{\vec\theta,\eta,0}(\vec u,\vec v)
	=&
	\Bra{\vec0_m}
	\exp[-i({^t}\vec w)L^*e^\eta(\dvec a+\sqrt2L\vec\mu_{\vec\theta})]
	\Ket{\vec0_m}
	\nonumber
	\\
	=&
	\Bra{\vec0_m}
	\exp[-i({^t}\vec w)G_\eta\vec r-i\sqrt2({^t}\vec w)G_\eta\vec\mu_{\vec\theta}]
	\Ket{\vec0_m}
	.
	\label{eq_f_not_000}
\end{align}
By (\ref{eq_f000})
and (\ref{eq_f_not_000}),
we obtain 
\begin{align}
	F_{\vec\theta,\eta,0}(\vec u,\vec v)
	=
	\exp\Big(
	-\frac14
	({^t}\vec w)G_\eta({^t}G_\eta)\vec w
	-i\sqrt2
	({^t}\vec w)G_\eta\vec\mu_{\vec\theta}
	\Big)
	.
	\label{eq_f_statement}
\end{align}
Next,
for 
$\vec x={^t}(x_1,x_2,...,x_m)\in\Mat_\bR^{m,1}$
and for 
$\vec y={^t}(y_1,y_2,...,y_m)\in\Mat_\bR^{m,1}$,
let 
$g_{\vec\theta,N}(\vec x,\vec y)
=(\pi N)^{-1}e^{-\|\vec x+i\vec y-\vec\theta\|^2/N}$.
Then, 
it holds that 
\begin{align}
	F_{\vec\theta,\eta,N}(\vec u,\vec v)
	=&
	\left.\overbrace{\idotsint}^{2m}\right._{\bR^{2m}}
	g_{\vec\theta,N}(\vec x,\vec y)
	F_{\vec x+i\vec y,\eta,0}(\vec u,\vec v)
	\prod_{k=1}^mdx_kdy_k
	.
	\label{eq_g_f}
\end{align}
By (\ref{eq_f_statement}),
the integrated factor of 
(\ref{eq_g_f})
is the Fourier transform of 
$g_{\vec\theta,N}(\vec x,\vec y)$.
Hence,
we obtain the statement.
\end{proof}

We use 
$\Sigma_{\eta,N}$
defined in (\ref{eq_vec_mu_vec_theta_eta_sigma_eta_n}).

\begin{lemma}\label{lem_multi_normal}
It holds that 
\[
	\frac{1}{\pi^m}
	\Bra{\vec z}\hat\rho_{\vec\theta,\eta,N}\Ket{\vec z}
	=
	\frac
	{\exp[-{^t}(\vec\mu_{\vec z}-G_\eta\vec\mu_{\vec\theta})\Sigma_{\eta,N}^{-1}(\vec\mu_{\vec z}-G_\eta\vec\mu_{\vec\theta})/2]}
	{(2\pi)^m\sqrt{\det[\Sigma_{\eta,N}]}}
	. 
\]
\donotdisplay
{
where 
$\vec\mu_{\vec\theta}$
and 
$G_\eta$ 
are defined in 
(\ref{eq_g_eta}),
and where 
}
\end{lemma}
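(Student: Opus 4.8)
The plan is to recognize the left-hand side as the Husimi $Q$-function of $\hat\rho_{\vec\theta,\eta,N}$ and to obtain it by Fourier-inverting the characteristic function already computed in Lemma~\ref{lem_fourier_wigner}. Since the heterodyne POVM element is $\pi^{-m}\Ket{\vec z}\Bra{\vec z}$, the quantity $\pi^{-m}\Bra{\vec z}\hat\rho_{\vec\theta,\eta,N}\Ket{\vec z}$ is exactly the probability density of the outcome $\vec z$, viewed as a density in the real coordinate vector $\vec\mu_{\vec z}={}^t(\re(\vec z),\im(\vec z))\in\Mat_\bR^{2m,1}$. The goal is to show this density is the stated $2m$-variate normal with mean $G_\eta\vec\mu_{\vec\theta}$ and covariance $\Sigma_{\eta,N}$.

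First I would record the relationship between $F_{\vec\theta,\eta,N}$, which is the symmetrically ordered characteristic function $\Tr[\hat\rho\exp(-i({}^t\vec w)\vec r)]$, and the anti-normally ordered characteristic function whose Fourier transform is the $Q$-function. The two differ by a single vacuum factor: reading off the vacuum case (\ref{eq_f000}), passing from symmetric to anti-normal ordering multiplies the characteristic function by $\exp(-({}^t\vec w)\vec w/4)$. Hence the characteristic function attached to $Q$ is $F_{\vec\theta,\eta,N}(\vec u,\vec v)\exp(-({}^t\vec w)\vec w/4)$, and $Q$ is its Fourier transform, with the scaling $\vec r\leftrightarrow\sqrt2\,\vec\mu_{\vec z}$ inherited from $\hat a\Ket\theta=\theta\Ket\theta$ (which gives $\langle\hat q\rangle=\sqrt2\,\re(\theta)$ and $\langle\hat p\rangle=\sqrt2\,\im(\theta)$). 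Because the full $N$-dependence is already baked into Lemma~\ref{lem_fourier_wigner} through (\ref{eq_g_f}), no separate treatment of the thermal mixture is needed.

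Next I would substitute Lemma~\ref{lem_fourier_wigner} and merge the two Gaussian factors. Since $\Sigma_{\eta,N}=\tfrac14[(2N+1)G_\eta({}^tG_\eta)+I_{2m}]$ by (\ref{eq_vec_mu_vec_theta_eta_sigma_eta_n}), the product is
\[
	\exp\Big(-\tfrac{2N+1}{4}({}^t\vec w)G_\eta({}^tG_\eta)\vec w-\tfrac14({}^t\vec w)\vec w-i\sqrt2({}^t\vec w)G_\eta\vec\mu_{\vec\theta}\Big)
	=\exp\big(-({}^t\vec w)\Sigma_{\eta,N}\vec w-i\sqrt2({}^t\vec w)G_\eta\vec\mu_{\vec\theta}\big),
\]
a Gaussian in $\vec w$ with a purely imaginary linear shift. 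Fourier inversion is then routine: completing the square and invoking the multivariate Gaussian integral $\int_{\bR^{2m}}\exp(-\tfrac12({}^t\vec w)A\vec w+i({}^t\vec b)\vec w)\,d\vec w=(2\pi)^m\det[A]^{-1/2}\exp(-\tfrac12({}^t\vec b)A^{-1}\vec b)$ with $A=2\Sigma_{\eta,N}$ and $\vec b=\sqrt2(\vec\mu_{\vec z}-G_\eta\vec\mu_{\vec\theta})$ produces the quadratic form $-\tfrac12{}^t(\vec\mu_{\vec z}-G_\eta\vec\mu_{\vec\theta})\Sigma_{\eta,N}^{-1}(\vec\mu_{\vec z}-G_\eta\vec\mu_{\vec\theta})$, identifying the mean as $G_\eta\vec\mu_{\vec\theta}$ and the covariance as $\Sigma_{\eta,N}$.

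The main obstacle is entirely the bookkeeping of constants and conventions: the factor $\sqrt2$ relating $\vec z$ to the quadrature expectations, the Jacobian in passing from the complex displacement variable to $(\vec u,\vec v)$, and the vacuum factor $\exp(-\|\vec w\|^2/4)$ must all combine so that the scattered powers of $\pi$ and $2$ collapse to exactly $(2\pi)^{-m}\det[\Sigma_{\eta,N}]^{-1/2}$. I would pin these down once and for all by checking the vacuum case $\vec\theta=\vec0_m$, $\eta=O_{2m}$, $N=0$, where $G_{O_{2m}}=I_{2m}$ and $\Sigma_{O_{2m},0}=\tfrac12 I_{2m}$, so that the claimed formula must reduce to $\pi^{-m}|\langle\vec0_m\mid\vec z\rangle|^2=\pi^{-m}\exp(-\|\vec z\|^2)$ by (\ref{eq_inner_product}); matching this one instance fixes every constant and certifies that the general computation is correctly normalized.
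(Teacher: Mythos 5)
Your proposal is correct and is essentially the paper's own argument in different clothing: the paper converts $\Bra{\vec z}\hat\rho_{\vec\theta,\eta,N}\Ket{\vec z}$ via the Wigner overlap formula and Parseval's identity into $(2\pi)^{-m}\int\overline{F_{\Ket{\vec z}\Bra{\vec z}}}\,F_{\vec\theta,\eta,N}$, where the coherent projector's characteristic function (the $\vec\theta=\vec z$, $\eta=O_{2m}$, $N=0$ case of Lemma \ref{lem_fourier_wigner}) supplies exactly your anti-normal-ordering factor $e^{-({}^t\vec w)\vec w/4}$ and the Fourier kernel $e^{-i\sqrt2({}^t\vec w)\vec\mu_{\vec z}}$, after which the same merging into $\Sigma_{\eta,N}$, completion of the square (the paper's shift $\vec w'=\vec w+i\Sigma_{\eta,N}^{-1}\vec\nu/\sqrt2$), and Gaussian integration finish the proof. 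Your $Q$-function framing and your vacuum-case normalization check are sound, but they do not constitute a different route.
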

\begin{proof}
Let 
$W_{\hat\rho}(x,y)$ be the Wigner function of 
$\hat\rho\in\cS(\cH)$.
By the overlap formula,
it holds that 
$\Tr[\hat\rho\hat\sigma]
=2\pi\iint_{\bR^2}
W_{\hat\rho}(x,y)W_{\hat\sigma}(x,y)dxdy$.
(See (3.22) of \cite{leonhardt}.)
Moreover,
by the Parseval's formula,
it holds that 
$\Tr[\hat\rho\hat\sigma]
=(2\pi)^{-1}\iint_{\bR^2}
\overline{F_{\hat\rho}(u,v)}F_{\hat\sigma}(u,v)dudv$.
Let 
$\vec\nu=\vec\mu_{\vec z}-G_\eta\vec\mu_{\vec\theta}$.
By Lemma 
\ref{lem_fourier_wigner},
we have 
\begin{align}
	\Bra{\vec z}\hat\rho_{\vec\theta,\eta,N}\Ket{\vec z}
	=&
	\frac{1}{(2\pi)^m}
	\left.\overbrace{\idotsint}^{2m}\right._{\bR^{2m}}
	\exp\Big(
	-({^t}\vec w)\Sigma_{\eta,N}\vec w
	-i\sqrt2({^t}\vec w)\vec\nu
	\Big)
	\nonumber
	\\
	&\times
	\prod_{i=k}^m
	du_kdv_k
	. 
	\label{eq_pure_hetero_probab}
\end{align}
Let 
$\vec w'=\vec w+i\Sigma_{\eta,N}^{-1}\vec\nu/\sqrt2$.
Then,
the argument of the exponential function 
of (\ref{eq_pure_hetero_probab})
is equal to 
$
	-({^t}\vec w')\Sigma_{\eta,N}\vec w'
	-({^t}\vec\nu)\Sigma_{\eta,N}^{-1}\vec\nu/2
$.
Calculating the Gausian intagration
and multiplying $\pi^{-m}$,
we obtain 
the statement.
\end{proof}

Let 
$\kappa
={^t}\vec\mu_{\vec\theta,\eta}\Sigma_{\eta,N}^{-1}\vec\mu_{\vec\theta,\eta}$.

\begin{lemma}\label{lem_norm}
For any $r>0$,
for any $\vec\theta\in\Mat_\bC^{m,1}\backslash\{\vec0_m\}$
and 
for any $N\ge0$,
there exists 
$\eta\in\Sqz^m$
such that 
\begin{align}
	\kappa
	=
	\frac
	{
	4r^2
	\|\vec\theta\|^2
	}
	{
	(2N+1)r^2+1
	}
	.
	\label{eq_norm}
\end{align}
\end{lemma}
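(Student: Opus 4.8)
The plan is to reduce the general case to that of a real displacement, where a single diagonal squeeze makes the computation transparent, and then to transport the answer back by a passive rotation under which $\kappa$ is invariant.

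First I would treat the case $\vec\theta\in\Mat_\bR^{m,1}$, so that $\vec\mu_{\vec\theta}={}^t({}^t\vec\theta,{}^t\vec0_m)$ and $\|\vec\mu_{\vec\theta}\|=\|\vec\theta\|$. Take $\eta_0\in\Sqz^m$ with anti-hermitian part $O_m$ and symmetric part $(\log r)I_m$. Then by (\ref{eq_g_eta}) the matrix $G_{\eta_0}=\mathrm{diag}(rI_m,r^{-1}I_m)$ is symmetric, so ${}^tG_{\eta_0}=G_{\eta_0}$ and, by (\ref{eq_vec_mu_vec_theta_eta_sigma_eta_n}), $\Sigma_{\eta_0,N}=\mathrm{diag}((\tfrac{2N+1}{4}r^2+\tfrac14)I_m,(\tfrac{2N+1}{4}r^{-2}+\tfrac14)I_m)$. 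Since the lower block of $G_{\eta_0}\vec\mu_{\vec\theta}$ vanishes, only the upper block contributes, and
\[
	\kappa
	=
	{}^t(r\vec\theta)\Big[\Big(\tfrac{2N+1}{4}r^2+\tfrac14\Big)I_m\Big]^{-1}(r\vec\theta)
	=
	\frac{r^2\|\vec\theta\|^2}{\frac{2N+1}{4}r^2+\frac14}
	=
	\frac{4r^2\|\vec\theta\|^2}{(2N+1)r^2+1}
	,
\]
which is exactly (\ref{eq_norm}).

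For a general $\vec\theta\in\Mat_\bC^{m,1}\setminus\{\vec0_m\}$ I would introduce a passive rotation. Let $\vec e\in\Mat_\bR^{m,1}$ be the first standard unit vector, and choose $A\in\Ant_\bC^m$ with $U:=e^{A}$ unitary and $U(\|\vec\theta\|\vec e)=\vec\theta$; this is possible because some unitary sends $\vec e$ to $\vec\theta/\|\vec\theta\|$ and every unitary matrix is an exponential of an anti-hermitian one. Let $G_A=\exp\begin{pmatrix}\re(A)&-\im(A)\\\im(A)&\re(A)\end{pmatrix}$ be the real representation of $U$; since $A$ is anti-hermitian this generator is antisymmetric, so $G_A$ is orthogonal, and $G_A\vec\mu_{\vec w}=\vec\mu_{U\vec w}$ for every $\vec w$, in particular $G_A\vec\mu_{\|\vec\theta\|\vec e}=\vec\mu_{\vec\theta}$. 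I then take $\eta$ to be the parameter for which $G_\eta=G_AG_{\eta_0}G_A^{-1}$. Using $G_A^{-1}={}^tG_A$, a short computation gives $G_\eta\vec\mu_{\vec\theta}=G_AG_{\eta_0}\vec\mu_{\|\vec\theta\|\vec e}$ and $\Sigma_{\eta,N}=G_A\Sigma_{\eta_0,N}G_A^{-1}$, so the orthogonal factors cancel in $\kappa$ and
\[
	\kappa
	=
	{}^t(G_{\eta_0}\vec\mu_{\|\vec\theta\|\vec e})\,\Sigma_{\eta_0,N}^{-1}\,(G_{\eta_0}\vec\mu_{\|\vec\theta\|\vec e})
	,
\]
i.e. $\kappa$ equals the value computed in the real case for the displacement $\|\vec\theta\|\vec e$. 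Since $\|\vec\theta\|\vec e$ has norm $\|\vec\theta\|$, the first step yields (\ref{eq_norm}).

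The step I expect to be the main obstacle is justifying that $G_AG_{\eta_0}G_A^{-1}$ is again of the form $G_\eta$ for some $\eta\in\Sqz^m$. I would handle this by identifying the generators: writing $M_\eta$ for the exponent in (\ref{eq_g_eta}), the assignment $\eta\mapsto M_\eta$ is a linear bijection from $\Sqz^m$ onto the real symplectic Lie algebra, since $A\in\Ant_\bC^m$ forces $\re(A)$ antisymmetric and $\im(A)$ symmetric while $\re(S),\im(S)$ are symmetric, which is precisely the block structure of a symplectic generator, and the two spaces have equal real dimension $2m^2+m$. Because $G_A$ is orthogonal and symplectic, conjugation preserves this Lie algebra, so $G_AM_{\eta_0}G_A^{-1}=M_{\eta}$ for a unique $\eta\in\Sqz^m$, whence $G_AG_{\eta_0}G_A^{-1}=G_A\exp(M_{\eta_0})G_A^{-1}=\exp(G_AM_{\eta_0}G_A^{-1})=G_\eta$. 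Equivalently, $\hat S_\eta=\hat U_A\hat S_{\eta_0}\hat U_A^*$ is manifestly a squeezing operator, which is the physically transparent way to see the same closure. This is the only nonroutine point; the remainder is the bookkeeping of orthogonal conjugation carried out above.
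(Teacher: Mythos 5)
Your proof is correct and follows essentially the same route as the paper: first compute $\kappa$ for a real displacement with the diagonal squeeze $G=\mathrm{diag}(rI_m,r^{-1}I_m)$ (your $\eta_0$ is exactly the paper's $\zeta$), then transport the result by a passive unitary under which $\kappa$ is invariant. The only difference is in the implementation of the reduction: the paper chooses $A\in\Ant_\bC^m$ \emph{diagonal} so that $e^{-A}\vec\theta$ is real and $\eta=U\zeta U^*\in\Sqz^m$ is immediate by inspection, whereas you rotate $\vec\theta$ to $\|\vec\theta\|\vec e$ by a general unitary and justify closure under conjugation via the (correct) identification of the generators $M_\eta$ with the symplectic Lie algebra $\mathfrak{sp}(2m,\bR)$ --- a slightly heavier but equally valid step.
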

\begin{proof}
Let 
$\zeta\in\Sqz^m$
be 
$
\log(r)
\begin{pmatrix}
O_m&I_m\\
I_m&O_m
\end{pmatrix}$.
Then,
$G_\zeta$
is 
$\begin{pmatrix}
rI_m&O_m\\O_m&r^{-1}I_m
\end{pmatrix}$.
Let 
$\vec u\in\Mat_\bR^{m,1}$
be any real column vector.
Then,
it holds that 
$\vec\mu_{\vec u}
=\vec\mu_{\vec u,O_{2m}}
=\begin{pmatrix}\vec u\\\vec0_m\end{pmatrix}
$,
and that 
\[
	{^t}\vec\mu_{\vec u,\zeta}\Sigma_{\zeta,N}^{-1}\vec\mu_{\vec u,\zeta}
	=
	r^2
	({^t}\vec u)
	\Big(
	r^2\frac{2N+1}{4}I_m+\frac14I_m
	\Big)^{-1}
	\vec u
	=
	\frac{4r^2\|\vec u\|^2}{r^2(2N+1)+1}
	.
\]
Let 
$A\in\Ant_\bC^m$
be a diagonal matrix 
satisfying 
$e^{-A}\vec\theta
\in\Mat_\bR^{m,1}$.
Let 
$\vec s
=e^{-A}\vec\theta$.
Let 
$U
=
\begin{pmatrix}
e^A&O_m\\O_m&e^{-A}
\end{pmatrix}$.
Let 
$\eta
\in\Sqz^m$
be 
$U\zeta U^*$.
There exists 
an orthogonal matrix 
$R\in\Mat_\bR^{2m}$
such that 
$G_\eta=RG_\zeta({^t}R)$
and 
${^t}R\vec\mu_{\vec\theta}
=\begin{pmatrix}\vec s\\\vec0_m\end{pmatrix}$.
Hence,
it holds that 
\begin{align*}
	\kappa
	=&
	{^t}(G_\zeta{^t}R\vec\mu_{\vec\theta})
	\Big(
	\frac{2N+1}{4}G_\zeta({^t}G_\zeta)+\frac14I_{2m}
	\Big)^{-1}
	(G_\zeta{^t}R\vec\mu_{\vec\theta})
	\\
	=&
	r^2({^t}\vec s)
	\Big(
	r^2\frac{2N+1}{4}I_m+\frac14I_m
	\Big)^{-1}
	\vec s
	.
\end{align*}
Hence,
we obtain 
(\ref{eq_norm}).
\end{proof}

\donotdisplay
{
Let $X$ and $Y$ be random variables 
or random vectors. 
If $X$ and $Y$
obey a common probability distribution,
then 
we write 
$X\stackrel p=Y$.
}

Let 
$\lambda$
be 
$n\kappa$.
Let 
$p_\lambda(f)$
be the probability density function of 
$F_{\mu,\nu;\lambda}$
defined in 
(\ref{eq_pdf_non_central_f}).
Let 
$c$ 
be the critical point of level $\alpha$,
that is,
the solution to 
$\int_c^\infty p_0(f)df=\alpha$.

\begin{lemma}\label{lem_compari_hh_0}
For any $m\ge1$,
for any $n\ge2m+1$,
for any $\vec\theta\in\Mat_\bC^{m,1}$,
for any $\eta\in\Sqz^m$,
for any $N\ge0$
and for any 
$\alpha\in(0,1)$,
there exists 
$\delta>0$
such that 
$\beta_{\hat\rho_{\vec\theta,\eta,N}^{\ox n}}[T^\HH_\alpha]
=1-\alpha-(1-\alpha-\delta)
\lambda/2
+o(\lambda)$
holds 
as $\lambda$
goes to zero.
\end{lemma}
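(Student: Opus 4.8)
The plan is to regard $\beta_{\hat\rho_{\vec\theta,\eta,N}^{\ox n}}[T^\HH_\alpha]$ as a function of the single scalar $\lambda=n\kappa$ and to expand it to first order about $\lambda=0$. Since $T^\HH_\alpha$ accepts $H_0$ exactly when $F_\HH\le c$, and since $F_\HH$ obeys $F_{\mu,\nu;\lambda}$, the type II error probability is
\[
	\beta(\lambda):=\beta_{\hat\rho_{\vec\theta,\eta,N}^{\ox n}}[T^\HH_\alpha]=\int_0^c p_\lambda(f)\,df,
\]
with $p_\lambda$ the non-central $F$ density of (\ref{eq_pdf_non_central_f}). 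For fixed $m$, $n$ and $\alpha$ the critical point $c$ is fixed, and all dependence on $\vec\theta$, $\eta$ and $N$ enters only through $\lambda$, so the expansion is legitimate however $\lambda$ is driven to zero.

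First I would read off the Poisson-mixture structure of (\ref{eq_pdf_non_central_f}). Writing $w_k(\lambda)=e^{-\lambda/2}(\lambda/2)^k/k!$ and
\[
	g_k(f)=\frac{1}{B(k+\mu/2,\nu/2)}\Big(\frac{\mu f}{\mu f+\nu}\Big)^{k+\mu/2}\Big(\frac{\nu}{\mu f+\nu}\Big)^{\nu/2}\frac1f,
\]
one has $p_\lambda=\sum_{k=0}^\infty w_k(\lambda)g_k$. The substitution $u=\mu f/(\mu f+\nu)$ turns $g_k(f)\,df$ into the $\mathrm{Beta}(k+\mu/2,\nu/2)$ density, so each $g_k$ is a probability density on $(0,\infty)$, strictly positive there, with $\int_0^\infty g_k=1$. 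Because the $w_k(\lambda)$ and $g_k$ are nonnegative, Tonelli's theorem gives
\[
	\beta(\lambda)=\sum_{k=0}^\infty w_k(\lambda)G_k,\qquad G_k:=\int_0^c g_k(f)\,df\in[0,1].
\]

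Next I would isolate the first two terms. Since $0\le G_k\le1$, the tail is bounded uniformly in $\lambda$ by $\sum_{k\ge2}w_k(\lambda)=1-e^{-\lambda/2}(1+\lambda/2)=O(\lambda^2)$, and the elementary expansions $w_0(\lambda)=1-\lambda/2+O(\lambda^2)$, $w_1(\lambda)=\lambda/2+O(\lambda^2)$ yield
\[
	\beta(\lambda)=G_0-\frac{\lambda}{2}(G_0-G_1)+O(\lambda^2).
\]
By the definition of the critical point via $\int_c^\infty p_0\,df=\alpha$ we have $G_0=\int_0^c p_0\,df=1-\alpha$, whence
\[
	\beta(\lambda)=(1-\alpha)-\big(1-\alpha-G_1\big)\frac{\lambda}{2}+O(\lambda^2).
\]
Setting $\delta:=G_1$ gives exactly the asserted form, since $O(\lambda^2)=o(\lambda)$; and $\delta>0$ because $g_1>0$ on $(0,\infty)$ while $c>0$ (as $\alpha\in(0,1)$ forces $c$ finite and positive), so $\delta=\int_0^c g_1>0$.

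The only point requiring care is the uniform-in-$\lambda$ control of the tail $\sum_{k\ge2}w_k(\lambda)G_k$, which is where a naive term-by-term differentiation could go wrong. Here it is painless precisely because the $G_k$ are probabilities bounded by $1$, reducing the tail to the Poisson identity $\sum_{k\ge2}w_k(\lambda)=O(\lambda^2)$; everything else is the standard first-order expansion of the Poisson weights.
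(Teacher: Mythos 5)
Your proposal is correct and is essentially the paper's proof: the paper differentiates $p_\lambda$ at $\lambda=0$, obtaining $q(f)=\big(\tfrac{\mu+\nu}{\mu f+\nu}f-1\big)\tfrac{p_0(f)}{2}$, which is exactly $\tfrac12\big(g_1(f)-g_0(f)\big)$ in your notation, and its $\delta=(\mu+\nu)\int_0^c(\mu f+\nu)^{-1}fp_0(f)\,df$ coincides with your $G_1$. Your Poisson-mixture/Tonelli decomposition and the $O(\lambda^2)$ tail bound merely make rigorous the differentiation under the integral that the paper performs formally, and you additionally verify $\delta>0$ explicitly, which the paper leaves implicit.
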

\begin{proof}
Let 
$q(f)
=\partial p_\lambda(f)/\partial\lambda\big|_{\lambda=0}$.
It holds that 
\[
	q(f)
	=
	-\frac12
	p_0(f)
	+\frac12
	\frac{\mu+\nu}{\mu}
	\frac{\mu f}{\mu f+\nu}
	p_0(f)
	=
	\Big(
	\frac{\mu+\nu}{\mu f+\nu}
	f
	-1
	\Big)
	\frac{p_0(f)}{2}
	.
\]
Let 
$\delta
=(\mu+\nu)\int_0^c
(\mu f+\nu)^{-1}
f
p_0(f)df$.
Then,
it holds that 
$\int_0^cq(f)df
=(\delta-1+\alpha)/2$.
Hence,
we have 
$\beta_{\hat\rho_{\vec\theta,\eta,N}^{\ox n}}[T^\HH_\alpha]
=
1-\alpha
+(\delta-1+\alpha)\lambda/2+o(\lambda)$
as $\lambda\to0$.
\end{proof}

\begin{proof}[{\bf Proof of Theorem \ref{th_not_si}}]
(i)
By Lemma 
\ref{lem_norm},
$\lambda$ depends on 
$\eta$.
By Lemma \ref{lem_compari_hh_0}, 
$\beta_{\hat\rho_{\vec\theta,\eta,N}^{\ox n}}[T^\HH_\alpha]
=\int_0^c p_\lambda(f)df$
depends on $\eta$.
\\
(ii)
By Eq. (\ref{eq_pdf_non_central_f}),
$\beta_{\hat\rho_{\vec\theta,\eta,N}^{\ox n}}[T^\HH_\alpha]
=\int_0^cp_\lambda(f)df$
is greater than 
$e^{-\lambda/2}(1-\alpha)$.
By 
Lemma \ref{lem_norm},
for any $\varepsilon\in\bR$
with $0<\varepsilon<1-\alpha$,
there exists 
$\eta\in\Sqz^m$
such that 
${^t}\vec\mu_{\vec\theta,\eta}\Sigma_{\eta,N}^{-1}\vec\mu_{\vec\theta,\eta}
<2\log[(1-\alpha)/(1-\alpha-\varepsilon)]$,
that is,
$\beta_{\hat\rho_{\vec\theta,\eta,N}^{\ox n}}[T^\HH_\alpha]
>1-\alpha-\varepsilon$.
Hence,
we have 
$\sup_{\eta\in\Sqz^m}
\beta_{\hat\rho_{\vec\theta,\eta,N}^{\ox n}}[T^\HH_\alpha]
=1-\alpha$.
\end{proof}

\subsection{Proof of Theorem \ref{th_comparison}(Comparison of $\beta_{\hat\rho_{\theta,0}^{\ox3}}[T^\inv_\alpha]$ with $\beta_{\hat\rho_{\theta,0}^{\ox3}}[T^\HH_\alpha]$)}\label{sec_proof_comparison}

\donotdisplay
{
For a density operator 
$\hat\rho\in\cS(\cH^{\ox mn})$
and for a test $T$
of level $\alpha$,
the power function 
$\gamma_{\hat\rho}[T]$
is defined by 
$1-\alpha-
\beta_{\hat\rho}[T]$.
(See \cite{lehman_romano}.)
}

We compare orders 
of type II error probabilities 
for 
$\|\vec\theta\|\fallingdotseq0$
and for 
$\|\vec\theta\|\gg0$.

\begin{lemma}\label{lem_compari_inv_0}
For any $m\ge1$,
for any $n\ge2$
and for any $\alpha\in[0,1)$,
it holds that 
$\beta_{\hat\rho_{\vec\theta,0}^{\ox n}}[T^\inv_\alpha]
=1-\alpha-(1-\alpha)n\|\vec\theta\|^2
+o(\|\vec\theta\|^2)$
as $\|\vec\theta\|^2$ goes to zero.
\end{lemma}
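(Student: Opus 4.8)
The plan is to read off the closed-form expression for the type II error probability from Theorem~\ref{th_type2} and expand it to first order in the small quantity $\|\vec\theta\|^2$. Since the right-hand side of Theorem~\ref{th_type2} does not depend on the nuisance parameter $\eta$, it coincides with $\beta_{\hat\rho_{\vec\theta,0}^{\ox n}}[T^\inv_\alpha]$. Writing $x=n\|\vec\theta\|^2$, we have
$$
	\beta_{\hat\rho_{\vec\theta,0}^{\ox n}}[T^\inv_\alpha]
	=
	(1-\alpha)\frac{e^{-x}}{B\left(\frac{n-1}{2},\frac12\right)}
	\int_0^\pi e^{x\cos\varphi}(\sin\varphi)^{n-2}\,d\varphi,
$$
so the entire task reduces to the asymptotics of this elementary integral as $x\to0$.

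First I would expand the integrand as $e^{x\cos\varphi}=1+x\cos\varphi+O(x^2)$, the remainder being uniform in $\varphi\in[0,\pi]$, which legitimizes term-by-term integration on the compact interval. The zeroth-order term gives $\int_0^\pi(\sin\varphi)^{n-2}\,d\varphi=B\left(\frac{n-1}{2},\frac12\right)$, the beta integral already computed in Lemma~\ref{lem_type_ii_error_prob}. The crucial observation is that the first-order term vanishes: since $(\sin\varphi)^{n-1}/(n-1)$ is an antiderivative of $\cos\varphi\,(\sin\varphi)^{n-2}$ and $\sin0=\sin\pi=0$, we obtain $\int_0^\pi\cos\varphi\,(\sin\varphi)^{n-2}\,d\varphi=0$. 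Hence the integral equals $B\left(\frac{n-1}{2},\frac12\right)(1+o(x))$, and the normalized ratio $B^{-1}\int_0^\pi e^{x\cos\varphi}(\sin\varphi)^{n-2}\,d\varphi$ is $1+o(x)$.

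Finally I would combine this with the prefactor $e^{-x}=1-x+o(x)$ to get $\beta_{\hat\rho_{\vec\theta,0}^{\ox n}}[T^\inv_\alpha]=(1-\alpha)(1-x)+o(x)$, which upon substituting $x=n\|\vec\theta\|^2$ is exactly the claimed expansion $1-\alpha-(1-\alpha)n\|\vec\theta\|^2+o(\|\vec\theta\|^2)$. The only contribution of order $\|\vec\theta\|^2$ therefore comes entirely from the exponential prefactor, the integral contributing nothing at first order. There is essentially no serious obstacle: the single point needing care is the uniformity of the Taylor remainder in $\varphi$, which justifies exchanging the expansion with the integral, and this is immediate because $\cos\varphi$ is bounded on the compact domain $[0,\pi]$.
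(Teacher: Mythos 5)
Your proposal is correct and follows essentially the same route as the paper: the paper also starts from Theorem \ref{th_type2}, exploits the vanishing of $\int_0^\pi\cos\varphi\,(\sin\varphi)^{n-2}\,d\varphi$ (it computes $df/dr(0)=-nf(0)$ for $f(r)=e^{-nr}\int_0^\pi e^{nr\cos\varphi}(\sin\varphi)^{n-2}d\varphi$, which is just your first-order Taylor expansion packaged as a derivative), and attributes the whole linear term to the prefactor $e^{-n\|\vec\theta\|^2}$. Your explicit remark on the uniformity of the Taylor remainder is a small addition the paper leaves implicit, but the substance is identical.
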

\begin{proof}
Let 
$f(r)
=e^{-n r}\int_0^\pi
e^{n r\cos\varphi}
(\sin\varphi)^{n-2}
d\varphi$.
It holds that 
\[
	\frac{df}{dr}(0)
	=
	-n f(0)
	+
	n
	\int_0^\pi
	\cos\varphi
	(\sin\varphi)^{n-2}
	d\varphi
	=-nf(0)
	,
\]
and that 
$f(0)=B((n-1)/2,1/2)$.
By 
Theorem \ref{th_type2},
it holds that 
$\beta_{\hat\rho_{\vec\theta,0}^{\ox n}}[T^\inv_\alpha]
=(1-\alpha)f(\|\vec\theta\|^2)/B((n-1)/2,1/2)$.
Hence,
we have 
$\beta_{\hat\rho_{\vec\theta,0}^{\ox n}}[T^\inv_\alpha]
-1+\alpha
=-(1-\alpha)n\|\vec\theta\|^2
+o(\|\vec\theta\|^2)$
as 
$\|\vec\theta\|^2\to0$.
\end{proof}

\begin{lemma}\label{lem_compari_inv_inf}
If $m=1$,
$n=3$
and 
$N=0$,
then, 
for any $\alpha\in[0,1)$,
it holds that 
$\beta_{\hat\rho_{\theta,0}^{\ox3}}[T^\inv_\alpha]
=O(|\theta|^{-2})$
as $|\theta|$ goes to infinity.
\end{lemma}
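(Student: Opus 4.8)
The plan is to specialize the closed-form expression for the type II error probability obtained in Theorem \ref{th_type2} to the case $m=1$, $n=3$, $N=0$, evaluate the resulting one-dimensional integral explicitly, and read off the decay rate as $|\theta|\to\infty$.

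First I would substitute $m=1$, $n=3$ and $N=0$ into the formula of Theorem \ref{th_type2}. Since $n-2=1$, the factor $(\sin\varphi)^{n-2}$ becomes $\sin\varphi$, and $\|\vec\theta\|=|\theta|$. Writing $r=|\theta|^2$ for brevity, the error probability reads
\[
	\beta_{\hat\rho_{\theta,0}^{\ox3}}[T^\inv_\alpha]
	=
	(1-\alpha)
	\frac{e^{-3r}}{B(1,1/2)}
	\int_0^\pi
	e^{3r\cos\varphi}\sin\varphi\,d\varphi
	.
\]
Next I would record the beta value $B(1,1/2)=2$ and compute the integral by the substitution $u=\cos\varphi$, which turns it into $\int_{-1}^1 e^{3ru}\,du=(e^{3r}-e^{-3r})/(3r)$. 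Substituting back and cancelling the $e^{-3r}$ factor against the $e^{3r}$ produced by the integral gives the clean closed form
\[
	\beta_{\hat\rho_{\theta,0}^{\ox3}}[T^\inv_\alpha]
	=
	(1-\alpha)\frac{1-e^{-6r}}{6r}
	=
	(1-\alpha)\frac{1-e^{-6|\theta|^2}}{6|\theta|^2}
	.
\]

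Finally I would let $|\theta|\to\infty$. Since $e^{-6|\theta|^2}\to0$, the numerator tends to $1$, so the whole expression is asymptotic to $(1-\alpha)/(6|\theta|^2)$; in particular it is $O(|\theta|^{-2})$, which is the claim. The computation is entirely elementary once Theorem \ref{th_type2} is available, and the only mild point to verify is that the integral collapses to a single exponential difference after the substitution $u=\cos\varphi$; thus I do not expect any genuine obstacle, and the lemma is in effect a worked evaluation of the general type II error formula at the special parameter values $m=1$, $n=3$, $N=0$. (I note in passing that the same computation shows the stronger asymptotic $\beta_{\hat\rho_{\theta,0}^{\ox3}}[T^\inv_\alpha]\sim(1-\alpha)/(6|\theta|^2)$, which will be convenient when comparing with $\beta_{\hat\rho_{\theta,0}^{\ox3}}[T^\HH_\alpha]$ in the proof of Theorem \ref{th_comparison}.)
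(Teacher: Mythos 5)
Your proposal is correct and follows the paper's own proof essentially verbatim: both specialize Theorem \ref{th_type2} to $m=1$, $n=3$, $N=0$ and evaluate $\int_0^\pi e^{r\cos\varphi}\sin\varphi\,d\varphi$ in closed form (the paper via the antiderivative $-e^{r\cos\varphi}/r$, you via the equivalent substitution $u=\cos\varphi$), arriving at the same expression $(1-\alpha)(1-e^{-6|\theta|^2})/(6|\theta|^2)=O(|\theta|^{-2})$. Your explicit note of $B(1,1/2)=2$ and the sharper asymptotic $\sim(1-\alpha)/(6|\theta|^2)$ are fine refinements but not a different argument.
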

\begin{proof}
Let 
$r=3|\theta|^2$,
let 
$s
=e^{-r}/B(1,1/2)$,
and let 
$t
=\int_0^\pi
e^{r\cos\varphi}\sin\varphi d\varphi$.
By Theorem 
\ref{th_type2},
it holds that 
$
	\beta_{\hat\rho_{\theta,0}^{\ox3}}[T^\inv_\alpha]
	=
	(1-\alpha)
	st
$.
We can calculate 
$t$ as 
\[
	t
	=
	\left.
	-\frac1r
	e^{r\cos\varphi}
	\right|_{\varphi=0}^\pi
	=
	\frac
	{e^{r}-e^{-r}}
	r
	.
\]
Hence,
we have 
$\beta_{\hat\rho_{\theta,0}^{\ox3}}[T^\inv_\alpha]
=(1-\alpha)(1-e^{-2r})/[rB(1,1/2)]
=O(|\theta|^{-2})$
as 
$\|\vec\theta\|^2\to\infty$.
\end{proof}

\donotdisplay
{
Hereafter,
consider the case of 
$\eta=O_{2m}$
and 
$N=0$.
Then,
$
	\lambda
	=
	n({^t}\vec\mu_{\vec\theta,\eta})\Sigma_{\eta,N}^{-1}\vec\mu_{\vec\theta,\eta}
$
is equal to 
$2n\|\vec\theta\|^2$.
Assume that 
$n\ge2m+1$.
Let 
$\mu=2m$
and let 
$\nu=n-\mu$.
Let 
$f$ be a random variable 
obeying 
the non-central $F$ distribution 
with $\mu$ and $\nu$ degrees of freedom 
and with 
non-centrality $\lambda$.
}

Let 
$
	\lambda
	=
	n({^t}\vec\mu_{\vec\theta,\eta})\Sigma_{\eta,N}^{-1}\vec\mu_{\vec\theta,\eta}
$. 
Let 
$p_\lambda(f)$
be the probability density function of 
$f$
given in (\ref{eq_pdf_non_central_f}).

\begin{lemma}\label{lem_compari_hh_inf}
For any $m\ge1$,
for any $n\ge2m+1$,
for any $\vec\theta\in\Mat_\bC^{m,1}$,
for any $\eta\in\Sqz^m$,
for any $N\ge0$
and for any $c>0$,
there exists $t>0$ such that 
$\int_0^cp_\lambda(f)df
=o(e^{-t\lambda})$
as $\lambda$ goes to infinity.
\end{lemma}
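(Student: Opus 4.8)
The plan is to recognize $\int_0^c p_\lambda(f)\,df$ as the lower-tail probability $\Pr\{F\le c\}$ of a non-central $F$ variable $F$ with density (\ref{eq_pdf_non_central_f}), and to bound it by a Chernoff (exponential Markov) argument. First I would record the probabilistic representation behind (\ref{eq_pdf_non_central_f}): the Poisson weights $e^{-\lambda/2}(\lambda/2)^k/k!$ exhibit $p_\lambda$ as the density of $F=\frac{\nu}{\mu}\,\frac{U}{V}$, where $U$ and $V$ are independent, $V$ is a central $\chi^2$ with $\nu$ degrees of freedom, and $U$ is a non-central $\chi^2$ with $\mu$ degrees of freedom and non-centrality $\lambda$ (equivalently $U$ is a $\chi^2_{\mu+2K}$ variable with $K$ distributed as $\Poi(\lambda/2)$). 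This identity can be verified directly from (\ref{eq_pdf_non_central_f}) through the monotone substitution $w=\mu f/(\mu f+\nu)$, which turns each summand into a $\mathrm{Beta}(k+\mu/2,\nu/2)$ density, or simply cited from \cite{anderson}.

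Under this representation the event $\{F\le c\}$ is $\{U\le \tfrac{c\mu}{\nu}V\}$. For any $s>0$, exponential Markov gives
\begin{align*}
	\Pr\{F\le c\}
	=\Pr\{U-\tfrac{c\mu}{\nu}V\le 0\}
	\le E[e^{-sU}]\,E\big[e^{s\frac{c\mu}{\nu}V}\big].
\end{align*}
The non-central $\chi^2$ transform yields $E[e^{-sU}]=(1+2s)^{-\mu/2}\exp\!\big(-\tfrac{s}{1+2s}\lambda\big)$, which already carries the desired exponential decay in $\lambda$ at rate $s/(1+2s)>0$. The central $\chi^2$ factor $E[e^{rV}]=(1-2r)^{-\nu/2}$ is finite only for $r<1/2$, so I would fix $s$ small, for instance $s=\nu/(4c\mu)$, which makes $r=s\,c\mu/\nu=1/4$ and $E\big[e^{s\frac{c\mu}{\nu}V}\big]=2^{\nu/2}$ a constant independent of $\lambda$.

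Combining the two factors gives $\Pr\{F\le c\}\le 2^{\nu/2}(1+2s)^{-\mu/2}\,e^{-t_0\lambda}$ with $t_0=s/(1+2s)>0$, so that $\int_0^c p_\lambda(f)\,df\le C\,e^{-t_0\lambda}=o(e^{-t\lambda})$ for every $t\in(0,t_0)$, since the ratio $C\,e^{-(t_0-t)\lambda}$ tends to $0$ as $\lambda\to\infty$; this proves the claim. The only point requiring care, and the main obstacle, is the finiteness window of the $\chi^2_\nu$ transform: the exponent $r$ must stay below $1/2$, so one cannot let $s$ grow without bound. But the lemma only asks for the existence of \emph{some} positive $t$, so a single admissible choice such as $s=\nu/(4c\mu)$ suffices and no optimization over $s$ is needed. (An alternative, fully self-contained route avoids the representation altogether: the substitution $w=\mu f/(\mu f+\nu)$ rewrites each term as an incomplete Beta integral $\int_0^{w_0}$ with $w_0=\mu c/(\mu c+\nu)<1$, and bounding the integrand by its value at the right endpoint plus the standard $\Gamma$-ratio asymptotics for $1/B(k+\mu/2,\nu/2)$ produces $\int_0^c p_\lambda\le \mathrm{poly}(\lambda)\,e^{-\lambda(1-w_0)/2}$, again $o(e^{-t\lambda})$ for $t<(1-w_0)/2=\nu/(2(\mu c+\nu))$.)
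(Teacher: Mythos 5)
Your proposal is correct, and its main route is genuinely different from the paper's. The paper never leaves the series (\ref{eq_pdf_non_central_f}): after the same substitution $x=\mu f/(\mu f+\nu)$ that you use only for verification, it writes $\int_0^c p_\lambda(f)\,df=\sum_{k}q_kR_k$ with Poisson weights $q_k=e^{-\lambda/2}(\lambda/2)^k/k!$ and incomplete Beta integrals $R_k=\int_0^{b_0}r_k(x)\,dx$, proves the ratio comparison $r_{k+1}(x)/r_k(x)<b_1<1$ for all $k\ge k_0$ and $x\in(0,b_0)$ (hence $R_{k+1}<b_1R_k$), and then bounds the tail $\sum_{k\ge k_0}q_kR_k$ by $b_1^{-k_0}R_{k_0}e^{\lambda(b_1-1)/2}$ and the head $\sum_{k<k_0}q_kR_k$ by $e^{(k_0-1)s}e^{\lambda(e^{-s}-1)/2}$ via the Poisson generating function, finally taking $t=\tfrac12\min\{(1-b_1)/2,(1-e^{-s})/2\}$. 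You instead pass through the representation $F=(\nu/\mu)U/V$ with $U$ non-central $\chi^2_\mu(\lambda)$ and $V$ central $\chi^2_\nu$ independent, and apply a Chernoff bound; your moment generating function identities are correct, including $E[e^{-sU}]=(1+2s)^{-\mu/2}\exp(-s\lambda/(1+2s))$, your choice $s=\nu/(4c\mu)$ correctly keeps $E[e^{s(c\mu/\nu)V}]=2^{\nu/2}$ inside the finiteness window $r<1/2$, and your closing observation that $Ce^{-t_0\lambda}=o(e^{-t\lambda})$ for any $t\in(0,t_0)$ matches the lemma's existential formulation. What each approach buys: the paper's argument is self-contained at the level of the density formula and needs nothing beyond elementary series manipulation, at the cost of a two-part head/tail estimate with an auxiliary index $k_0$; yours is shorter and delivers an explicit constant and rate $t_0=s/(1+2s)$ in a single line, at the cost of invoking (or re-deriving, as you do via the Beta substitution) the $\chi^2$-ratio representation and the two MGF facts. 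Your parenthetical fallback is in substance the paper's route, with a termwise endpoint bound plus Gamma-ratio asymptotics replacing the paper's geometric ratio comparison; its rate $(1-w_0)/2=\nu/(2(\mu c+\nu))$ is essentially the best either elementary bound yields, and coincides with the supremum of your Chernoff rate over the admissible window, which is consistent.
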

\begin{proof}
Let 
$A_0
=\int_0^cp_\lambda(f)df$.
Replace 
$\mu f/(\mu f+\nu)$
by $x$.
The Jacobian of this replacement is 
$df/dx
=(dx/df)^{-1}
=x^{-1}(1-x)^{-1}f$.
Let 
$b_0=\mu c/(\nu+\mu c)$.
Let 
\[
	q_k
	=
	\frac
	{e^{-\lambda/2}(\lambda/2)^k}{k!}
	\mbox{ and }
	r_k(x)
	=
	\frac
	{x^{k+\mu/2-1}
	(1-x)^{\nu/2-1}}
	{B(k+\mu/2,\nu/2)}
	.
\]
It holds that 
$A_0
=
\int_0^{b_0}
\sum_{k=0}^\infty
q_kr_k(x)dx$.
Choose 
$b_1\in(b_0,1)$
arbitrarily.
There exists 
$k_0\in\bN$
such that, 
for any $k\ge k_0$
and for any 
$x\in(0,b_0)$,
\begin{align}
	\frac
	{r_{k+1}(x)}
	{r_k(x)}
	=
	\frac
	{k+(\mu+\nu)/2}
	{k+\mu/2}
	x
	<
	\frac
	{k+(\mu+\nu)/2}
	{k+\mu/2}
	b_0
	<
	b_1
	\label{eq_b_1}
\end{align}
holds.
Let 
$R_k
=\int_0^{b_0}r_k(x)dx$.
By (\ref{eq_b_1}),
for any $k\ge k_0$,
it holds that 
$
	R_{k+1}<b_1R_k
$.
Let 
$A_1
=\sum_{k=0}^{k_0-1}
q_kR_k$,
and let 
$A_2
=A_0
-A_1
=\sum_{k=k_0}^\infty
q_kR_k$.
It holds that 
\begin{align}
	A_2
	<
	\sum_{k=k_0}^\infty
	q_kR_{k_0}b_1^{k-k_0}
	<
	\sum_{k=0}^\infty
	q_kR_{k_0}b_1^{k-k_0}
	=
	b_1^{-k_0}
	R_{k_0}
	e^{\lambda(b_1-1)/2}
	.
	\label{eq_a_2}
\end{align}
For any $s>0$,
it holds that 
\begin{align}
	A_1
	<
	\sum_{k=0}^{k_0-1}
	q_k
	e^{(k_0-1-k)s}
	<
	\sum_{k=0}^\infty
	q_k
	e^{(k_0-1-k)s}
	=
	e^{(k_0-1)s}
	e^{\lambda(e^{-s}-1)/2}
	.
	\label{eq_a_1}
\end{align}
Let 
$t
=
2^{-1}
\min\{
(1-b_1)/2,
(1-e^{-s})/2
\}$.
By (\ref{eq_a_2})
and (\ref{eq_a_1}),
we have 
$A_0
=o(e^{-t\lambda})$
as 
$\lambda\to\infty$.
\end{proof}

\begin{proof}[{\bf Proof of Theorem \ref{th_comparison}}]
(i)
By Lemmas 
\ref{lem_compari_inv_0}
and 
\ref{lem_compari_hh_0},
there exists 
$\delta>0$
such that 
$
\beta_{\hat\rho_{\theta,0}^{\ox3}}[T^\HH_\alpha]
-
\beta_{\hat\rho_{\theta,0}^{\ox3}}[T^\inv_\alpha]
=3\delta|\theta|^2
+o(|\theta|^2)$
holds 
as $|\theta|\to0$.
\\
(ii)
By Lemmas 
\ref{lem_compari_inv_inf}
and 
\ref{lem_compari_hh_inf},
we obtain the statement.
\end{proof}

\section*{Acknowledgements}

The author thanks 
Prof.
Keiji Matsumoto of NII 
and 
Prof.
Fuyuhiko Tanaka of Osaka University 
for comments and suggestions.
\donotdisplay
{
In particular,
Lemma \ref{lem:2_realization}
was suggested by 
Matsumoto,
and 
Lemma \ref{lem_tanaka}\label{lem_f_k}
was suggested by 
Tanaka.
}

\end{document}